\documentclass[11pt]{article}

\usepackage[T1]{fontenc}
\usepackage[utf8]{inputenc}

\usepackage[tt=false,type1=true]{libertine}
\usepackage[varqu]{zi4}
\usepackage{amsthm,mathtools}
\usepackage[libertine]{newtxmath}
\usepackage{microtype}

\usepackage[
  a4paper,
  margin=1in,
  headheight=14pt,
  headsep=24pt,
  footskip=30pt
]{geometry}

\usepackage{fancyhdr}
\usepackage{amsmath}
\usepackage{booktabs}
\usepackage{graphicx}
\usepackage{url}
\usepackage[hypertexnames=false]{hyperref}
\usepackage{xcolor}
\usepackage{lscape}
\usepackage{float}
\usepackage{makecell}
\usepackage{seqsplit}
\usepackage{array}

\newcolumntype{L}[1]{>{\raggedright\arraybackslash}p{#1}}

\usepackage{tabularx}
\usepackage{adjustbox}
\usepackage[section]{placeins}
\usepackage{enumitem}
\usepackage{pgfplots}
\usepackage{siunitx}
\usepackage[nameinlink,noabbrev]{cleveref}

\pgfplotsset{compat=1.18}
\newtheorem{theorem}{Theorem}[section]
\newtheorem{corollary}[theorem]{Corollary}
\newtheorem{lemma}[theorem]{Lemma}
\newtheorem{proposition}[theorem]{Proposition}
\newtheorem{definition}[theorem]{Definition}
\newtheorem{remark}[theorem]{Remark}
\newtheorem{observation}[theorem]{Observation}

\newcommand{\KeyGen}{\mathsf{KeyGen}}
\newcommand{\Encaps}{\mathsf{Encaps}}
\newcommand{\Decaps}{\mathsf{Decaps}}
\newcommand{\EncC}{\mathsf{EncC}}
\newcommand{\Dec}{\mathsf{Dec}}
\newcommand{\Pass}{\mathsf{Pass}}
\newcommand{\Det}{\mathsf{Det}}

\newcommand{\AvgHinf}{\widetilde{H}_{\infty}}

\newcommand{\Trace}{\mathsf{Trace}}

\newcommand{\VSFO}{\mathsf{VSFO}}

\newcommand{\Hit}{\mathsf{Hit}}

\newcommand{\Inst}{\mathsf{Inst}}
\newcommand{\HRef}{\mathsf{HRef}}
\newcommand{\HSelf}{\mathsf{HSelf}}
\newcommand{\LcUP}{\mathsf{L\mbox{-}cUP}}
\newcommand{\Adv}{\mathsf{Adv}}
\newcommand{\Corr}{\mathsf{Corr}}
\newcommand{\Acc}{\mathsf{Acc}}
\newcommand{\Alias}{\mathsf{Alias}}
\newcommand{\Can}{\mathsf{Can}}
\newcommand{\Good}{\mathsf{Good}}
\newcommand{\Bad}{\mathsf{Bad}}
\newcommand{\TailEnt}{\ensuremath{\mathsf{HQC\mbox{-}TailEnt}}}
\newcommand{\Hcd}{H_{\mathsf{cd}}}
\newcommand{\Obs}{\mathsf{Obs}}
\newcommand{\Tgt}{\mathsf{Tgt}}
\newcommand{\Cone}{\mathsf{Cone}}

\newcommand{\Exec}{\mathsf{Exec}}
\newcommand{\Cert}{\mathsf{Cert}}
\newcommand{\TV}{\mathsf{TV}}
\newcommand{\ctsel}{\mathsf{ct\_select}}
\newcommand{\supp}{\mathsf{supp}}
\newcommand{\bits}{\{0,1\}}

\newcommand{\Wit}{W}

\newcommand{\Dhonref}{D_{\mathrm{honest\_ref}}}
\newcommand{\Dhonself}{D_{\mathrm{honest\_self}}}
\newcommand{\Dctmal}{D_{\mathrm{ctmal}}}
\newcommand{\Ddiff}{D_{\mathrm{diff}}}
\newcommand{\Dkeyfmt}{D_{\mathrm{keyfmt}}}
\newcommand{\Thref}{\mathcal T_{\mathrm{href}}}
\newcommand{\Thself}{\mathcal T_{\mathrm{hself}}}
\newcommand{\Tct}{\mathcal T_{\mathrm{ct}}}
\newcommand{\Tdiff}{\mathcal T_{\mathrm{diff}}}
\newcommand{\Tkey}{\mathcal T_{\mathrm{key}}}
\newcommand{\mlkem}{\textsf{ML-KEM}}
\newcommand{\hqc}{\textsf{HQC}}
\newcommand{\cdone}{\textsf{CD1}}
\newcommand{\cdstar}{\textsf{CD*}}

\newcommand{\mlkemnative}{\textsf{mlkem-native}}
\newcommand{\algobox}[2]{%
	\fbox{%
		\begin{minipage}[t]{#1}
			\vspace{0.35em}
			#2
			\vspace{0.35em}
\end{minipage}}}
\newcommand{\algoline}[2]{\texttt{#1} & #2\\}

\usepackage[backend=biber,style=numeric]{biblatex}
\makeatletter

\newcommand{\@toptitlebar}{%
  \hrule height 2pt
  \vskip 0.25in
  \vskip -\parskip
}

\newcommand{\@bottomtitlebar}{%
  \vskip 0.29in
  \vskip -\parskip
  \hrule height 2pt
  \vskip 0.09in
}

\renewcommand{\maketitle}{%
  \par
  \begingroup
    \thispagestyle{plain}
    \begin{center}
      \vspace*{0.1in}
      \@toptitlebar
      {\LARGE\scshape \@title\par}
      \@bottomtitlebar
      \vspace{0.15in}
      {\normalsize\bfseries \@author\par}
      \vspace{0.2in}
    \end{center}
  \endgroup
}

\renewcommand{\section}{%
  \@startsection{section}{1}{\z@}%
                {-2.0ex \@plus -0.5ex \@minus -0.2ex}%
                {1.2ex \@plus 0.2ex}%
                {\large\bfseries\raggedright}}

\renewcommand{\subsection}{%
  \@startsection{subsection}{2}{\z@}%
                {-1.8ex \@plus -0.5ex \@minus -0.2ex}%
                {0.8ex \@plus 0.2ex}%
                {\normalsize\bfseries\raggedright}}

\renewcommand{\subsubsection}{%
  \@startsection{subsubsection}{3}{\z@}%
                {-1.5ex \@plus -0.5ex \@minus -0.2ex}%
                {0.5ex \@plus 0.2ex}%
                {\normalsize\bfseries\raggedright}}

\makeatother

\title{What Can Verifiable Decapsulation Tests Certify? \\
	Pass Bounds and Fault-Recognition Limits \\ for FO-Based KEMs}

\author{%
José Luis Delgado \\
\small Universitat Oberta de Catalunya \\
\texttt{jdelgado13@uoc.edu}
}

\begin{document}

\maketitle

\begin{abstract}
	Black-box tests for Fujisaki--Okamoto decapsulation observe the sampled
	execution seen by the harness, whereas the reencryption computation itself is
	visible only through the values that reach final key derivation.  We study
	confirmation-code-augmented KEM variants under an honest-reference harness in
	which the reference encapsulation fixes a hidden final-key point
	\(\langle\mathsf{good},B,W\rangle\), with \(W\) the confirmation witness.  For a
	\(q\)-localized system under test, acceptance is bounded by honest correctness
	error, adversarial aliasing, final-key freshness defects, a hit on the
	localized suffix list \(Q_G(B)\), and \(2^{-\kappa}\).  A one-query construction
	from any predictor of \(W\) matches this bound up to the fresh-key coincidence
	term, so the list-hit event is the black-box obstruction measured by the
	harness.
	
	The list-hit term is bounded either by a cUP-faithful harness certificate,
	which transfers source confirmation-code unpredictability with a \(q\)-loss, or
	by an average conditional min-entropy bound, with separate RawEnt and TailEnt
	hypotheses for short diagnostic and truncation-tail codes.  The same model
	proves a dependency-cone lower bound for non-certification claims.  When the black-box
	observation of an honest-support harness factors through the
	confirmation-observable final-key target, every operation outside the
	support-active cone has a coupled erasure implementation with the same
	transcript distribution; over any implementation class containing that erasure,
	soundness and completeness errors of an execution certifier satisfy
	\(\alpha+\beta\ge 1\).  The \mlkem{} and \hqc{} case studies distinguish
	theorem-covered positive rows, finite-catalog artifact rows, and
	non-certification rows that carry a cone-inactivity certificate.  The security
	of the standard KEM lines is the construction-level security supplied by the
	cited source analyses.
\end{abstract}

\section{Introduction}

Post-quantum KEMs built from the Fujisaki--Okamoto transform rely on a
decapsulation check that is essential for security.  A decapsulation algorithm
decrypts the received ciphertext, rederives the encryption randomness,
recomputes the ciphertext, compares the result to the received ciphertext, and
then derives the final shared secret.  The check is security-critical, but an
ordinary input-output test observes only the sampled path.  On honestly
generated ciphertexts, an implementation that omits part of the check may still
return the expected shared secret.

Glabush, G{\"u}nther, H{\"o}velmanns, and Stebila introduced verifiable
decapsulation for FO-based KEMs~\cite{Glabush2025VerifiableDecapsulation}.
Their construction adapts the confirmation-code methodology of Fischlin and
G{\"u}nther~\cite{FischlinGuenther2023VV}: encryption produces an internal
confirmation value, decapsulation recomputes it, and the value is tied to the
subsequent key material.  In the KEM setting, this makes the relevant
reencryption computation part of the ordinary shared-key outcome: a fault that
misses the computation produces the wrong final input except on the usual
guessing events.

After the final key derivation includes a confirmation value, the pass
probability measured by a test depends on the augmented KEM variant, the fault
class, the peer model, and the sampled input distribution.  Removing the
confirmation value from the final key input, changing branch selection on
malformed ciphertexts, and applying the same endpoint mutation on both sides
give different harness-indexed statements.  For a fixed KEM variant, fault
class, and harness, our theorem bounds the corresponding pass probability
through a localized final-key query list.  The source notions are the cUP and
fCOR games of verifiable decapsulation; the harness layer reformulates them as a
list-cUP testing statement with an explicit SUT view and support.

\paragraph{Main results.}
The central result is a formal theory of confirmation-code testing as
list-cUP evidence under harness-indexed views.  A canonical honest-reference
instance fixes a hidden final-key point
\[
x^\star=\langle \mathsf{good},B,W\rangle
\]
and a reference key \(K=G_K(x^\star)\), where \(B\) is the localized
transcript prefix and \(W\) is the hidden confirmation witness.  A tested
decapsulator may be randomized, adaptive, stateful, and may know the public
data and the decapsulation input prescribed by the harness.  Its localized
final-key queries define a list
\[
Q_G(B)=\{w:\mathcal S \text{ queried }
G_K(\langle\mathsf{good},B,w\rangle)\}.
\]
If adversarial aliases and final-key freshness defects are accounted for, then
\[
\mathsf{Pass}^{\mathsf{href}}_{\Pi^C,\mathcal H}(\mathcal S)
\le
\delta_{\Pi^C}
+\epsilon_{\mathsf{alias}}
+\epsilon_{\mathsf{fresh}}
+\Pr[W\in Q_G(B)]
+2^{-\kappa}.
\]
Thus every meaningful positive testing claim reduces to bounding the list-hit
term.  The reduction achieves black-box tightness in the sense that, for every single-candidate
predictor of the hidden witness from the SUT view, there exists a (1)-localized SUT whose honest-reference pass probability is exactly the predictor’s success probability 
plus the fresh-key coincidence term. The rest of the
paper supplies the ways to bound this unavoidable term: source cUP under a
cUP-faithful harness certificate, conditional entropy, and diagnostic
RawEnt/TailEnt assumptions for short or tail-derived codes.

To bound the list-hit term, the source-security route defines a harness
list-cUP game and proves that a \(q\)-candidate list adversary loses a factor
\(q\) relative to a single-candidate cUP bound.  Under a source bound
\(\epsilon_{\mathsf{cUP}}\), this yields
\[
\mathsf{Pass}^{\mathsf{href}}
\le
\delta_{\Pi^C}
+\epsilon_{\mathsf{alias}}
+\epsilon_{\mathsf{fresh}}
+q\epsilon_{\mathsf{cUP}}
+2^{-\kappa}.
\]
The entropic route proves that if \(W\) has average conditional min-entropy
\(\lambda\) given a view that determines the candidate list, then
\[
\mathsf{Pass}^{\mathsf{href}}
\le
\delta_{\Pi^C}
+\epsilon_{\mathsf{alias}}
+\epsilon_{\mathsf{fresh}}
+q2^{-\lambda}
+2^{-\kappa},
\]
with an explicit smoothing loss when only smooth min-entropy holds.

Because the sampled peer and support are part of the experiment, the harness
fixes the interpretation of these bounds.  Honest-self tests can accept a
self-consistent endpoint that erases the code from both encapsulation and
decapsulation, even though the same endpoint fails against a sound reference
peer except with correctness error and final-key guessing.
Support-equivalent endpoints have the same pass probability only on the tested
support.  A retained range of \(L\) bits is only a capacity bound and implies no
lower bound on conditional min-entropy without a separate unpredictability
claim.  The testing theorems in this paper are classical ROM statements; QROM
security claims are inherited only from the cited construction-level
verifiable-decapsulation and FO analyses.

\paragraph{Contributions.}
\begin{enumerate}[leftmargin=*]
	\item We define honest-reference and honest-self testing experiments for
	confirmation-code-augmented FO decapsulation, including canonical final-key
	inputs, hidden reference evaluations, localized final-key query lists,
	adversarial aliasing, final-key freshness, and correctness bad events.
	\item We prove a tight list-hit detectability theorem.  Every \(q\)-localized
	SUT satisfies
	\[
	\Pass^{\mathsf{href}}
	\le
	\delta+\epsilon_{\mathsf{alias}}+\epsilon_{\mathsf{fresh}}
	+\Pr[W\in Q_G(B)]+2^{-\kappa}.
	\]
	Moreover, the list-hit term is unavoidable: any single-candidate predictor of
	\(W\) from the SUT view induces a one-query SUT whose pass probability is the
	predictor success probability plus fresh-key coincidence.
	\item We prove the dependency-cone lower bound for black-box honest-reference
	certification.  If the observation of an honest-support harness factors
	through the confirmation-observable final-key point, then no operation outside
	the support-active dependency cone of that point can be certified by a
	black-box test over an implementation class containing the coupled erasure.
	Equivalently, over erasure-closed classes, every such operation admits an
	erasure fault with identical transcript distribution, so nontrivial execution
	certification is information-theoretically impossible.
	\item We prove the two positive routes that meet this lower bound.  The
	source-security route defines a harness list-cUP game with a \(q\)-loss
	from ordinary cUP under a cUP-faithful harness certificate.  The entropic route
	derives a \(q2^{-\lambda}\) list-hit bound from average conditional min-entropy
	of the witness, including smoothing.
	\item We prove a diagnostic-code theorem for truncated hashes of hidden raw
	witnesses.  This theorem covers short artifact codes such as
	\mlkem{}-\cdone{}; those variants use the source cUP bound only when a
	source-equivalence certificate is supplied.
	\item We map the dependency cone to \mlkem{} and the August 2025 salted-SFO
	\hqc{} line.
	For the source \mlkem{} code with \(|S|=2\), only \(2(k+1)\) selected
	uncompressed coefficients enter the source-code cone; the remaining
	\(254(k+1)\) coefficients are outside that code cone unless exposed by a
	separate harness target.  For \hqc{}-\cdone{}, the retained tail cone contains
	\(5,8,8\) bits for HQC-1/3/5, leaving \(0,3,29\) tail bits outside the
	\cdone{} cone; \cdstar{} retains the full \(5,11,37\)-bit tail range.
	\item We provide a reproducible artifact with pinned baselines, derived
	variants, harnesses, mutant catalogs, frozen JSONL records, three compact
	summaries, and two plots.  The experiments are
	interpreted as witnesses for theorem-relevant mutant behavior, with
	construction-level security for the standard KEM lines supplied by the cited
	source analyses.  The dependency-cone theorem upgrades a negative row only
	when it is paired with a cone-inactivity certificate for the full
	confirmation-observable target.  Decision-only, component-selective,
	non-retained-tail, and unselected-coefficient rows are reported either as
	finite-catalog observations, as source-code-cone boundaries, or as
	theorem-covered non-certification results according to the certificate
	available for the stated honest-support harness.
\end{enumerate}

\paragraph{Instantiations and artifact.}
The model is instantiated on two augmented families: for
\mlkem{}, we use the confirmation-code line of the 2025
verifiable-decapsulation paper and evaluate an artifact diagnostic variant over
the FIPS 203 algorithms; for the August 2025 public \hqc{}
specification~\cite{HQC2025Spec}, the salted flow yields a direct transcript
\[
H(ek_{\mathsf{KEM}})\|m\|\mathsf{salt}
\]
and our augmented variants take their confirmation code from the truncation
tail.  We study
\cdone{}, which retains one byte when available, and \cdstar{}, which retains
the full truncation-tail range budget.

The accompanying artifact, FO-FaultBench, available at \url{https://github.com/hypergalois/FO-FaultBench}, 
pins the \mlkemnative{} and August
2025 \hqc{} baselines, materializes the transformed variants, and runs the
harnesses used by the theorem statements.  The experiments check the stated
harness behavior on concrete code: binding faults are detected under
\(\Dhonref\), decision-only faults are support-dependent, symmetric endpoint
faults separate \(\Dhonself\) from \(\Dhonref\), and probabilistic code
guessing follows the retained-code reference scale for the tested mutants.

\section{Related Work}
\label{sec:related}

Glabush, G{\"u}nther, H{\"o}velmanns, and Stebila introduced verifiable decapsulation
for FO-based KEMs~\cite{Glabush2025VerifiableDecapsulation}.  Their paper
adapts confirmation codes to FO-based KEMs and proves that, for the targeted
faulty-but-benign implementations, missing reencryption can appear as a
correctness failure.  The construction builds on the verifiable-verification
methodology of Fischlin and G{\"u}nther~\cite{FischlinGuenther2023VV}, where a
verification computation produces disposable confirmation information and ties
that information to later protocol functionality.  We keep the same
construction and analyze the pass probability measured by a specified harness.

A concrete test harness may allow several final-key queries, run a
self-consistent mutant as the peer, sample a restricted support, or use
diagnostic codes shorter than the source confirmation code.  The harness
list-cUP, conditional min-entropy, support separation, and diagnostic-code
theorems below formalize these differences as certification signals measured
by that harness.

The FO-transform background used here is the modular analysis of Hofheinz,
H{\"o}velmanns, and Kiltz~\cite{HHK2017}, together with later work on prefix
hashing and public-key binding, which explains why FO inputs often include
public-key-derived material in multi-user
settings~\cite{DumanEtAl2021PrefixHashing}.  Work on decryption failures,
explicit rejection, and failing plaintexts clarifies the distinction between
the reencryption computation and the accept-or-reject behavior of the
decapsulation oracle~\cite{HovelmannsHulsingMajenz2022FailingGracefully,
	HovelmannsMajenz2023FailingGracefullyNote,
	HovelmannsKudinov2025ExplicitImplicit}.  Recent work on alternatives to
reencryption through rigidity and range-check style structure also separates
the reencryption computation from the rejection
interface~\cite{HovelmannsHulsingMajenzSisinni2025ReEncryptionAlternatives}.
The same separation appears in the harness statements below, where
recomputation faults and branch-selection faults have different supports.

The \hqc{} instantiation uses the salted FO analysis of Glabush,
H{\"o}velmanns, and Stebila~\cite{GlabushHoevelmannsStebila2025MultiTarget},
which formalizes the public salt used to mitigate multi-target collision
attacks and ties the randomness and key derivations to public-key material, the
message, and the salt.  The public \hqc{} specification dated 2025-08-22 adopts
this salted implicit-rejection line and explicitly uses
\(H(ek_{\mathsf{KEM}})\|m\|\mathsf{salt}\) in the derivation of the key and
reencryption randomness~\cite{HQC2025Spec}.  The experiments use the
August 2025 specification as the main \hqc{} object, with the earlier February
line retained for drift checks.

For confirmation-code harnesses, the dependency-cone lower bound specializes
observational equivalence to the declared target vector.  Once the black-box
observation factors through that vector, operations that are inactive for the
target can be erased without changing the transcript distribution.  The
contribution is the target/cone grammar for confirmation-code testing, not the
abstract indistinguishability fact by itself; the certificates below record the
support, target, and implementation-class conditions under which a negative
row becomes a non-certification theorem.

\mlkem{} is standardized in FIPS~203~\cite{NISTFIPS203}, and NIST SP 800-227 contains
recommendations for implementing and using KEMs~\cite{NISTSP800227}.  The
claims below study the implementation signal created by confirmation-code
binding in augmented variants derived from those baselines.

For the augmented line, construction-level provenance comes from the QROM
security literature, including double-sided and measure-rewind styles of
one-way-to-hiding, online and oracle-masked extraction, direct Kyber analyses,
and FO lifting
results~\cite{BindelEtAl2019TighterQROM,JiangZhangMa2019ExplicitRejection,
	ShanGeXue2022QCCA,DonEtAl2021OnlineExtractability,
	DingEtAl2022KyberInjectivity,BarbosaHulsing2023KyberFO,
	GeShanXue2023ExplicitRejection,GeShanXue2023FOQCCALift,
	GeLiaoXue2024MRE,MajenzSisinni2024FFPNG}.  These results provide provenance for
the standard FO security games cited in \Cref{sec:source-security}.  The
testing statements in this paper are classical ROM statements about pass
probabilities under named testing harnesses.

\section{Model}
\label{sec:model}

\subsection{Canonical final-key point}

We model the testing target as a confirmation-code-augmented FO KEM
\(\Pi^C\) with algorithms \(\KeyGen,\Encaps^C,\Decaps^C\).  All encodings in
the model are length-delimited, domain-separated, injective, and prefix-free
unless an explicit aliasing error is stated.  The final-key oracle \(G_K\) is a
classical lazy-sampled random oracle returning \(\kappa\)-bit strings.

\begin{definition}[Canonical confirmation-code FO test line]
	\label{def:canonical-instance}
	A canonical honest-reference instance is a tuple
	\[
	\Inst=(ek,dk,c,K,B,W,x^\star)
	\]
	sampled as follows:
	\[
	(ek,dk)\leftarrow \KeyGen(1^\lambda),\qquad
	(K,c,\tau)\leftarrow \Encaps^C(ek).
	\]
	The encapsulation transcript \(\tau\) determines a localized transcript prefix
	\[
	B=B(\tau)\in\bits^\ast,
	\qquad
	W=W(\tau)\in\mathcal W,
	\qquad
	x^\star=\langle \mathsf{good},B,W\rangle .
	\]
	The honest accepted key is
	\[
	K=G_K(x^\star)\in\bits^\kappa .
	\]
	If the final-key encoding is not syntactically canonical, any adversarial
	collision with the canonical point is charged to the aliasing defect in
	\Cref{def:alias-freshness}.
	The honest correctness error is
	\[
	\delta_{\Pi^C}
	=
	\Pr[\Decaps^C(dk,c)\neq K].
	\]
	The harness withholds \(K\), \(W\), and \(x^\star\) from the tested system
	except through declared public data and oracle access.
\end{definition}

\begin{definition}[Adversarial aliasing and final-key freshness]
	\label{def:alias-freshness}
	Let \(\Can\) be the canonical parser used for final-key oracle inputs, with
	undefined inputs mapped to \(\bot\).  For a run of a SUT \(\mathcal S\), let
	\(\mathcal Q_K^{\mathsf{all}}\) be the set of all inputs queried by
	\(\mathcal S\) to \(G_K\).  The adversarial aliasing event is
	\[
	\Alias_{\mathcal S}
	=
	\left[
	\exists x\in\mathcal Q_K^{\mathsf{all}}\setminus\{x^\star\}:
	\Can(x)=\Can(x^\star)
	\right].
	\]
	Set
	\[
	\epsilon_{\mathsf{alias}}(\mathcal S,\mathcal H)
	=
	\Pr[\Alias_{\mathcal S}].
	\]
	For a SUT class \(\mathcal C\), write
	\[
	\epsilon_{\mathsf{alias}}(\mathcal C,\mathcal H)
	=
	\sup_{\mathcal S\in\mathcal C}
	\epsilon_{\mathsf{alias}}(\mathcal S,\mathcal H).
	\]
	
	Let \(\mathcal E\) be the event that honest correctness holds, no adversarial
	alias occurs, and the hidden point \(x^\star\) is not queried.  Let \(V\) be
	the complete SUT view at output time, including all declared oracle replies
	except \(G_K(x^\star)\).  If \(\Pr[\mathcal E]=0\), set
	\(\epsilon_{\mathsf{fresh}}(\mathcal S,\mathcal H)=0\).  Otherwise, the
	final-key freshness defect is
	\[
	\epsilon_{\mathsf{fresh}}(\mathcal S,\mathcal H)
	=
	\max\left\{0,\,
	\mathbb E_{v\leftarrow V\mid\mathcal E}
	\left[
	\max_{k\in\bits^\kappa}
	\Pr[K=k\mid V=v,\mathcal E]
	\right]
	-
	2^{-\kappa}
	\right\}.
	\]
	For a class \(\mathcal C\), define
	\[
	\epsilon_{\mathsf{fresh}}(\mathcal C,\mathcal H)
	=
	\sup_{\mathcal S\in\mathcal C}
	\epsilon_{\mathsf{fresh}}(\mathcal S,\mathcal H).
	\]
	The harness is secret-separating for \(\mathcal S\) if
	\(\epsilon_{\mathsf{fresh}}(\mathcal S,\mathcal H)=0\).  It is
	\((\epsilon_{\mathsf{alias}},\epsilon_{\mathsf{fresh}})\)-well formed for a
	class \(\mathcal C\) if the corresponding suprema over \(\mathcal C\) are
	bounded by those values.
\end{definition}

In the salted \hqc{} presentation of \Cref{sec:transform}, the scheme-specific
prefix is \(B=H(ek_{\mathsf{KEM}})\|m\|\mathsf{salt}\), whereas nonsalted
lines use an empty or absent salt field.  The witness \(W\) may be a source
confirmation code, a tail-derived code, or a diagnostic code; its cryptographic
interpretation depends on the theorems below.

\subsection{Honest-reference testing}

\begin{definition}[Honest-reference harness]
	\label{def:sut-interface}
	\label{def:href}
	For a possibly randomized, stateful system under test \(\mathcal S\), the
	experiment \(\HRef_{\Pi^C,\mathcal H}(\mathcal S)\) runs:
	\begin{enumerate}[leftmargin=2em]
		\item sample \(\Inst\) as in \Cref{def:canonical-instance};
		\item give \(\mathcal S\) the decapsulation input prescribed by the harness,
		typically \((dk,c,ek,\mathsf{aux})\), but not \(K\), \(W\), or \(x^\star\);
		\item answer oracle queries using the same domain-separated random oracles as
		the reference line, while the reference evaluation \(G_K(x^\star)\) is
		hidden unless \(\mathcal S\) itself queries \(x^\star\);
		\item receive an output \(\widehat K\) from \(\mathcal S\);
		\item output \(1\) iff \(\widehat K=K\).
	\end{enumerate}
	Define
	\[
	\mathsf{Pass}^{\mathsf{href}}_{\Pi^C,\mathcal H}(\mathcal S)
	=
	\Pr[\HRef_{\Pi^C,\mathcal H}(\mathcal S)=1].
	\]
	For a realized run, the localized final-key candidate set is
	\[
	Q_G(B)
	=
	\{\,w\in\mathcal W:
	\mathcal S \text{ queried }
	G_K(\langle\mathsf{good},B,w\rangle)\,\}.
	\]
	The tested system is \(q_G\)-localized if \(|Q_G(B)|\le q_G\) in every run.
	The central hit event is
	\[
	\mathsf{Hit}=[W\in Q_G(B)].
	\]
	Leakage is accounted for by the final-key freshness defect of
	\Cref{def:alias-freshness}.  This captures accidental exposure of the hidden
	key point through instrumentation, debug logs, shared oracle tables,
	deterministic seeds, transcript caches, or any other channel in the SUT view.
	A certified harness must give explicit reasons why
	\(\epsilon_{\mathsf{fresh}}=0\), or must carry the stated freshness defect in
	the bound.
\end{definition}

The harness model permits a broad tested system, the SUT may know \(dk\),
\(ek\), \(c\), the salt, and any auxiliary data explicitly declared by the
harness, and it may be adaptive and stateful.  The hidden confirmation witness
\(W\) and the hidden reference key evaluation are excluded from the SUT input,
except through the final-key query whose absence is being tested.

\begin{definition}[Pass and detection probability]
	For any harness \(\mathcal T=(D,V)\), define
	\[
	\Pass_{\Pi,\mathcal S,\mathcal T}
	=
	\Pr_{x\leftarrow D}[V(x,\mathcal S)=1],
	\qquad
	\Det_{\Pi,\mathcal S,\mathcal T}=1-\Pass_{\Pi,\mathcal S,\mathcal T}.
	\]
	For a class \(\mathcal C\), write
	\[
	\Pass_{\Pi,\mathcal C,\mathcal T}
	=
	\sup_{\mathcal S\in\mathcal C}\Pass_{\Pi,\mathcal S,\mathcal T}.
	\]
\end{definition}

\subsection{Harness families}

Across the theorem statements and artifact, five harness families distinguish
the honest-reference signal from self-consistency, malformed-input,
differential, and key-format behavior.  The positive theorem is about
\(\Thref\); the other harnesses mark boundaries of what that test can certify.

\begin{table}[H]
	\centering
	\footnotesize
	\begin{tabularx}{\textwidth}{@{}>{\raggedright\arraybackslash}p{2.0cm}
			>{\raggedright\arraybackslash}p{3.7cm}
			>{\raggedright\arraybackslash}p{3.2cm}
			>{\raggedright\arraybackslash}X@{}}
		\toprule
		Harness & Sampled object & Peer model & Main use in the paper \\
		\midrule
		\(\Thref\) & reference-generated \((ek,dk,c,K,B,W)\) & tested decapsulation against a sound peer & list-cUP detectability and binding mutants \\
		\(\Thself\) & tested endpoint's own \((ek,dk,c,K)\) & faulty endpoint talks to itself & self-vs-reference separation \\
		\(\Tct^\mu\) & malformed ciphertext under mode \(\mu\) & reference-vs-tested behavior on indexed support & branch-only and component-selective faults \\
		\(\Tdiff^\mu\) & same serialized input for both implementations & direct differential comparison & mode-selective signatures and drift checks \\
		\(\Tkey^\nu\) & key or layout perturbation under mode \(\nu\) & parser and layout comparison & supplementary artifact coverage \\
		\bottomrule
	\end{tabularx}
	\caption{Harnesses used in the theorem statements.}
	\label{tab:harnesses}
\end{table}

\begin{definition}[Honest-self harness]
	\label{def:hself}
	The experiment \(\HSelf_{\Pi',\mathcal H}(\mathcal S)\) samples key pairs,
	ciphertexts, and reference keys from the same endpoint \(\Pi'\) implemented by
	the SUT family.  It accepts iff the decapsulated key returned by the tested
	endpoint agrees with that endpoint's own encapsulated key.  This differs from
	\(\HRef_{\Pi^C,\mathcal H}\), which samples from the canonical reference
	\(\Pi^C\) and compares against \(G_K(\langle\mathsf{good},B,W\rangle)\).
\end{definition}

\begin{definition}[Support-equivalent SUTs]
	\label{def:support-equivalent}
	Let \(\mathcal D\) be a harness distribution over inputs
	\(z=(ek,dk,c,\mathsf{aux})\).  Two randomized SUTs
	\(\mathcal S_0,\mathcal S_1\) are \(\Delta\)-support-equivalent on
	\(\mathcal D\) if
	\[
	\Delta_{\mathsf{TV}}
	\bigl((z,\mathcal S_0(z)),(z,\mathcal S_1(z))\bigr)
	\le \Delta
	\]
	for \(z\leftarrow\mathcal D\).
\end{definition}

\begin{definition}[Reset-product testing]
	\label{def:reset-product}
	A \(t\)-fold reset-product harness samples independent instances
	\(\Inst_1,\ldots,\Inst_t\), uses independent domain-separated oracle
	namespaces \(G_K^{(1)},\ldots,G_K^{(t)}\), resets the SUT state between runs,
	and accepts iff all \(t\) runs accept.
\end{definition}

\subsection{Fault vocabulary}

Faults are named by the operation they omit or alter:
\[
\mathcal O =
\{O_\theta,O_{\mathsf{reencrypt}},O_{\mathsf{code}},
O_{\mathsf{compare}},O_{\mathsf{reject}},O_{\mathsf{bind}},
O_{\mathsf{select}}\}.
\]
Here \(O_\theta\) and \(O_{\mathsf{reencrypt}}\) refer to deriving
reencryption randomness and recomputing the ciphertext, \(O_{\mathsf{code}}\)
to producing the confirmation code, \(O_{\mathsf{bind}}\) to including the
mandated transcript inputs in the final key derivation, and the other entries
to comparison, rejection, and key selection.

\begin{table}[t]
	\centering
	\footnotesize
	\begin{tabularx}{\textwidth}{@{}>{\raggedright\arraybackslash}p{2.6cm}
			>{\raggedright\arraybackslash}p{4.0cm}
			>{\raggedright\arraybackslash}X@{}}
		\toprule
		Class & Typical examples & Main harness behavior \\
		\midrule
		Recomputation faults & skip reencryption, recompute only part of the ciphertext, overwrite the confirmation code & visible under \(\Thref\) only when the omitted work is covered by the code or the ordinary key path \\
		Binding faults & omit \(W\), omit \(H(ek_{\mathsf{KEM}})\), omit \(\mathsf{salt}\), confuse transcript inputs & visible under \(\Thref\) when they miss the hidden final-key point \\
		Decision faults & always select \(K_{+}\), ignore rejection, compare only one component & often invisible on honest support and visible only under malformed or differential modes \\
		Parsing and format faults & wrong key offset, wrong serialized layout, missing salt field & studied through \(\Tkey\) and \(\Tdiff\) \\
		Symmetric endpoint faults & apply the same binding mutation to both encapsulation and decapsulation & can pass \(\Thself\) and fail \(\Thref\) \\
		\bottomrule
	\end{tabularx}
	\caption{Fault classes used in the paper. The positive theorem concerns
		localized final-key list hits; operation coverage requires a separate
		code-source argument.}
	\label{tab:fault-classes}
\end{table}

\section{Split Presentation of the Augmented Line}
\label{sec:transform}

For the artifact and proofs, we use a split presentation of the
confirmation-code-augmented FO construction.  The presentation separates the
randomness derivation used for reencryption from the final key derivation,
matches the salted \hqc{} syntax, and places the confirmation code at one final
random-oracle point.  The reject-key expression uses the standard rejection
secret \(\sigma\) stored in the decapsulation key.

\begin{figure}[H]
	\centering
	\begin{minipage}[t]{0.47\linewidth}
		\centering
		\algobox{0.95\linewidth}{%
			\textbf{Encapsulation}\par\medskip
			\footnotesize
			\begin{tabular}{@{}r >{\raggedright\arraybackslash}p{0.75\linewidth}@{}}
				\algoline{01}{sample \(m\) and \(\mathsf{salt}\)}
				\algoline{02}{set \(B\gets H(ek_{\mathsf{KEM}})\|m\|\mathsf{salt}\)}
				\algoline{03}{set \(\theta\gets G_\theta(B)\)}
				\algoline{04}{compute \((c_{\mathsf{PKE}},W)\gets \EncC(ek_{\mathsf{KEM}},m;\theta)\)}
				\algoline{05}{set \(c_{\mathsf{KEM}}\gets(c_{\mathsf{PKE}},\mathsf{salt})\)}
				\algoline{06}{set \(K\gets G_K(\langle\mathsf{good},B,W\rangle)\)}
				\algoline{07}{return \((K,c_{\mathsf{KEM}})\)}
		\end{tabular}}
	\end{minipage}\hfill
	\begin{minipage}[t]{0.47\linewidth}
		\centering
		\algobox{0.95\linewidth}{%
			\textbf{Decapsulation}\par\medskip
			\footnotesize
			\begin{tabular}{@{}r >{\raggedright\arraybackslash}p{0.75\linewidth}@{}}
				\algoline{01}{parse \(c_{\mathsf{KEM}}=(c_{\mathsf{PKE}},\mathsf{salt})\)}
				\algoline{02}{compute \(m'\gets\Dec(dk_{\mathsf{KEM}},c_{\mathsf{PKE}})\)}
				\algoline{03}{set \(B'\gets H(ek_{\mathsf{KEM}})\|m'\|\mathsf{salt}\)}
				\algoline{04}{set \(\theta'\gets G_\theta(B')\)}
				\algoline{05}{compute \((c'_{\mathsf{PKE}},W')\gets\EncC(ek_{\mathsf{KEM}},m';\theta')\)}
				\algoline{06}{set \(K_{+}'\gets G_K(\langle\mathsf{good},B',W'\rangle)\)}
				\algoline{07}{set \(K_{-}\gets J(H(ek_{\mathsf{KEM}})\|\sigma\|c_{\mathsf{KEM}})\)}
				\algoline{08}{return \(\ctsel(c'_{\mathsf{PKE}}=c_{\mathsf{PKE}},K_{+}',K_{-})\)}
		\end{tabular}}
	\end{minipage}
	\caption{Simplified split presentation used in the paper.  The final good key
		is evaluated at a canonical point \(\langle\mathsf{good},B,W\rangle\), where
		\(B\) is the transcript prefix and \(W\) is the recomputed confirmation
		witness.}
	\label{fig:split-alg}
\end{figure}

\begin{definition}[Split presentation of verifiable salted FO]
	Let \(\mathsf{PKE}^C=(\KeyGen,\EncC,\Dec)\) be a PKE scheme whose encryption
	algorithm outputs a ciphertext and a confirmation code.  The artifact line,
	denoted \(\VSFO[\mathsf{PKE}^C,G_\theta,G_K,H,J]\), is the KEM in
	\Cref{fig:split-alg}.  The ciphertext is
	\[
	c_{\mathsf{KEM}}=(c_{\mathsf{PKE}},\mathsf{salt}),
	\]
	the reencryption randomness is derived as
	\[
	\theta=G_\theta(H(ek_{\mathsf{KEM}})\|m\|\mathsf{salt}),
	\]
	and the good key is derived as
	\[
	K_{+}=G_K(\langle\mathsf{good},H(ek_{\mathsf{KEM}})\|m\|\mathsf{salt},cd\rangle).
	\]
\end{definition}

In the split presentation, \(G_\theta\) controls the deterministic
reencryption path and \(G_K\) receives the recomputed code after reencryption,
which are the two oracle points used by the testing statement.  A fault that
misses \(cd\) lacks the honest final good-key query; the FO comparison and
rejection logic is unchanged.

\begin{remark}[Relation to the 2025 construction]
	The construction follows the same confirmation-code principle as
	\cite{Glabush2025VerifiableDecapsulation}.  Source security terms from that
	work are transported to this split presentation only when the
	source-equivalence certificate of \Cref{def:source-equivalent-split} is
	satisfied.  Diagnostic variants that change or truncate the witness are
	handled by the diagnostic entropy route instead.  The explicit split matches
	the August 2025 \hqc{} salted transcript
	\(H(ek_{\mathsf{KEM}})\|m\|\mathsf{salt}\) and the canonical final point
	\(G_K(\langle\mathsf{good},B,W\rangle)\) used in the testing theorem.
\end{remark}

\begin{definition}[Source-equivalent split presentation]
	\label{def:source-equivalent-split}
	Let \(\Pi^{\mathsf{src}}\) be the source confirmation-code transform and let
	\(\Pi^{\mathsf{split}}\) be a split presentation with separate \(G_\theta\)
	and \(G_K\) domains.  The split presentation is source-equivalent for a SUT
	class \(\mathcal C\) and harness \(\mathcal H\) if there is an efficiently
	computable bijection \(\Phi\) on transcripts such that:
	\begin{enumerate}[label=(S\arabic*),leftmargin=2em]
		\item \((ek,dk,c,B,W)\) has the same distribution in
		\(\Pi^{\mathsf{src}}\) and \(\Pi^{\mathsf{split}}\) after applying \(\Phi\).
		\item The source final-key oracle input and the split final-key input are
		related by a domain-separated injective relabeling, so random-oracle tables
		can be coupled by renaming inputs.
		\item The \(G_\theta\) queries of the split presentation are simulatable from
		the source cUP interface and do not reveal any additional function of \(W\)
		beyond the source transcript.
		\item The accept predicate and honest-reference key equality event coincide
		after the same transcript relabeling.
	\end{enumerate}
\end{definition}

\begin{proposition}[Transfer across a source-equivalent split]
	\label{prop:split-transfer}
	If \(\Pi^{\mathsf{split}}\) is source-equivalent to
	\(\Pi^{\mathsf{src}}\) in the sense of \Cref{def:source-equivalent-split},
	then cUP, list-cUP, and fCOR advantages for the two presentations are equal up
	to the explicit relabeling of oracle domains.  In particular, a source cUP
	bound may be used in \Cref{thm:certified-lcup} for the split presentation only
	under this source-equivalence certificate.
\end{proposition}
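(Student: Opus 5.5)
The plan is a coupling argument built on the bijection \(\Phi\) supplied by \Cref{def:source-equivalent-split}. I would treat each advantage (cUP, list-cUP, fCOR) as the success probability of an adversary, or SUT, in a game whose state consists of the sampled transcript \((ek,dk,c,B,W)\), the random-oracle tables, and the adversary's coins. The goal is to exhibit, for each adversary \(\mathcal A\) against one presentation, an adversary \(\mathcal A'\) against the other whose game has the same distribution of the success event. The reduction should run in both directions and add at most the cost of \(\Phi\), \(\Phi^{-1}\) and the input relabeling.

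\textbf{Step 1: couple the sampled objects.} By (S1), sampling \(\Inst\) in \(\Pi^{\mathsf{split}}\) and applying \(\Phi\) gives the distribution of \((ek,dk,c,B,W)\) in \(\Pi^{\mathsf{src}}\). I would therefore fix a joint sample \(\Inst^{\mathsf{src}}=\Phi(\Inst^{\mathsf{split}})\).

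\textbf{Step 2: couple the final-key oracles.} By (S2), let \(\rho\) be the injective domain-separated relabeling. Define \(G_K^{\mathsf{split}}(x)=G_K^{\mathsf{src}}(\rho(x))\) on \(\rho\)'s domain, and use fresh independent lazy sampling outside it. Since \(\rho\) is injective, this is again a uniform lazily sampled random oracle, so the coupled pair of tables has the correct marginals. Under the coupling, the canonical point \(x^\star\) maps to the source final-key input. Consequently, the hit event \(W\in Q_G(B)\) and the query-list size are preserved, so \(q\)-localization transfers with the same \(q\).

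\textbf{Step 3: simulate \(G_\theta\).} This is the main obstacle. The split presentation exposes an extra oracle \(G_\theta\), while the source cUP interface may not expose it directly. I would invoke (S3): \(\mathcal A'\), playing the source game, answers every \(G_\theta\) query of \(\mathcal A\) with the simulator guaranteed by (S3). It then needs to check that the joint view of \(\mathcal A\), namely its inputs, the \(G_\theta\) replies and the relabeled \(G_K\) replies, is identically distributed in the real and simulated games conditioned on \(W\). The point to verify is the clause ``no additional function of \(W\) beyond the source transcript'': it should make the simulated replies conditionally independent of \(W\) given the source view, in the same way as the real replies. I expect this to require a hybrid in which the real \(G_\theta\) table is replaced by the simulated one query by query, with equality holding at each step by (S3).

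\textbf{Step 4: transfer the success events.} Use (S4) to show that the accept predicate and the key-equality event coincide under \(\Phi\) and \(\rho\). Hence fCOR advantages are equal, since fCOR is an accept or key-equality event. Likewise cUP and list-cUP advantages are equal, since those are events of the form \(W\in Q\) for the candidate list, which Step 2 preserves. Running the same argument with \(\Phi^{-1}\) and \(\rho^{-1}\) on the image of \(\rho\) gives the reverse direction, and hence equality rather than a one-sided bound. The ``in particular'' clause then follows immediately: any source cUP bound \(\epsilon_{\mathsf{cUP}}\) is also a cUP bound for the split presentation, which is the hypothesis needed in \Cref{thm:certified-lcup}. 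Without (S1)--(S4) the coupling is not available, so the restriction to certified splits is exactly what the proof uses.
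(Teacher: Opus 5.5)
Your proposal is correct and follows the paper's proof essentially step for step: you couple the sampled transcripts via \(\Phi\) using (S1), couple the random-oracle tables through the injective relabeling of (S2), answer \(G_\theta\) queries with the (S3) simulator, use (S4) to identify the acceptance and hit events, and then argue the converse direction. The only difference is that the paper reads (S3) as exact simulatability and applies the simulator directly, without the query-by-query hybrid you anticipate in Step 3; your added bookkeeping, namely that an injective relabeling preserves the size of \(Q_G(B)\) and hence \(q\)-localization, is consistent with the paper's argument and makes it slightly more explicit.
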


\begin{proof}
	Run the source experiment and the split experiment with the same sampled
	transcript under the bijection \(\Phi\).  Couple their random oracles by the
	injective domain relabeling from condition (S2), and answer all \(G_\theta\)
	queries using the simulator from (S3).  Conditions (S1) and (S4) give
	identical distributions of \((B,W)\) and identical acceptance events.
	Therefore any adversary in one presentation is transformed into an adversary
	in the other with the same success probability, and conversely.
\end{proof}

\begin{remark}[Diagnostic splits are not source transfers]
	\label{rem:diagnostic-not-source-transfer}
	The \mlkem{}-\cdone{} and \hqc{}-\cdone{}/\cdstar{} artifact lines are not
	asserted to be source-equivalent to the source \mlkem{} confirmation-code
	construction.  They change the witness source or shorten it diagnostically.
	Their interpretation therefore goes through the hashed-witness theorem,
	\mlkem{}-\cdone{} RawEnt, or \hqc{} TailEnt, not through direct transfer of
	the source cUP theorem.
\end{remark}

\paragraph{Testing split.}
In an FO implementation, randomness derivation, good-key derivation, and
transcript binding can be close together in the code and in the proof syntax.
For the present testing question, the reencryption randomness and the final key
derivation are separate oracle points.  The confirmation code is used after
reencryption has produced the value bound into the final key.
A fault that skips a code-covered computation and therefore fails to determine
\(W\) loses the exact value needed at the final key point; operations outside
the witness dependency cone require separate harness or code-source evidence.

\paragraph{Rejection path.}
The ordinary rejection structure is unchanged: decapsulation decrypts,
recomputes, compares, and selects between a good key and a reject key.  The
support-coincidence theorem therefore applies to the transformed line as stated.
A branch-only fault that agrees with correct decapsulation on all honestly
generated ciphertexts is a malformed-support question.

\section{Detectability Results}
\label{sec:theory}

\subsection{Honest-reference list-hit bound}

\begin{theorem}[Honest-reference list-hit bound with adversarial bad events]
	\label{thm:code-blind}
	\label{thm:list-hit-bad-events}
	Let \(\Pi^C\) be a canonical confirmation-code FO line with final key length
	\(\kappa\) and correctness error \(\delta_{\Pi^C}\).  For any possibly
	randomized, stateful SUT \(\mathcal S\) and honest-reference harness
	\(\mathcal H\),
	\[
	\mathsf{Pass}^{\mathsf{href}}_{\Pi^C,\mathcal H}(\mathcal S)
	\le
	\delta_{\Pi^C}
	+
	\epsilon_{\mathsf{alias}}(\mathcal S,\mathcal H)
	+
	\Pr[W\in Q_G(B)]
	+
	\epsilon_{\mathsf{fresh}}(\mathcal S,\mathcal H)
	+
	2^{-\kappa}.
	\]
	For a class \(\mathcal C\), the same bound holds with the class suprema of
	\(\epsilon_{\mathsf{alias}}\), \(\epsilon_{\mathsf{fresh}}\), and the list-hit
	probability.
\end{theorem}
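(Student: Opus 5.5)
The proof is a union bound over a case split of the pass event \(\{\widehat K=K\}\). Write \(\mathsf{Bad}_{\mathsf{corr}}=[\Decaps^C(dk,c)\neq K]\), \(\Alias_{\mathcal S}\) as in \Cref{def:alias-freshness}, and \(\mathsf{Q}^\star=[x^\star\in\mathcal Q_K^{\mathsf{all}}]\). Then
\[
\Pass^{\mathsf{href}}_{\Pi^C,\mathcal H}(\mathcal S)
\le \Pr[\mathsf{Bad}_{\mathsf{corr}}]+\Pr[\Alias_{\mathcal S}]+\Pr[\mathsf{Q}^\star]
+\Pr[\widehat K=K\wedge\mathcal E],
\]
where \(\mathcal E\) is exactly the complement event of \Cref{def:alias-freshness}: honest correctness holds, there is no alias, and \(x^\star\) is not queried. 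The first two terms are \(\delta_{\Pi^C}\) and \(\epsilon_{\mathsf{alias}}(\mathcal S,\mathcal H)\) by definition.

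For the third term, I will show \(\mathsf{Q}^\star\subseteq\Hit\). The harness withholds \(x^\star\) and \(G_K(x^\star)\), so the only way the point enters the query set is that \(\mathcal S\) itself submits an input equal to \(\langle\mathsf{good},B,W\rangle\). Since the encoding is injective and prefix-free, such an input parses with prefix \(B\) and witness \(W\), so \(W\in Q_G(B)\). Inputs that differ syntactically but parse to the same canonical point are already covered by \(\Alias_{\mathcal S}\). This step should be stated carefully: the definition of \(Q_G(B)\) refers to queries at the literal canonical point, and the aliasing clause is what guarantees that nothing falls between the two events. Hence \(\Pr[\mathsf{Q}^\star]\le\Pr[W\in Q_G(B)]\).

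The main step is the last term, \(\Pr[\widehat K=K\wedge\mathcal E]\le \epsilon_{\mathsf{fresh}}+2^{-\kappa}\). If \(\Pr[\mathcal E]=0\) the term vanishes. Otherwise, condition on \(\mathcal E\) and on the complete SUT view \(V=v\). The output \(\widehat K\) is a (possibly randomized) function of \(v\) and the SUT's internal coins, and these coins are independent of \(K\) given \(v\). Therefore
\[
\Pr[\widehat K=K\mid V=v,\mathcal E]\le\max_k\Pr[K=k\mid V=v,\mathcal E].
\]
Averaging over \(v\leftarrow V\mid\mathcal E\) and using the definition of \(\epsilon_{\mathsf{fresh}}\) (note the \(\max\{0,\cdot\}\) only helps) bounds this by \(\epsilon_{\mathsf{fresh}}+2^{-\kappa}\). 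Multiplying by \(\Pr[\mathcal E]\le 1\) gives the claim. The obstacle here is justifying that \(V\) contains everything \(\widehat K\) depends on, including state, adaptivity and oracle replies other than \(G_K(x^\star)\). I will argue this directly from the definition of \(V\) as the complete view at output time. No lazy-sampling argument is needed, because freshness is measured rather than assumed. As a remark, when the harness is secret-separating, lazy sampling of \(G_K\) at the unqueried \(x^\star\) gives exactly \(2^{-\kappa}\), consistent with \(\epsilon_{\mathsf{fresh}}=0\).

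Summing the four terms gives the stated bound. For a class \(\mathcal C\), each term is bounded by its supremum over \(\mathcal C\), and \(\delta_{\Pi^C}\) does not depend on \(\mathcal S\). Taking the supremum of the left side therefore yields the class version with the suprema of \(\epsilon_{\mathsf{alias}}\), \(\epsilon_{\mathsf{fresh}}\), and the list-hit probability.
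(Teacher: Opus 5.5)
Your proof is correct and is essentially the paper's argument: a union bound that charges \(\neg\Corr\) and \(\Alias_{\mathcal S}\) separately, charges the event that \(x^\star\) is queried to \(\Pr[\Hit]\), and bounds the remaining term by the average posterior guessing probability of \(K\) from the complete view, as in the definition of \(\epsilon_{\mathsf{fresh}}\). The only cosmetic difference is that you split off the event that \(x^\star\) is queried and then identify it with \(\Hit\), whereas the paper conditions directly on \(\neg\Bad\wedge\neg\Hit\); that identification is in fact an equality by the definition of \(Q_G(B)\), so the injectivity argument is unnecessary.
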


\begin{proof}
	Let \(\Corr=[\Decaps^C(dk,c)=K]\) be the honest correctness event, and let
	\(\Alias_{\mathcal S}\) be as in \Cref{def:alias-freshness}.  Put
	\[
	\Bad=\neg\Corr\cup\Alias_{\mathcal S}.
	\]
	Then
	\[
	\Pr[\Bad]
	\le
	\delta_{\Pi^C}
	+
	\epsilon_{\mathsf{alias}}(\mathcal S,\mathcal H).
	\]
	
	Let \(\Acc=[\widehat K=K]\).  Decompose
	\[
	\Pr[\Acc]
	\le
	\Pr[\Bad]
	+\Pr[\Acc\wedge\neg\Bad\wedge\mathsf{Hit}]
	+\Pr[\Acc\wedge\neg\Bad\wedge\neg\mathsf{Hit}].
	\]
	The middle term is at most \(\Pr[\mathsf{Hit}]=\Pr[W\in Q_G(B)]\).  For the
	final term, condition on
	\(\mathcal E=\neg\Bad\wedge\neg\mathsf{Hit}\) and on a complete output-time
	view \(V=v\).  By the definition of \(\epsilon_{\mathsf{fresh}}\), the
	average posterior guessing probability for \(K\) from that view is at most
	\(2^{-\kappa}+\epsilon_{\mathsf{fresh}}(\mathcal S,\mathcal H)\).  Therefore
	\[
	\Pr[\Acc\wedge\neg\Bad\wedge\neg\mathsf{Hit}]
	\le
	2^{-\kappa}
	+
	\epsilon_{\mathsf{fresh}}(\mathcal S,\mathcal H).
	\]
	Combining the three terms proves the claim.
\end{proof}

\begin{theorem}[Optimality of the hidden-point reduction]
	\label{thm:hidden-point-optimality}
	Let \(\Pi^C\) be a canonical honest-reference line with
	\(\delta_{\Pi^C}=0\), no adversarial aliases, and a secret-separating harness
	in the sense of \Cref{def:alias-freshness}.  Let \(V\) be any SUT view before
	localized final-key queries, and let \(A\) be any measurable
	single-candidate strategy that outputs a suffix \(A(V)\in\mathcal W\).  Then
	there exists a \(1\)-localized SUT \(\mathcal S_A\) such that
	\[
	\mathsf{Pass}^{\mathsf{href}}_{\Pi^C,\mathcal H}(\mathcal S_A)
	=
	\Pr[A(V)=W]
	+
	(1-\Pr[A(V)=W])2^{-\kappa}.
	\]
	Consequently, any black-box honest-reference theorem whose hypotheses do not
	bound the posterior predictability of \(W\) from the SUT view cannot give a
	nontrivial bound below the best single-candidate witness-prediction
	probability, up to the fresh-key coincidence term.
\end{theorem}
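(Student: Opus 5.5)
The construction is the obvious one. The SUT $\mathcal S_A$ runs the strategy $A$ on its pre-query view and queries the final-key oracle once at the candidate point. Concretely, $\mathcal S_A$ receives the decapsulation input prescribed by $\mathcal H$ and forms the view $V$ exactly as the statement specifies. It then computes $\widehat w\gets A(V)$, issues the single query $\widehat K\gets G_K(\langle\mathsf{good},B,\widehat w\rangle)$, and outputs $\widehat K$. The statement speaks of the SUT view determining the localized prefix. I will assume that $B$ is computable from the declared input, e.g.\ $B=H(ek_{\mathsf{KEM}})\|m\|\mathsf{salt}$ with $m=\Dec(dk,c_{\mathsf{PKE}})$, which is exact because $\delta_{\Pi^C}=0$. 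Alternatively I will absorb $B$ into $V$. In either case $Q_G(B)=\{\widehat w\}$, so $\mathcal S_A$ is $1$-localized. Since $\mathcal S_A$ makes no other $G_K$ query, no adversarial alias arises beyond what the hypotheses already exclude.

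Next I split the pass probability on the event $E=[A(V)=W]$. On $E$ the queried point is exactly $x^\star$, so $\widehat K=G_K(x^\star)=K$ and the run passes with probability $1$. On $\neg E$ the queried point $\langle\mathsf{good},B,\widehat w\rangle$ differs from $x^\star$. By injectivity of the canonical encoding and the no-alias hypothesis, it is a distinct lazily sampled random-oracle input. Hence $\widehat K$ and $K=G_K(x^\star)$ are two independent uniform $\kappa$-bit strings, and they are also independent of $V$ and of the event $\neg E$, both of which are determined before either oracle value is sampled. This gives $\Pr[\widehat K=K\mid\neg E]=2^{-\kappa}$ exactly. Summing the two cases yields $\Pr[E]+(1-\Pr[E])2^{-\kappa}$, which is the claimed identity.

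The delicate step is justifying exact independence on $\neg E$, and this is where the secret-separating hypothesis enters. I need that nothing in $V$, and no oracle reply given to $\mathcal S_A$, carries information about $G_K(x^\star)$ when $x^\star$ is not queried. By Definition~\ref{def:alias-freshness}, $\epsilon_{\mathsf{fresh}}=0$ means the average maximal posterior of $K$ given the output-time view on $\mathcal E$ is $2^{-\kappa}$. Such an average equals the minimum possible value only when the posterior is uniform for almost every $v$. In our case $\mathcal E=\neg E$, because correctness and the absence of aliases hold surely. Since $\widehat K$ is a function of the view, which includes the reply $G_K(\langle\mathsf{good},B,\widehat w\rangle)$, uniformity of the posterior gives $\Pr[\widehat K=K\mid\neg E]=2^{-\kappa}$. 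This is an equality, not merely an upper bound.

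The consequence then follows directly. Any valid black-box bound that holds for all SUTs must in particular hold for $\mathcal S_A$, for every $A$. Its value is therefore at least $\sup_A\Pr[A(V)=W]$, up to the coincidence term $(1-\Pr[A(V)=W])2^{-\kappa}$. Measurability of $A$ is only needed so that $E$ is an event; internal randomness of $A$ is handled by including its coins in $V$.
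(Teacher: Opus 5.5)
Your proposal is correct and follows the paper's proof. It uses the same one-query SUT \(\mathcal S_A\), the same split on whether \(A(V)=W\), and the same appeal to canonical injectivity, no aliasing, and secret separation for the \(2^{-\kappa}\) case. Your extra details make explicit two steps the paper only asserts: recovering \(B\) from the declared input via perfect correctness, and getting exact equality because \(\epsilon_{\mathsf{fresh}}=0\) forces an almost-surely uniform posterior on \(K\).
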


\begin{proof}
	The SUT \(\mathcal S_A\) computes \(w=A(V)\), queries
	\(G_K(\langle\mathsf{good},B,w\rangle)\), and outputs the oracle reply.  If
	\(w=W\), this query is the hidden reference point \(x^\star\), and the output
	is exactly \(K=G_K(x^\star)\).  If \(w\neq W\), then by canonical injectivity
	the queried oracle point is distinct from \(x^\star\).  Since the harness is
	secret-separating and no alias occurs, the returned value and \(K\) are
	independent uniform \(\kappa\)-bit strings.  The conditional success
	probability in this case is therefore \(2^{-\kappa}\).  Averaging over the
	two cases yields the equality.  The final statement follows by taking \(A\) to
	be the best single-candidate predictor available from the view.
\end{proof}

\begin{remark}[List-hit tightness]
	\Cref{thm:code-blind} reduces harness acceptance to a list-hit term.
	\Cref{thm:hidden-point-optimality} shows that the term is tight: a predictor
	for the hidden witness induces a one-query SUT with the corresponding pass
	probability, up to fresh-key coincidence.  Source cUP, conditional
	min-entropy, hashed-witness entropy, and TailEnt are the assumptions used to
	bound this term for the different instantiations.
\end{remark}

\begin{remark}[Scope of the list-hit bound]
	\label{rem:list-hit-reading}
	The theorem is a list-hit statement.  Unpredictability of a particular code
	source and coverage of omitted operations are separate hypotheses.  Once
	correctness, adversarial aliasing, and final-key freshness are accounted for,
	honest-reference acceptance is controlled by the SUT's list of localized
	final-key suffix candidates and by final-key guessing.
\end{remark}

\subsection{Harness list-cUP bridge}

The verifiable-decapsulation paper defines confirmation-code unpredictability
(cUP) and faulty implementation correctness (fCOR) for confirmation-code
transforms~\cite{Glabush2025VerifiableDecapsulation}.  In a test harness, a
SUT may make several localized final-key queries; therefore the matching source
notion is a list version of cUP.

\begin{definition}[Harness list-cUP]
	\label{def:list-cup}
	Fix a canonical line \(\Pi^C\), a harness \(\mathcal H\), and a restricted
	oracle interface \(\mathcal F\) matching the cUP interface of the source line.
	The experiment
	\[
	\LcUP^{q}_{\Pi^C,\mathcal H,\mathcal F}(\mathcal A)
	\]
	samples \((ek,dk)\leftarrow\KeyGen(1^\lambda)\) and
	\[
	(K,c,\tau)\leftarrow\Encaps^C(ek),\qquad
	B=B(\tau),\qquad W=W(\tau).
	\]
	The experiment sends \(\mathcal A\) the public and secret data allowed by the
	cUP-style experiment, including the harness data such as \(ek,dk,c\), but not
	\(W\) and not \(K\).  It grants \(\mathcal A\) access to \(\mathcal F\), with
	the localized final-key oracle \(G_K(\langle\mathsf{good},B,\cdot\rangle)\)
	removed or blinded.  The adversary outputs a list
	\(L\subseteq\mathcal W\) with \(|L|\le q\).  The experiment outputs \(1\) iff
	\(W\in L\).  Define
	\[
	\Adv^{\LcUP,q}_{\Pi^C,\mathcal H,\mathcal F}(\mathcal A)
	=
	\Pr[\LcUP^{q}_{\Pi^C,\mathcal H,\mathcal F}(\mathcal A)=1].
	\]
	For a class \(\mathcal C\), take the supremum over
	\(\mathcal A\in\mathcal C\).
\end{definition}

\begin{definition}[cUP-faithful harness certificate]
	\label{def:cup-faithful}
	Fix a canonical augmented line \(\Pi^C\), a source cUP interface
	\(\mathcal F\), a harness \(\mathcal H\), a SUT class \(\mathcal C\), and a
	localization bound \(q\).  A certificate
	\[
	\mathsf{Cert}_{\mathsf{cUP}}
	=
	(\mathsf{Inst},\mathsf{Orc},\mathsf{View},\mathsf{Log})
	\]
	proves that \((\mathcal H,\mathcal F,\mathcal C,q)\) is cUP-faithful if the
	following conditions hold for every \(\mathcal S\in\mathcal C\).
	\begin{enumerate}[label=(C\arabic*),leftmargin=2em]
		\item \textbf{Instance equivalence.} The tuple \((ek,dk,c,B,W)\) sampled by
		\(\mathcal H\) has the same distribution as the tuple sampled in the source
		cUP experiment for \(\Pi^C\).
		\item \textbf{View equivalence.} The initial SUT view produced by
		\(\mathcal H\) is simulatable from the source cUP public and secret inputs
		allowed by \(\mathcal F\), and contains no additional function of \(W\), \(K\),
		or \(x^\star\).
		\item \textbf{Oracle equivalence.} Every oracle reply seen by \(\mathcal S\)
		in \(\mathcal H\), except calls to the localized final-key family
		\(G_K(\langle\mathsf{good},B,\cdot\rangle)\), is exactly simulatable using
		\(\mathcal F\) and independent lazy sampling with the same domains.
		\item \textbf{Final-key logging.} Calls by \(\mathcal S\) to
		\(G_K(\langle\mathsf{good},B,w\rangle)\) are not answered by the list-cUP
		simulator as reference-key evaluations.  They are logged as suffixes \(w\) by
		\(\mathsf{Log}\), and at most \(q\) distinct suffixes are logged in every run.
		\item \textbf{State containment.} The SUT receives no side channel, debug
		output, shared oracle table, deterministic seed, transcript cache, or log
		entry that is absent from the simulated source interface.
		\item \textbf{No extra witness path.} Any computation, leakage, cache, or
		oracle interaction through which \(\mathcal S\) can determine \(W\) must be
		represented in the source cUP interface \(\mathcal F\) and therefore counted
		in the source cUP advantage.  If \(dk\), auxiliary data, or public code give a
		SUT class an alternate way to compute \(W\) that is not simulated by
		\(\mathcal F\), then that class is not cUP-faithful for this harness.
	\end{enumerate}
\end{definition}

\begin{remark}[Scope of the source-cUP route]
	\label{rem:source-cup-scope}
	The source-cUP route is not a statement about arbitrary decapsulation programs
	given \(dk\).  It applies to SUT classes whose complete view and allowed
	computations are simulatable in the source cUP experiment.  An implementation
	that recomputes \(W\), obtains it through an auxiliary channel, or reconstructs
	it by a path outside \(\mathcal F\) is handled by the list-hit theorem itself,
	or by an entropy/diagnostic assumption, not by importing the source cUP bound.
\end{remark}

\begin{lemma}[Certified extraction of the list-cUP adversary]
	\label{lem:certified-lcup-extraction}
	If \((\mathcal H,\mathcal F,\mathcal C,q)\) is cUP-faithful, then every
	\(q\)-localized \(\mathcal S\in\mathcal C\) induces a list-cUP adversary
	\(\mathcal A_{\mathcal S}\) whose output list \(L\) satisfies
	\[
	(B,W,L)\equiv (B,W,Q_G(B))
	\]
	in distribution.
\end{lemma}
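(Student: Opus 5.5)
The plan is a direct simulation argument: build \(\mathcal A_{\mathcal S}\) as a wrapper that runs \(\mathcal S\) inside the list-cUP experiment of \Cref{def:list-cup}. Throughout, it simulates the honest-reference harness \(\HRef_{\Pi^C,\mathcal H}\) using only the interface \(\mathcal F\) and the data the cUP experiment provides. It then couples the simulated run with a real run of \(\HRef_{\Pi^C,\mathcal H}(\mathcal S)\) so that the sampled \((ek,dk,c,B,W)\) and the realized log of localized final-key suffixes coincide. Once that coupling is in place, the distributional identity \((B,W,L)\equiv(B,W,Q_G(B))\) follows.

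\textbf{Construction of the adversary.} \(\mathcal A_{\mathcal S}\) receives the cUP-style inputs, including \(ek,dk,c\). From these it produces the initial SUT view using the simulator guaranteed by (C2), and it runs \(\mathcal S\) on that view with fresh internal coins. Each oracle query of \(\mathcal S\) is answered according to its type:
\begin{itemize}
\item queries outside the localized family \(G_K(\langle\mathsf{good},B,\cdot\rangle)\) are answered through \(\mathcal F\), with independent lazy sampling on the same domains, as permitted by (C3);
\item queries of the form \(G_K(\langle\mathsf{good},B,w\rangle)\) are answered with fresh uniform \(\kappa\)-bit strings, kept consistent on repeated \(w\), and \(\mathsf{Log}\) records \(w\), as in (C4).
\end{itemize}
Because \(B\) may be hidden from \(\mathcal A_{\mathcal S}\), these queries have to be recognized syntactically. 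I would do this with the canonical parser \(\Can\), checking that the first component is \(\mathsf{good}\) and the prefix matches; if \(B\) is not directly available, I would argue it is determined by the simulatable view. By (C5), \(\mathcal S\) receives nothing beyond these channels. When \(\mathcal S\) halts, \(\mathcal A_{\mathcal S}\) ignores \(\widehat K\) and outputs \(L\), the set of logged suffixes. By (C4) and \(q\)-localization, \(|L|\le q\), so \(\mathcal A_{\mathcal S}\) is a legal list-cUP adversary.

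\textbf{Coupling.} By (C1), sample the instance identically in both experiments, and use the same coins for \(\mathcal S\). The claim to establish is that, up to the first localized final-key query whose suffix equals \(W\), the joint distribution of \(\mathcal S\)'s view is the same in the real harness and in the simulation:
\begin{itemize}
\item the initial view agrees by (C2);
\item non-localized oracle replies agree by (C3);
\item localized replies at points \(w\neq W\) are fresh uniform values in both worlds, since canonical injectivity makes \(\langle\mathsf{good},B,w\rangle\neq x^\star\) and the real \(G_K\) is lazily sampled there.
\end{itemize}
(C6) rules out any channel that would let \(\mathcal S\) correlate these replies with \(W\) in one world but not the other.

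\textbf{Main obstacle: the query at \(w=W\).} In the real harness, the reply to this query is the hidden value \(K=G_K(x^\star)\). In the simulation, it is a fresh string independent of \(K\). Since \(K\) is withheld from \(\mathcal S\) (\Cref{def:canonical-instance}) and \(x^\star\) has not been queried before, \(K\) is itself a fresh lazily sampled uniform value from \(\mathcal S\)'s perspective, so the two replies are identically distributed given the view so far. I expect the delicate point to be justifying that \(K\) is not already correlated with the view at that moment; for example, the reference encapsulation evaluates \(G_K(x^\star)\) internally. I would handle this using the fact that (C4) specifies these calls are ``not answered as reference-key evaluations'' and that (C2) excludes any function of \(K\) from the view. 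Under these, the reply at \(x^\star\) is a uniform value independent of everything \(\mathcal S\) has seen.

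With all replies identically distributed step by step, the transcripts of \(\mathcal S\) agree in distribution jointly with \((B,W)\). The logged set \(L\) is a deterministic function of that transcript, computed exactly as \(Q_G(B)\) is computed in the real run. This gives the distributional identity. I would also remark that this coupling is exact; if any of the certificate conditions held only statistically, the identity would degrade to a total-variation bound.
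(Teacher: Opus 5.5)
Your proposal is correct and follows the paper's proof. The paper also builds \(\mathcal A_{\mathcal S}\) by running the certificate's simulator. It uses instance, view, and oracle equivalence for all traffic except localized final-key calls, logs those suffixes and answers them from an independent dummy table, and uses state containment to exclude extra leakage, so the logged list is \(Q_G(B)\) jointly with \((B,W)\).

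Your explicit treatment of the reply at \(w=W\) fills in a step the paper's proof leaves implicit. The point is that \(K=G_K(x^\star)\) is fresh and uniform from \(\mathcal S\)'s view, so the dummy value there can be coupled with \(K\). For recognizing localized queries, you can simply use the certificate's \(\mathsf{Log}\) component from (C4) instead of reconstructing \(B\) from the view.
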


\begin{proof}
	The adversary \(\mathcal A_{\mathcal S}\) receives the source cUP instance
	and runs the simulator specified by the certificate.  By instance
	equivalence, the hidden pair \((B,W)\) has the same distribution as in the
	honest-reference harness.  By view and oracle equivalence, every message
	delivered to \(\mathcal S\), except localized final-key answers, has the same
	distribution as in the harness.  When \(\mathcal S\) attempts a localized
	final-key query \(G_K(\langle\mathsf{good},B,w\rangle)\), the adversary logs
	\(w\) and returns a fresh dummy value from an independently sampled table that
	is never used as the reference key.  The resulting execution is the list-cUP
	experiment, with the final-key oracle family removed or blinded.  State
	containment excludes
	additional leakage.  The logged list is therefore precisely \(Q_G(B)\) under
	the simulated run, and its joint distribution with \((B,W)\) is preserved.
\end{proof}

\begin{lemma}[List-cUP to single-cUP loss]
	\label{lem:list-cup-single}
	If the corresponding single-candidate cUP game has advantage at most
	\(\epsilon_{\mathsf{cUP}}\) for every adversary in the same simulation class,
	then
	\[
	\Adv^{\LcUP,q}\le q\epsilon_{\mathsf{cUP}}.
	\]
\end{lemma}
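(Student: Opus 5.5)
The plan is the standard guessing reduction: from a list adversary \(\mathcal A\) for \(\LcUP^{q}\), build a single-candidate cUP adversary \(\mathcal A'\) that runs \(\mathcal A\) and outputs one uniformly chosen entry of its list. Concretely, \(\mathcal A'\) receives the source cUP instance and relays it to \(\mathcal A\). It answers every oracle query of \(\mathcal A\) by forwarding to its own interface \(\mathcal F\), with the localized final-key family removed or blinded exactly as in \Cref{def:list-cup}. When \(\mathcal A\) outputs \(L\) with \(|L|\le q\), the adversary \(\mathcal A'\) samples an index \(i\) uniformly from \(\{1,\dots,q\}\) using independent coins. If \(i\le |L|\), it outputs the \(i\)-th element of \(L\) under a fixed canonical ordering; otherwise it outputs an arbitrary fixed element of \(\mathcal W\) (or aborts).

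The first step checks that \(\mathcal A'\) belongs to the same simulation class. It makes exactly the oracle calls \(\mathcal A\) makes, through the same interface \(\mathcal F\), and adds only independent local coins that are not correlated with \(W\) or \(K\). Hence the hypothesis \(\Pr[\mathcal A' \text{ wins}]\le\epsilon_{\mathsf{cUP}}\) applies to it. This is the only point that needs care: the lemma's hypothesis is stated \emph{for every adversary in the same simulation class}. I would state explicitly that this class is closed under post-processing of the output by independent randomness. That closure is the natural reading, and the step where an objection would arise if the class were defined more narrowly.

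The second step is the probability calculation. Since \(\mathcal A\)'s view inside \(\mathcal A'\) is identically distributed to its view in \(\LcUP^{q}\), the joint distribution of \((W,L)\) is the same in both experiments. On the event \(W\in L\), write \(W\) as the \(j\)-th element of \(L\) for some \(j\le|L|\le q\). The independent index satisfies \(i=j\) with probability exactly \(1/q\). Therefore
\[
\epsilon_{\mathsf{cUP}}\ge\Pr[\mathcal A'\text{ wins}]\ge \tfrac1q\Pr[W\in L]=\tfrac1q\,\Adv^{\LcUP,q}(\mathcal A).
\]
Rearranging gives \(\Adv^{\LcUP,q}(\mathcal A)\le q\epsilon_{\mathsf{cUP}}\). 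Taking the supremum over \(\mathcal A\) yields the class statement.
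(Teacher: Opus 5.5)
Your proposal is correct and matches the paper's proof: the paper's single-candidate adversary also runs the list adversary and outputs a uniformly random element of the (nonempty) list, which succeeds with probability at least \(\frac{1}{q}\Pr[W\in L]\). Your uniform index over \(\{1,\dots,q\}\) gives exactly \(1/q\) rather than \(1/|L|\ge 1/q\), which makes no difference, and your requirement that the simulation class be closed under post-processing by independent coins is an assumption the paper leaves implicit in the phrase ``every adversary in the same simulation class.''
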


\begin{proof}
	Let \(\mathcal A\) output a list \(L\) with \(|L|\le q\).  A
	single-candidate adversary \(\mathcal B\) runs \(\mathcal A\), and if
	\(L\neq\emptyset\), outputs a uniformly chosen element of \(L\).  Conditioned
	on \(W\in L\), the probability that \(\mathcal B\) outputs \(W\) is at least
	\(1/q\).  Therefore
	\[
	\Pr[\mathcal B\text{ wins cUP}]
	\ge
	\frac{1}{q}\Pr[W\in L].
	\]
	The assumed cUP bound implies \(\Pr[W\in L]\le q\epsilon_{\mathsf{cUP}}\), and
	taking the supremum yields the lemma.
\end{proof}

\begin{theorem}[Certified honest-reference testing from source cUP]
	\label{thm:list-cup}
	\label{thm:certified-lcup}
	Suppose \((\mathcal H,\mathcal F,\mathcal C,q_G)\) admits a cUP-faithful
	harness certificate in the sense of \Cref{def:cup-faithful}.  Then every
	\(q_G\)-localized \(\mathcal S\in\mathcal C\) satisfies
	\[
	\mathsf{Pass}^{\mathsf{href}}_{\Pi^C,\mathcal H}(\mathcal S)
	\le
	\delta_{\Pi^C}
	+
	\epsilon_{\mathsf{alias}}(\mathcal S,\mathcal H)
	+
	\epsilon_{\mathsf{fresh}}(\mathcal S,\mathcal H)
	+
	\Adv^{\LcUP,q_G}_{\Pi^C,\mathcal H,\mathcal F}(\mathcal C)
	+
	2^{-\kappa}.
	\]
	If the source construction provides a single-candidate cUP bound
	\(\Adv^{\mathsf{cUP}}_{\Pi^C,\mathcal F}(\mathcal C)
	\le\epsilon_{\mathsf{cUP}}\), then
	\[
	\mathsf{Pass}^{\mathsf{href}}_{\Pi^C,\mathcal H}(\mathcal S)
	\le
	\delta_{\Pi^C}
	+
	\epsilon_{\mathsf{alias}}(\mathcal S,\mathcal H)
	+
	\epsilon_{\mathsf{fresh}}(\mathcal S,\mathcal H)
	+
	q_G\epsilon_{\mathsf{cUP}}
	+
	2^{-\kappa}.
	\]
\end{theorem}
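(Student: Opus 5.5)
The proof composes the three results already established.

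We start from \Cref{thm:list-hit-bad-events}. For any \(\mathcal S\in\mathcal C\) it gives
\[
\mathsf{Pass}^{\mathsf{href}}_{\Pi^C,\mathcal H}(\mathcal S)\le\delta_{\Pi^C}+\epsilon_{\mathsf{alias}}(\mathcal S,\mathcal H)+\Pr[W\in Q_G(B)]+\epsilon_{\mathsf{fresh}}(\mathcal S,\mathcal H)+2^{-\kappa},
\]
where \(Q_G(B)\) and \(W\) are taken in the real harness execution. All that remains is to bound the single term \(\Pr[W\in Q_G(B)]\); the remaining four terms carry over unchanged.

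For the first display, fix a \(q_G\)-localized \(\mathcal S\in\mathcal C\) and apply \Cref{lem:certified-lcup-extraction} under the cUP-faithful certificate. This yields a list-cUP adversary \(\mathcal A_{\mathcal S}\) whose output satisfies \((B,W,L)\equiv(B,W,Q_G(B))\) in distribution. Since the event \(W\in L\) is a measurable function of that triple,
\[
\Pr[W\in Q_G(B)]=\Pr[W\in L]=\Adv^{\LcUP,q_G}_{\Pi^C,\mathcal H,\mathcal F}(\mathcal A_{\mathcal S}).
\]
Condition (C4) guarantees \(|L|\le q_G\) in every run, so \(\mathcal A_{\mathcal S}\) is a legal \(q_G\)-list adversary. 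The certificate's simulation places it in the class over which the supremum \(\Adv^{\LcUP,q_G}(\mathcal C)\) is taken, which gives the first bound.

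\textbf{Main obstacle.} The delicate point is exactly this membership. \(\Adv^{\LcUP,q_G}(\mathcal C)\) is a supremum over adversaries in \(\mathcal C\), but \(\mathcal A_{\mathcal S}\) is \(\mathcal S\) wrapped in the simulator and logger \(\mathsf{Log}\). I would read ``the same simulation class'' in \Cref{lem:list-cup-single} as the class of such wrapped SUTs and state this convention explicitly. Conditions (C5) and (C6) are what justify that reading: they ensure the wrapper adds no path to \(W\) outside \(\mathcal F\).

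For the second display, apply \Cref{lem:list-cup-single} to \(\mathcal A_{\mathcal S}\). It produces a single-candidate adversary that outputs a uniform element of \(L\) and wins with probability at least \(\Pr[W\in L]/q_G\). The single-candidate adversary stays in the simulation class because it only post-processes \(L\) with fresh randomness. The assumed source bound \(\epsilon_{\mathsf{cUP}}\) then gives \(\Pr[W\in L]\le q_G\epsilon_{\mathsf{cUP}}\). Substituting this into the \Cref{thm:list-hit-bad-events} bound completes the proof.

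If the cUP bound is stated for the source presentation rather than the split presentation, I would first invoke \Cref{prop:split-transfer} under a source-equivalence certificate, so that \(\epsilon_{\mathsf{cUP}}\) applies to \(\Pi^C\).
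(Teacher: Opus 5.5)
Your proposal is correct and follows the paper's own proof: start from the list-hit bound of \Cref{thm:code-blind}, use \Cref{lem:certified-lcup-extraction} to equate \(\Pr[W\in Q_G(B)]\) with the list-cUP success of the induced adversary and take the class supremum, then apply \Cref{lem:list-cup-single} for the \(q_G\)-loss. The class-membership convention you spell out for the wrapped adversary \(\mathcal A_{\mathcal S}\) is a point the paper leaves implicit, and it is a useful clarification rather than a change of route.
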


\begin{proof}
	Apply \Cref{thm:code-blind}.  By
	\Cref{lem:certified-lcup-extraction}, \(\Pr[W\in Q_G(B)]\) in the SUT
	experiment is exactly the list-cUP success probability of the induced
	adversary.  The first bound follows by taking the class supremum.  The second
	bound follows from \Cref{lem:list-cup-single}.
\end{proof}

\begin{proposition}[Honest-reference harness and fCOR]
	\label{prop:fcor-harness}
	When the instance distribution, oracle interface, correctness convention, and
	accepted key are the same as in the source fCOR experiment,
	\[
	\mathsf{Pass}^{\mathsf{href}}_{\Pi^C,\mathcal H}(\mathcal S)
	=
	\Pr[\mathsf{fCOR}^{\Pi^C,\mathcal S}\Rightarrow 1].
	\]
	The list-cUP bridge identifies the additional loss incurred by allowing the
	tested system to try a list of final-key suffixes.
\end{proposition}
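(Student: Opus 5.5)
The plan is a coupling argument. When the hypotheses hold, the honest-reference experiment and the source fCOR experiment become literally the same probability space, so the two acceptance events coincide.

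First, I would make the hypotheses explicit in the notation of \Cref{def:href}. The fCOR experiment of \cite{Glabush2025VerifiableDecapsulation} samples \((ek,dk)\leftarrow\KeyGen\) and \((K,c)\leftarrow\Encaps^C(ek)\) with the honest reference line. It then runs the faulty implementation \(\mathcal S\) on the decapsulation input, with oracle access through lazily sampled random oracles. Finally, it outputs \(1\) iff the returned key equals the honestly encapsulated \(K\).

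The assumption ``same instance distribution'' identifies the joint law of \((ek,dk,c,K,B,W)\) with step~1 of \(\HRef\). The assumption ``same oracle interface'' identifies the oracle replies in step~3, including the rule that \(G_K(x^\star)\) is revealed only if \(\mathcal S\) queries it. The assumption ``same accepted key and correctness convention'' identifies the predicates \(\widehat K=K\) in steps~4--5 with the fCOR winning condition. In particular, the convention fixes whether \(\neg\Corr\) runs count as wins or are excluded; it must be the same in both experiments.

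Next, I would build a single coupled probability space. It uses one shared sample of the instance, one shared lazily sampled table for each domain-separated oracle (\(G_\theta\), \(G_K\), \(H\), \(J\)), and one shared random tape for \(\mathcal S\). Under this coupling, I would argue by induction on the number of interaction steps that the view of \(\mathcal S\) is identical in both experiments. The base case is that the initial inputs coincide by instance and view equality. The inductive step is that each oracle query receives the same reply, because the interfaces agree and the tables are shared. Since \(\mathcal S\) is a deterministic function of its tape and view, it is possibly stateful but its state evolves identically, so its output \(\widehat K\) is the same in both runs. The accept predicates are the same function of \((\widehat K,K)\), so the indicator variables coincide pointwise, and taking probabilities gives the claimed equality.

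For the closing sentence of the proposition, I would invoke \Cref{thm:code-blind} together with \Cref{lem:certified-lcup-extraction}. The event \(\Acc\wedge\neg\Bad\) is controlled by \(\mathsf{Hit}\) plus the fresh-key term. Hence the only way a list of localized suffix queries enters the pass probability is through \(\Pr[W\in Q_G(B)]\), which \Cref{lem:list-cup-single} bounds with a factor-\(q\) loss relative to single-candidate cUP.

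The main obstacle is that the source fCOR game is not restated in this paper, so the equality is only as sharp as the identification in the first step. Two mismatches need care. One is whether fCOR lets the implementation see the reference \(G_K\) table or an independent one. The other is how decryption-failure runs are scored. I would handle both by stating them as part of the ``same oracle interface'' and ``same correctness convention'' hypotheses, rather than proving them. If the source presentation differs from the split one, the domains must first be relabeled via \Cref{prop:split-transfer} before the coupling above applies.
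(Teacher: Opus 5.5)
Your proposal is correct and takes the same route as the paper. The paper notes that, under the matching hypotheses, both experiments sample the same key pair, ciphertext, hidden accepted key, and oracle replies, and both accept on the same equality event; your shared-randomness coupling and step-by-step induction on the SUT's view make that one-line identification explicit, and your appeal to the list-hit theorem and the factor-$q$ lemma for the closing list-cUP sentence is extra commentary the paper does not supply, since it treats that sentence as an interpretive remark.
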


\begin{proof}
	Under the stated matching conditions, both experiments sample the same key
	pair, ciphertext, hidden accepted key, and oracle replies, and both accept
	exactly on equality with the honest encapsulated key.  The probabilities are
	therefore identical.
\end{proof}

\subsection{Average conditional min-entropy}

\begin{definition}[Average conditional min-entropy]
	\label{def:avg-minentropy}
	Let \(W\in\mathcal W\) be the hidden confirmation witness and let \(V\) be a
	random variable with respect to which a candidate set \(Q(V)\subseteq\mathcal
	W\) is measurable.  Define
	\[
	P_{\mathsf{guess}}(W\mid V)
	=
	\mathbb E_{v\leftarrow V}
	\left[\max_{w\in\mathcal W}\Pr[W=w\mid V=v]\right].
	\]
	The average conditional min-entropy is
	\[
	\AvgHinf(W\mid V)=-\log_2 P_{\mathsf{guess}}(W\mid V).
	\]
	We say \(W\) has \((\lambda,\epsilon_{\mathsf{sm}})\)-smooth average
	conditional min-entropy given \(V\) if there is an event \(\Good\) with
	\(\Pr[\neg\Good]\le\epsilon_{\mathsf{sm}}\) such that
	\(\AvgHinf(W\mid V,\Good)\ge\lambda\).
\end{definition}

\begin{lemma}[List hit from average min-entropy]
	\label{lem:list-minentropy}
	Let \(Q(V)\subseteq\mathcal W\) be \(V\)-measurable and satisfy
	\(|Q(V)|\le q\).  If \(\AvgHinf(W\mid V)\ge\lambda\), then
	\[
	\Pr[W\in Q(V)]\le q2^{-\lambda}.
	\]
	With \((\lambda,\epsilon_{\mathsf{sm}})\)-smooth min-entropy,
	\[
	\Pr[W\in Q(V)]\le
	\epsilon_{\mathsf{sm}}+q2^{-\lambda}.
	\]
\end{lemma}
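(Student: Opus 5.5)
The plan is to condition on the view and use a union bound over the at most \(q\) candidates. For a fixed realization \(V=v\), the set \(Q(v)\) is determined because \(Q\) is \(V\)-measurable. Then
\[
\Pr[W\in Q(V)\mid V=v]=\sum_{w\in Q(v)}\Pr[W=w\mid V=v]\le |Q(v)|\max_{w\in\mathcal W}\Pr[W=w\mid V=v]\le q\max_{w}\Pr[W=w\mid V=v].
\]
Taking expectation over \(v\leftarrow V\) gives \(\Pr[W\in Q(V)]\le q\,P_{\mathsf{guess}}(W\mid V)\). By \Cref{def:avg-minentropy}, this equals \(q2^{-\AvgHinf(W\mid V)}\le q2^{-\lambda}\).

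This is the same argument as \Cref{lem:list-cup-single}, but carried out information-theoretically. Choosing a uniform element of \(Q(v)\) would give \(\frac1q\Pr[W\in Q(V)]\le P_{\mathsf{guess}}\). I will use the direct summation instead, since it avoids randomizing the predictor. If \(V\) is not discrete, I will interpret the conditional probabilities via a regular conditional distribution of the finite-support \(W\) given \(V\). I will also note that the sum over \(Q(v)\) is well defined because \(Q(v)\subseteq\mathcal W\) is finite with at most \(q\) elements.

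For the smooth version, let \(\Good\) be the event from the definition, with \(\Pr[\neg\Good]\le\epsilon_{\mathsf{sm}}\) and \(\AvgHinf(W\mid V,\Good)\ge\lambda\). The split is
\[
\Pr[W\in Q(V)]\le\Pr[\neg\Good]+\Pr[W\in Q(V)\wedge\Good]\le\epsilon_{\mathsf{sm}}+\Pr[W\in Q(V)\mid\Good].
\]
The last step uses \(\Pr[A\wedge\Good]\le\Pr[A\mid\Good]\). I will then apply the first part inside the conditional probability space given \(\Good\). In that space \(Q(V)\) remains \(V\)-measurable with \(|Q(V)|\le q\), and the guessing probability is \(P_{\mathsf{guess}}(W\mid V,\Good)\le 2^{-\lambda}\). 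This yields \(\epsilon_{\mathsf{sm}}+q2^{-\lambda}\).

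The only step needing care is the reading of ``\(\AvgHinf(W\mid V,\Good)\)''. I will take it to mean average min-entropy computed in the probability space conditioned on \(\Good\), with expectation over \(V\) given \(\Good\). Under that reading the first part applies verbatim. If \(\Pr[\Good]=0\), the bound is trivial since then \(\epsilon_{\mathsf{sm}}\ge1\).
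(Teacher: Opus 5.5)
Your proposal is correct and follows the paper's route. You condition on \(V=v\), bound \(\Pr[W\in Q(v)\mid V=v]\le |Q(v)|\max_w\Pr[W=w\mid V=v]\le q\max_w\Pr[W=w\mid V=v]\), average to get \(qP_{\mathsf{guess}}(W\mid V)\le q2^{-\lambda}\), and in the smooth case charge \(\Pr[\neg\Good]\le\epsilon_{\mathsf{sm}}\) and apply the unsmoothed bound under \(\Good\). Your explicit step \(\Pr[A\wedge\Good]\le\Pr[A\mid\Good]\) and your reading of \(\AvgHinf(W\mid V,\Good)\) in the conditioned space just spell out what the paper compresses into ``decomposes over \(\Good\)''.
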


\begin{proof}
	For the unsmoothed case,
	\[
	\Pr[W\in Q(V)]
	=
	\sum_v \Pr[V=v]\Pr[W\in Q(v)\mid V=v].
	\]
	For each \(v\),
	\[
	\Pr[W\in Q(v)\mid V=v]
	\le
	|Q(v)|\max_w\Pr[W=w\mid V=v]
	\le
	q\max_w\Pr[W=w\mid V=v].
	\]
	Summing over \(v\) yields \(qP_{\mathsf{guess}}(W\mid V)\le q2^{-\lambda}\).
	The smoothed case decomposes over \(\Good\) and charges
	\(\Pr[\neg\Good]\le\epsilon_{\mathsf{sm}}\).
\end{proof}

\begin{theorem}[Entropic honest-reference detectability]
	\label{cor:entropic}
	In the setting of \Cref{thm:code-blind}, suppose \(\mathcal S\) is
	\(q_G\)-localized and that, for a view \(V\) making \(Q_G(B)\) measurable,
	\[
	\AvgHinf(W\mid V)\ge\lambda.
	\]
	Then
	\[
	\mathsf{Pass}^{\mathsf{href}}_{\Pi^C,\mathcal H}(\mathcal S)
	\le
	\delta_{\Pi^C}
	+
	\epsilon_{\mathsf{alias}}(\mathcal S,\mathcal H)
	+
	\epsilon_{\mathsf{fresh}}(\mathcal S,\mathcal H)
	+
	q_G2^{-\lambda}
	+
	2^{-\kappa}.
	\]
	If only \((\lambda,\epsilon_{\mathsf{sm}})\)-smooth min-entropy holds, then
	\[
	\mathsf{Pass}^{\mathsf{href}}_{\Pi^C,\mathcal H}(\mathcal S)
	\le
	\delta_{\Pi^C}
	+
	\epsilon_{\mathsf{alias}}(\mathcal S,\mathcal H)
	+
	\epsilon_{\mathsf{fresh}}(\mathcal S,\mathcal H)
	+
	\epsilon_{\mathsf{sm}}
	+
	q_G2^{-\lambda}
	+
	2^{-\kappa}.
	\]
\end{theorem}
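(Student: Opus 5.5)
The plan is to compose \Cref{thm:code-blind} with \Cref{lem:list-minentropy}; nothing else is needed. First I will apply \Cref{thm:code-blind} to the given \(q_G\)-localized SUT \(\mathcal S\) and the honest-reference harness \(\mathcal H\). This yields
\[
\mathsf{Pass}^{\mathsf{href}}_{\Pi^C,\mathcal H}(\mathcal S)
\le
\delta_{\Pi^C}
+\epsilon_{\mathsf{alias}}(\mathcal S,\mathcal H)
+\epsilon_{\mathsf{fresh}}(\mathcal S,\mathcal H)
+\Pr[W\in Q_G(B)]
+2^{-\kappa}.
\]
The whole task then reduces to bounding the list-hit term \(\Pr[W\in Q_G(B)]\). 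Every other summand already appears verbatim in the target inequality.

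Next I will invoke \Cref{lem:list-minentropy} with \(Q(V):=Q_G(B)\). Two hypotheses need checking.
\begin{enumerate}[label=(\roman*),leftmargin=2em]
	\item \emph{Measurability.} The statement assumes that \(V\) makes \(Q_G(B)\) measurable. So \(Q_G(B)=Q(V)\) for some function \(Q\) of \(V\), possibly after extending \(V\) to include the SUT's internal coins. Such an extension is already part of the phrase ``a view making \(Q_G(B)\) measurable''.
	\item \emph{Size.} \(q_G\)-localization gives \(|Q(V)|\le q_G\) in every run.
\end{enumerate}
The unsmoothed half of the lemma then gives \(\Pr[W\in Q_G(B)]\le q_G2^{-\lambda}\). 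Substituting this into the display proves the first claim.

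For the smoothed claim, I will use the second half of \Cref{lem:list-minentropy}. The event \(\Good\) has \(\Pr[\neg\Good]\le\epsilon_{\mathsf{sm}}\) and \(\AvgHinf(W\mid V,\Good)\ge\lambda\). Splitting on \(\Good\) gives
\[
\Pr[W\in Q_G(B)]\le \Pr[\neg\Good]+\Pr[W\in Q(V)\wedge\Good].
\]
The second term is at most \(\Pr[W\in Q(V)\mid\Good]\le q_G2^{-\lambda}\), by applying the unsmoothed argument to the conditioned distribution. Here measurability of \(Q\) with respect to \(V\) is unaffected by conditioning. This yields \(\epsilon_{\mathsf{sm}}+q_G2^{-\lambda}\), and substituting into the \Cref{thm:code-blind} bound gives the second inequality.

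The only delicate point is the measurability step. The list \(Q_G(B)\) depends on the SUT's adaptive queries, and hence on its randomness and on earlier oracle replies. The entropy hypothesis must therefore be taken with respect to a view that really determines the list, not merely the initial input. I will state explicitly that \(V\) includes whatever coins and non-\(G_K(\langle\mathsf{good},B,\cdot\rangle)\) oracle replies fix \(Q_G(B)\), which is exactly how the hypothesis is phrased. Once that is in place, the remaining argument is bookkeeping.
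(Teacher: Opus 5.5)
Your proposal is correct and matches the paper's proof, which is exactly the composition of \Cref{thm:code-blind} with \Cref{lem:list-minentropy} under the substitution \(Q(V)=Q_G(B)\). It handles the smoothed case by the same decomposition over \(\Good\) that charges \(\Pr[\neg\Good]\le\epsilon_{\mathsf{sm}}\); your remarks on measurability and on conditioning only make explicit what the paper leaves implicit.
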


\begin{proof}
	Combine \Cref{thm:code-blind} with \Cref{lem:list-minentropy}, using
	\(Q(V)=Q_G(B)\).
\end{proof}

\subsection{Hashed diagnostic codes}

\begin{definition}[Hashed diagnostic code]
	\label{def:hashed-diagnostic}
	Let \(R\in\mathcal R\) be a hidden raw witness determined by the honest
	reference transcript.  A diagnostic confirmation code of length \(L\) is
	\[
	W=\mathsf{Trunc}_L(
	\Hcd(\langle\mathsf{cd},B,R\rangle)),
	\]
	where \(\Hcd\) is a domain-separated random oracle independent of \(G_K\).  The
	final accepted key is \(K=G_K(\langle\mathsf{good},B,W\rangle)\).  Let
	\(q_H\) be the number of \(\Hcd\)-queries made by the SUT at points of the
	form \(\langle\mathsf{cd},B,r\rangle\), and let \(q_G\) be the number of
	localized final-key suffix candidates.
\end{definition}

\begin{lemma}[Hashed-witness hit bound]
	\label{lem:hashed-witness}
	Let \(V_0\) be a view with respect to which the SUT strategy before
	\(\Hcd\)-queries is measurable.  If \(\AvgHinf(R\mid V_0)\ge\rho\), then
	\[
	\Pr[W\in Q_G(B)]
	\le
	q_H2^{-\rho}+q_G2^{-L}.
	\]
	With \((\rho,\epsilon_{\mathsf{sm}})\)-smooth min-entropy,
	\[
	\Pr[W\in Q_G(B)]
	\le
	\epsilon_{\mathsf{sm}}+q_H2^{-\rho}+q_G2^{-L}.
	\]
\end{lemma}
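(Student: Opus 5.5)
We split the list-hit event according to whether the SUT has queried the raw-witness hash at the true point \(\langle\mathsf{cd},B,R\rangle\). Let \(\mathsf{HitR}\) be the event that \(\Hcd\) was queried at \(\langle\mathsf{cd},B,R\rangle\) at some point during the run. We then write
\[
\Pr[W\in Q_G(B)]\le \Pr[\mathsf{HitR}]+\Pr[W\in Q_G(B)\wedge\neg\mathsf{HitR}]
\]
and bound the two terms separately by \(q_H2^{-\rho}\) and \(q_G2^{-L}\).

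For the first term, we use the standard lazy-sampling argument. Until \(\langle\mathsf{cd},B,R\rangle\) is queried, the \(\Hcd\) answers at the other points \(\langle\mathsf{cd},B,r\rangle\) are fresh uniform strings independent of \(R\). The same holds for the \(G_K\) answers at localized points other than \(x^\star\), provided we treat \(x^\star\) via the separate hit accounting. Hence the sequence of \(\Hcd\)-query points up to the first hit is a function of \(V_0\) together with independent randomness.

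We therefore define a guessing strategy for \(R\) given \(V_0\). Simulate the SUT, answering \(\Hcd\) and \(G_K\) with independent fresh values, and output a uniformly chosen one of the at most \(q_H\) candidates \(r\) it queries. By the measurability hypothesis on \(V_0\), this simulated run agrees with the real run up to the first hit. So \(\Pr[\mathsf{HitR}]\le q_H\,P_{\mathsf{guess}}(R\mid V_0)\le q_H2^{-\rho}\). Equivalently, one can apply \Cref{lem:list-minentropy} to the \(V_0\)-measurable list of the simulated run's \(\Hcd\)-queries.

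The delicate point is that a \(G_K\) query may itself hit \(x^\star\) before \(R\) is queried. Such a query only reveals \(K\), which is a random-oracle value independent of \(R\), so it does not break the simulation for \(R\). It does, however, require the real and simulated runs to be coupled carefully. We handle this by coupling with \(G_K\) answered honestly but independently of \(\Hcd(\langle\mathsf{cd},B,R\rangle)\).

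For the second term, condition on \(\neg\mathsf{HitR}\). In that case \(W=\mathsf{Trunc}_L(\Hcd(\langle\mathsf{cd},B,R\rangle))\) is uniform on \(\bits^L\) and independent of the entire SUT view, which includes all \(G_K\) replies. The \(G_K\) replies are independent of \(\Hcd\) by domain separation; a query at \(x^\star\) returns a uniform value that is unrelated to which \(w\) equals \(W\). The list \(Q_G(B)\) is a function of this view and has size at most \(q_G\), so the probability that it contains \(W\) is at most \(q_G2^{-L}\).

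To make this rigorous, we reprogram \(\Hcd\) at the unqueried point at the end of the run, or equivalently sample \(W\) last. Summing the two bounds gives the first claim. For the smoothed case, we intersect with \(\Good\) and charge \(\Pr[\neg\Good]\le\epsilon_{\mathsf{sm}}\), exactly as in \Cref{lem:list-minentropy}.

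We expect the main obstacle to be the first term: justifying that the \(\Hcd\) query list before the hit is \(V_0\)-measurable up to independent oracle randomness. The subtlety comes from adaptive \(G_K\) queries and the hidden evaluation \(G_K(x^\star)\).
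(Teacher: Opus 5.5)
Your proposal is correct and follows essentially the same route as the paper. You split on whether \(\Hcd\) was queried at \(\langle\mathsf{cd},B,R\rangle\), bound that event by \(q_H2^{-\rho}\) via the list-min-entropy lemma applied to \(R\), and use freshness of the unqueried \(\Hcd\) value to make \(W\) uniform on \(\bits^L\), so the \(q_G\)-list hits with probability at most \(q_G2^{-L}\); in the smooth case you charge \(\epsilon_{\mathsf{sm}}\). Your simulation and coupling argument for the adaptive \(\Hcd\)-query list, including an early \(G_K\) query at \(x^\star\), makes explicit a measurability step that the paper's proof only asserts.
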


\begin{proof}
	Let \(\mathsf{RawHit}\) be the event that the SUT queries
	\(\Hcd(\langle\mathsf{cd},B,R\rangle)\).  By
	\Cref{lem:list-minentropy} applied to the raw witness \(R\),
	\[
	\Pr[\mathsf{RawHit}]\le q_H2^{-\rho}
	\]
	in the unsmoothed case, and gains the smoothing loss in the smooth case.
	Conditioned on \(\neg\mathsf{RawHit}\), the value
	\(\Hcd(\langle\mathsf{cd},B,R\rangle)\) has not been revealed to the SUT, so
	its \(L\)-bit truncation \(W\) is uniform and independent of \(Q_G(B)\).
	Since \(|Q_G(B)|\le q_G\), the conditional hit probability is at most
	\(q_G2^{-L}\).  Combining the cases proves the lemma.
\end{proof}

\begin{theorem}[Hashed diagnostic detectability]
	\label{thm:hashed-diagnostic}
	In the setting of \Cref{def:hashed-diagnostic}, suppose that the raw witness
	has \(\AvgHinf(R\mid V_0)\ge\rho\) for a pre-\(\Hcd\)-query view \(V_0\).
	Then
	\[
	\mathsf{Pass}^{\mathsf{href}}_{\Pi^C,\mathcal H}(\mathcal S)
	\le
	\delta_{\Pi^C}
	+
	\epsilon_{\mathsf{alias}}(\mathcal S,\mathcal H)
	+
	\epsilon_{\mathsf{fresh}}(\mathcal S,\mathcal H)
	+
	q_H2^{-\rho}
	+
	q_G2^{-L}
	+
	2^{-\kappa}.
	\]
	If \(R\) only has \((\rho,\epsilon_{\mathsf{sm}})\)-smooth average conditional
	min-entropy given \(V_0\), add \(\epsilon_{\mathsf{sm}}\) to the right-hand
	side.
\end{theorem}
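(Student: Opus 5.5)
The proof will be a direct composition of two earlier results, in the same way \Cref{cor:entropic} was obtained from \Cref{lem:list-minentropy}. First, apply \Cref{thm:code-blind} to the SUT \(\mathcal S\) in the honest-reference harness. This reduces the pass probability to
\[
\delta_{\Pi^C}+\epsilon_{\mathsf{alias}}(\mathcal S,\mathcal H)+\epsilon_{\mathsf{fresh}}(\mathcal S,\mathcal H)+\Pr[W\in Q_G(B)]+2^{-\kappa}.
\]
The only remaining task is to bound the list-hit term for the diagnostic witness \(W=\mathsf{Trunc}_L(\Hcd(\langle\mathsf{cd},B,R\rangle))\). For this I will invoke \Cref{lem:hashed-witness} with the pre-\(\Hcd\)-query view \(V_0\) from the hypothesis, which gives \(\Pr[W\in Q_G(B)]\le q_H2^{-\rho}+q_G2^{-L}\). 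Substituting this bound yields the first inequality.

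For the smooth case I will use the smooth clause of \Cref{lem:hashed-witness} in the same way. It contributes the extra additive \(\epsilon_{\mathsf{sm}}\) inside the list-hit term. The other four terms of \Cref{thm:code-blind} do not depend on the entropy hypothesis and are unchanged.

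The main obstacle is checking that the hypotheses of \Cref{lem:hashed-witness} actually hold in the harness of \Cref{thm:code-blind}. Two points need care. First, \(q_H\) and \(q_G\) must be the same counts in both places: \(q_H\) counts the SUT's queries at \(\langle\mathsf{cd},B,\cdot\rangle\), and \(q_G\) bounds \(|Q_G(B)|\), so \(\mathcal S\) must be \(q_G\)-localized. Second, \(\Hcd\) is independent of \(G_K\) and domain-separated from it, so conditioning on no raw hit leaves \(W\) uniform even given the SUT's \(G_K\)-replies at points other than \(x^\star\).

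I will remark that any leakage of \(\Hcd(\langle\mathsf{cd},B,R\rangle)\) or of \(K\) through side channels is already absorbed. It either appears in \(V_0\), and so is reflected in \(\rho\), or it appears in the freshness defect. No new bad event is introduced, so the bound follows by summation.
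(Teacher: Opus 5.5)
Your proposal is correct and is exactly the paper's proof, which simply combines \Cref{thm:code-blind} with \Cref{lem:hashed-witness} (including its smooth clause). One caution about your closing remark: leakage of the value \(\Hcd(\langle\mathsf{cd},B,R\rangle)\) into \(V_0\) is not reflected in \(\rho\), because a random-oracle output by itself says essentially nothing about \(R\) while revealing \(W\) outright; it is also not a freshness defect, so it is not ``absorbed''. Such leakage is instead excluded by the setting itself, since \Cref{lem:hashed-witness} assumes that this value is revealed to the SUT only through a raw query.
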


\begin{proof}
	Combine \Cref{thm:code-blind} with \Cref{lem:hashed-witness}.
\end{proof}

\subsection{Range, tail witnesses, and \hqc{}}

\begin{lemma}[Range does not imply entropy]
	\label{lem:range-no-entropy}
	If a random variable \(C\) has support size at most \(2^L\), then
	\(\AvgHinf(C)\le L\).  However, no positive lower bound on
	\(\AvgHinf(C\mid V)\) follows from the range size alone.
\end{lemma}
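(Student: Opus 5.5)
The argument has two parts: an upper bound for an arbitrary random variable of support size at most \(2^L\), and an explicit family of examples showing that no positive lower bound follows from range size alone. Both use only the definition of \(P_{\mathsf{guess}}\) in \Cref{def:avg-minentropy}.

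\emph{Upper bound.} Let \(C\) have support \(\supp(C)\) with \(|\supp(C)|\le 2^L\). Taking \(V\) constant, \(P_{\mathsf{guess}}(C)=\max_c\Pr[C=c]\). The maximum probability is at least the average probability over the support, so
\[
P_{\mathsf{guess}}(C)\ \ge\ \frac{1}{|\supp(C)|}\sum_{c\in\supp(C)}\Pr[C=c]\ =\ \frac{1}{|\supp(C)|}\ \ge\ 2^{-L}.
\]
Taking \(-\log_2\) gives \(\AvgHinf(C)\le L\). The same argument applies pointwise to each conditional distribution \(C\mid V=v\), which also has support size at most \(2^L\). Averaging over \(v\) then gives \(\AvgHinf(C\mid V)\le L\).

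\emph{No lower bound from range alone.} I would exhibit, for every \(L\ge 1\) and every \(\lambda>0\), a pair \((C,V)\) with \(C\) ranging over \(\bits^L\) and \(\AvgHinf(C\mid V)<\lambda\).

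First, take \(C\) uniform on \(\bits^L\), so \(C\) uses its full range, and set \(V=C\). Then \(\Pr[C=c\mid V=c]=1\) for every \(c\), so \(P_{\mathsf{guess}}(C\mid V)=1\) and \(\AvgHinf(C\mid V)=0\). The range is \(2^L\), yet the conditional min-entropy is zero.

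Second, independently of any view, let \(C\) be constant on \(\bits^L\). Its declared range is still \(\bits^L\), but \(\AvgHinf(C)=0\).

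Together these show that no positive function of \(L\) can lower-bound \(\AvgHinf(C\mid V)\). The unconditional example removes the condition even when \(V\) is trivial.

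\emph{Interpretation.} The examples match the paper's setting: a retained \(L\)-bit code that the SUT view determines, for instance because \(dk\) or the transcript lets the SUT recompute it, has full range but no residual entropy. An entropy lower bound of the kind used in \Cref{cor:entropic} therefore needs a separate unpredictability hypothesis. There is no real obstacle here; the only care needed is to state the conditional upper bound pointwise in \(v\).
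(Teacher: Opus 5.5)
Your proposal is correct and follows the paper's proof. The upper bound is the same pigeonhole/averaging step giving \(\max_c\Pr[C=c]\ge 2^{-L}\), and the conditional claim uses the same choice \(V=C\), which forces \(P_{\mathsf{guess}}(C\mid V)=1\). Your pointwise extension to \(\AvgHinf(C\mid V)\le L\) and the constant-\(C\) example are valid extras that the paper does not need.
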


\begin{proof}
	Since \(|\operatorname{Supp}(C)|\le 2^L\), some value has probability at least
	\(2^{-L}\), so \(\AvgHinf(C)\le L\).  For the conditional claim, take
	\(V=C\).  Then for every \(v\), the posterior guessing probability is \(1\),
	and \(\AvgHinf(C\mid V)=0\), while the range has not changed.
\end{proof}

\begin{definition}[\hqc{} tail witness in the 2025 salted-SFO line]
	\label{def:hqc-tail}
	In the public \hqc{} KEM specification dated 2025-08-22, encryption
	forms
	\[
	u=r_1+h r_2,\qquad
	v=C.\mathsf{Encode}(m)+\mathsf{Truncate}(s r_2+e,\ell),
	\]
	and the KEM derives
	\[
	(K,\theta)=G(H(ek_{\mathsf{KEM}})\|m\|\mathsf{salt}).
	\]
	The ciphertext is \(c_{\mathsf{KEM}}=(c_{\mathsf{PKE}},\mathsf{salt})\).
	Let
	\[
	\ell_{\mathsf{tail}}=n-n_1n_2
	\]
	and define the tail witness
	\[
	T=(s r_2+e)[n_1n_2,\ldots,n-1]\in\bits^{\ell_{\mathsf{tail}}}.
	\]
	For a retained tail length \(L\le\ell_{\mathsf{tail}}\), let
	\[
	W=\mathsf{Tail}_L(T).
	\]
\end{definition}

\begin{proposition}[\hqc{} retained range budget]
	\label{prop:cdstar}
	For HQC-1, HQC-3, and HQC-5, the retained tail code
	\(W=\mathsf{Tail}_L(T)\) has range size at most \(2^L\) with
	\[
	L\le 5,\qquad L\le 11,\qquad L\le 37,
	\]
	respectively.  A one-byte tail diagnostic has retained range bits
	\[
	\min(8,\ell_{\mathsf{tail}})=5,8,8
	\]
	for HQC-1/3/5.
\end{proposition}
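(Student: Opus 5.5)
The claim is a parameter computation followed by a counting bound. I would first recall the HQC parameter sets from the August 2025 specification: $(n,n_1,n_2) = (17669, 46, 384)$ for HQC-1, $(35851, 56, 640)$ for HQC-3, and $(57637, 90, 640)$ for HQC-5. From these I compute $\ell_{\mathsf{tail}} = n - n_1 n_2$ as in \Cref{def:hqc-tail}:
\[
17669-17664=5,\qquad 35851-35840=11,\qquad 57637-57600=37 .
\]
Since $T\in\bits^{\ell_{\mathsf{tail}}}$ and \Cref{def:hqc-tail} requires $L\le\ell_{\mathsf{tail}}$, this gives the three stated ceilings $L\le 5,11,37$.

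The range bound then follows by counting. The map $W=\mathsf{Tail}_L(T)$ sends bit strings of length $\ell_{\mathsf{tail}}$ to $\bits^L$, so its image has at most $2^L$ elements whatever the distribution of $s r_2+e$. I would state explicitly that this is only a capacity statement. By \Cref{lem:range-no-entropy} it implies $\AvgHinf(W)\le L$ but gives no lower bound on entropy. The statement does not claim one, so nothing further is needed.

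For the one-byte diagnostic, \cdone{} keeps $\min(8,\ell_{\mathsf{tail}})$ tail bits, because only that many are available. Substituting the values above gives $\min(8,5)=5$, $\min(8,11)=8$ and $\min(8,37)=8$.

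The only real obstacle is fixing the parameter values, since everything else is arithmetic. I would cite them directly from \cite{HQC2025Spec}. I would also check that the truncation in $\mathsf{Truncate}(s r_2+e,\ell)$ discards exactly the coordinates $n_1n_2,\ldots,n-1$, so that the tail really has length $n-n_1n_2$ and is not some differently aligned byte range. If the specification's byte packing placed these bits differently, the definition of $T$ in \Cref{def:hqc-tail} would still fix the count, and the proposition would remain a statement about that definition.
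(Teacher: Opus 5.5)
Your proposal is correct and matches the paper's proof. Both compute \(\ell_{\mathsf{tail}}=n-n_1n_2\) from the HQC parameters (\(17669-46\cdot 384=5\), \(35851-56\cdot 640=11\), \(57637-90\cdot 640=37\)) and bound the range of any \(L\)-bit projection of \(T\in\bits^{\ell_{\mathsf{tail}}}\) by \(2^L\); your explicit evaluation of \(\min(8,\ell_{\mathsf{tail}})\) and your capacity-only caveat are harmless additions that the paper leaves implicit here (the caveat appears in \Cref{rem:range-budget-rows}).
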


\begin{proof}
	By \Cref{def:hqc-tail}, \(T\in\bits^{\ell_{\mathsf{tail}}}\).  Any retained
	\(L\)-bit projection has support size at most \(2^L\).  The public parameter
	relation is
	\[
	17669-46\cdot384=5,\qquad
	35851-56\cdot640=11,\qquad
	57637-90\cdot640=37.
	\]
\end{proof}

\begin{definition}[\hqc{} tail conditional min-entropy assumption]
	\label{def:tail-unpredictability}
	For a harness class \(\mathfrak H\), SUT class \(\mathcal C\), retained length
	\(L\), and error \(\epsilon_{\mathsf{tail}}\), define
	\[
	\TailEnt_{L,\epsilon_{\mathsf{tail}}}^{\mathfrak H,\mathcal C}
	\]
	to be the assumption that, after sampling a canonical honest-reference
	\hqc{} instance and exposing exactly the harness view \(V\) to every
	\(\mathcal S\in\mathcal C\), excluding \(W\), \(K\), and the private reference
	evaluation \(G_K(\langle\mathsf{good},B,W\rangle)\), one has
	\[
	P_{\mathsf{guess}}(W\mid V)
	\le
	2^{-L}+\epsilon_{\mathsf{tail}}.
	\]
	Equivalently,
	\[
	\AvgHinf(W\mid V)\ge
	-\log_2(2^{-L}+\epsilon_{\mathsf{tail}}).
	\]
\end{definition}

\begin{theorem}[\hqc{} tail-test detectability under TailEnt]
	\label{cor:hqc-law}
	Assume
	\(\TailEnt_{L,\epsilon_{\mathsf{tail}}}^{\mathfrak H,\mathcal C}\).  For any
	\(q_G\)-localized SUT \(\mathcal S\in\mathcal C\),
	\[
	\mathsf{Pass}^{\mathsf{href}}_{\mathsf{HQC}^C,\mathcal H}(\mathcal S)
	\le
	\delta_{\mathsf{HQC}^C}
	+
	\epsilon_{\mathsf{alias}}(\mathcal S,\mathcal H)
	+
	\epsilon_{\mathsf{fresh}}(\mathcal S,\mathcal H)
	+
	q_G(2^{-L}+\epsilon_{\mathsf{tail}})
	+
	2^{-\kappa}.
	\]
\end{theorem}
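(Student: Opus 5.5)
The plan is to obtain this statement as a direct instance of \Cref{cor:entropic}, with the entropy parameter read off from the TailEnt hypothesis.

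First, fix a \(q_G\)-localized \(\mathcal S\in\mathcal C\) and apply \Cref{thm:code-blind} to the \hqc{} augmented line. This reduces the claim to bounding the list-hit term \(\Pr[W\in Q_G(B)]\) by \(q_G(2^{-L}+\epsilon_{\mathsf{tail}})\). The other terms, \(\delta_{\mathsf{HQC}^C}\), \(\epsilon_{\mathsf{alias}}\), \(\epsilon_{\mathsf{fresh}}\) and \(2^{-\kappa}\), carry over unchanged.

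Second, bound the list-hit term with \Cref{lem:list-minentropy}, taking \(V\) to be the harness view in \Cref{def:tail-unpredictability}. Instead of converting to \(\lambda=-\log_2(2^{-L}+\epsilon_{\mathsf{tail}})\) and back, I will rerun the one-line argument of that lemma directly with \(P_{\mathsf{guess}}\). For each \(v\),
\[
\Pr[W\in Q(v)\mid V=v]\le q_G\max_w\Pr[W=w\mid V=v].
\]
Averaging over \(v\) gives \(q_G\,P_{\mathsf{guess}}(W\mid V)\), which TailEnt bounds by \(q_G(2^{-L}+\epsilon_{\mathsf{tail}})\). Substituting this into \Cref{thm:code-blind} gives the displayed inequality.

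The main obstacle is the measurability hypothesis: \(Q_G(B)\) must be a function of the view \(V\), possibly together with randomness independent of \(W\). TailEnt excludes from \(V\) only \(W\), \(K\) and the private reference evaluation. However, \(\mathcal S\) may be adaptive, and its later localized queries depend on the replies to earlier localized queries \(G_K(\langle\mathsf{good},B,w\rangle)\). I will handle this with the usual ROM hybrid. Consider the run up to the first localized query whose suffix equals \(W\). Up to that point every localized reply is at a point distinct from \(x^\star\), because encodings are canonical and the no-alias case is already charged to \(\epsilon_{\mathsf{alias}}\). Each such reply is therefore a fresh uniform string independent of \(W\).

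Accordingly, I will simulate all localized replies with an independent table. This leaves the event \(\mathsf{Hit}\) unchanged, since the run coincides with the real one until the hit. In the simulated run the list is a function of \(V\) and independent coins. The coins can be absorbed by fixing them, or by noting that averaging preserves the \(P_{\mathsf{guess}}\) bound. Once this point is settled, the rest is bookkeeping, and the class version follows by taking suprema as in \Cref{thm:code-blind}.
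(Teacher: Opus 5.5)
This is the paper's proof: \Cref{thm:code-blind} plus the \(P_{\mathsf{guess}}\) form of the argument in \Cref{lem:list-minentropy}, with TailEnt supplying \(P_{\mathsf{guess}}(W\mid V)\le 2^{-L}+\epsilon_{\mathsf{tail}}\). The paper assumes without comment that \(Q_G(B)\) is measurable with respect to the TailEnt view, as in \Cref{cor:entropic}, so your blinding hybrid for adaptive localized queries adds rigor the paper omits rather than changing the route; the blinded replies and the SUT's coins should be counted inside \(V\), because averaging only preserves the \(P_{\mathsf{guess}}\) bound for randomness independent of \((W,V)\).
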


\begin{proof}
	The TailEnt assumption implies
	\(P_{\mathsf{guess}}(W\mid V)\le2^{-L}+\epsilon_{\mathsf{tail}}\).  The proof
	of \Cref{lem:list-minentropy} yields
	\[
	\Pr[W\in Q_G(B)]
	\le
	q_G(2^{-L}+\epsilon_{\mathsf{tail}}).
	\]
	Apply \Cref{thm:code-blind}.
\end{proof}

\begin{proposition}[Necessity of TailEnt for tail-code detectability]
	\label{prop:tailent-necessary}
	Fix an \hqc{} tail-code harness with zero correctness error, no aliasing, and
	a secret-separating final-key point.  Suppose there is a measurable predictor
	\(A(V)\) for the retained tail code such that
	\[
	\Pr[A(V)=W]=\alpha.
	\]
	Then there exists a \(1\)-localized SUT \(\mathcal S_A\) satisfying
	\[
	\mathsf{Pass}^{\mathsf{href}}_{\mathsf{HQC}^C,\mathcal H}(\mathcal S_A)
	=
	\alpha+(1-\alpha)2^{-\kappa}.
	\]
	Thus any claimed bound below this quantity must rule out such predictors.
	The assumption
	\(\TailEnt_{L,\epsilon_{\mathsf{tail}}}^{\mathfrak H,\mathcal C}\) does
	exactly this by imposing
	\(\alpha\le 2^{-L}+\epsilon_{\mathsf{tail}}\) for every single-candidate
	predictor in the modeled view.
\end{proposition}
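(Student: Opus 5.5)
This is a direct specialization of \Cref{thm:hidden-point-optimality} to the \hqc{} tail-code line, so the plan is to check that the hypotheses transfer and then invoke that theorem's construction.

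First, I verify that the \hqc{} tail-code harness is a canonical honest-reference line in the sense of \Cref{def:canonical-instance}. The transcript prefix is \(B=H(ek_{\mathsf{KEM}})\|m\|\mathsf{salt}\), the witness is \(W=\mathsf{Tail}_L(T)\) from \Cref{def:hqc-tail}, and the accepted key is \(K=G_K(\langle\mathsf{good},B,W\rangle)\). The stated hypotheses are exactly those required by \Cref{thm:hidden-point-optimality}: zero correctness error, no aliasing, and a secret-separating harness.

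Second, I take \(V\) to be the harness view before any localized final-key query and define \(\mathcal S_A\). It computes \(w=A(V)\), queries \(G_K(\langle\mathsf{good},B,w\rangle)\), and outputs the reply. This SUT is \(1\)-localized by construction. I split into two cases.

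\begin{itemize}
\item On the event \(A(V)=W\), the query equals \(x^\star\), so the output is \(K\) and the run accepts with probability one.
\item On the complementary event, canonical injectivity of the encoding and the absence of aliases make the queried point distinct from \(x^\star\). Secret separation (\(\epsilon_{\mathsf{fresh}}=0\)) then makes \(K\) uniform given the full view, including the returned value. The acceptance probability is therefore exactly \(2^{-\kappa}\).
\end{itemize}

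Averaging over the two cases gives \(\alpha+(1-\alpha)2^{-\kappa}\).

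The step needing the most care is the word ``exactly'' in the second case. The definition of \(\epsilon_{\mathsf{fresh}}\) only bounds the maximal posterior guessing probability by \(2^{-\kappa}\). I need the stronger statement that the fixed output \(\widehat K\) matches \(K\) with probability exactly \(2^{-\kappa}\).

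\begin{itemize}
\item The route I would try first is to argue directly from the lazy-sampled random oracle. When \(x^\star\) is not queried, \(G_K(x^\star)\) is a fresh uniform string independent of everything the SUT sees. So the probability that it equals any view-measurable value is exactly \(2^{-\kappa}\), and the max-posterior bound becomes an equality.
\item If a harness could leak partial information about \(K\) while keeping the posterior maximum at \(2^{-\kappa}\), that would be a contradiction: a maximum of \(2^{-\kappa}\) over \(2^\kappa\) values forces the posterior to be uniform. This closes the gap.
\end{itemize}

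The concluding remarks follow directly. Any upper bound on \(\mathsf{Pass}^{\mathsf{href}}\) below \(\alpha+(1-\alpha)2^{-\kappa}\) is contradicted by \(\mathcal S_A\), so such a bound must exclude predictors of this form. Under \(\TailEnt_{L,\epsilon_{\mathsf{tail}}}^{\mathfrak H,\mathcal C}\), every single-candidate predictor satisfies \(\Pr[A(V)=W]\le P_{\mathsf{guess}}(W\mid V)\le 2^{-L}+\epsilon_{\mathsf{tail}}\). This holds because \(P_{\mathsf{guess}}\) is the supremum over \(V\)-measurable guesses, which caps \(\alpha\) as claimed.
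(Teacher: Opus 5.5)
Your proposal is correct and follows the paper's proof: you build the same one-query SUT \(\mathcal S_A\), split on whether \(A(V)=W\) (which is just \Cref{thm:hidden-point-optimality} specialized to the \hqc{} tail witness), and observe that TailEnt caps every single-candidate predictor at \(P_{\mathsf{guess}}(W\mid V)\le 2^{-L}+\epsilon_{\mathsf{tail}}\). Your extra step for the exact \(2^{-\kappa}\) fills in a point the paper asserts without comment: \(\epsilon_{\mathsf{fresh}}=0\) forces the average, and hence almost every, posterior maximum down to \(2^{-\kappa}\), so the posterior of \(K\) is uniform.
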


\begin{proof}
	The SUT queries \(G_K(\langle\mathsf{good},B,A(V)\rangle)\) and outputs the
	reply.  If \(A(V)=W\), the queried point is the hidden reference point.
	Otherwise it is a distinct canonical oracle point and the returned value
	equals the hidden key only with probability \(2^{-\kappa}\).  Averaging over
	the predictor's success event yields the equality.  The last sentence is the
	single-candidate case of the TailEnt guessing bound.
\end{proof}

\begin{remark}[Reading range-budget rows]
	\label{rem:hqc-range-reading}
	\label{rem:range-budget-rows}
	A retained range of \(L\) bits says only that the code can contain at most
	\(L\) bits of uncertainty.  It implies no lower bound on the posterior
	uncertainty left to a SUT.  The unconditional \hqc{} statement in this paper is
	therefore the range budget in \Cref{prop:cdstar}.  The conditional
	detectability statement is \Cref{cor:hqc-law}, which additionally assumes
	TailEnt.  By \Cref{prop:tailent-necessary}, this split is necessary: if the
	SUT view admits a predictor for the retained tail code, then a one-query SUT
	can pass at that prediction rate up to the final-key coincidence term.
\end{remark}

\subsection{Honest-self, support, and repetition}

\begin{theorem}[Symmetric code erasure passes self-tests and fails reference tests]
	\label{thm:self-vs-reference}
	Let \(\Pi^{-W}\) be the endpoint obtained from \(\Pi^C\) by replacing both
	encapsulation and decapsulation final-key derivation
	\[
	G_K(\langle\mathsf{good},B,W\rangle)
	\]
	with
	\[
	G_K(\langle\mathsf{good},B\rangle).
	\]
	Assume domain separation makes these two oracle inputs distinct for all legal
	\(W\), and assume no oracle aliasing.  Then
	\[
	\mathsf{Pass}^{\mathsf{hself}}_{\Pi^{-W},\mathcal H}(\Pi^{-W})
	\ge
	1-\delta_{\Pi^{-W}},
	\]
	while
	\[
	\mathsf{Pass}^{\mathsf{href}}_{\Pi^C,\mathcal H}(\Pi^{-W})
	\le
	\delta_{\Pi^C}+2^{-\kappa}.
	\]
\end{theorem}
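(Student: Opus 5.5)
The two claims are separated cleanly by the harness, so I would prove them independently.

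\textbf{Honest-self claim.} In \(\HSelf_{\Pi^{-W},\mathcal H}(\Pi^{-W})\), the key pair, the ciphertext and the reference key are all produced by \(\Pi^{-W}\) itself. The encapsulated key is therefore \(K'=G_K(\langle\mathsf{good},B\rangle)\). The decapsulator recomputes \(B'\) from \(m'\) and the salt and returns \(G_K(\langle\mathsf{good},B'\rangle)\) on the accept branch. Whenever the endpoint's own correctness event holds, its output equals \(K'\). Since \(\delta_{\Pi^{-W}}\) is by definition the probability that this event fails, we get \(\Pass^{\mathsf{hself}}\ge 1-\delta_{\Pi^{-W}}\). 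No oracle reasoning is needed beyond the fact that both sides query the same deterministic oracle point.

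\textbf{Honest-reference claim.} I would apply \Cref{thm:code-blind} with \(\mathcal S=\Pi^{-W}\) and then show that three of its terms vanish.

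\emph{List-hit term.} This is \(\Pr[W\in Q_G(B)]\). Every \(G_K\) query made by \(\Pi^{-W}\) has the form \(\langle\mathsf{good},B'\rangle\). By the assumed domain separation, no such query equals \(\langle\mathsf{good},B,w\rangle\) for any legal \(w\). Hence \(Q_G(B)=\emptyset\) in every run, and the term is \(0\).

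\emph{Aliasing term.} The hypothesis of no oracle aliasing gives \(\epsilon_{\mathsf{alias}}=0\).

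\emph{Freshness term.} The view \(V\) of \(\Pi^{-W}\) consists of its inputs \((dk,c,ek,\mathsf{aux})\), its \(G_\theta\), \(H\) and \(J\) replies, and \(G_K\) replies at points \(\langle\mathsf{good},\cdot\rangle\). All of these are distinct from \(x^\star\). Under lazy sampling, \(G_K(x^\star)\) is therefore uniform and independent of \(V\), provided the harness withholds \(K\) as \Cref{def:canonical-instance} requires. So the posterior guessing probability is exactly \(2^{-\kappa}\), and \(\epsilon_{\mathsf{fresh}}=0\). The bound then collapses to \(\delta_{\Pi^C}+2^{-\kappa}\).

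An alternative is to argue the reference claim directly, without \Cref{thm:code-blind}. One conditions on \(\neg\Corr\), which costs at most \(\delta_{\Pi^C}\), and otherwise the output is some \(G_K\) value at a point different from \(x^\star\), or \(K_-\) from \(J\), both independent of \(K\). Either route works; I would present the direct version as a check on the first.

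\textbf{Main obstacle.} The delicate step is the freshness argument, namely that the SUT's output is independent of \(K\) on the rejection branch and on every \(G_K\) answer. The reject key comes from the separate oracle \(J\), and all of the SUT's \(G_K\) points lie in the \(\langle\mathsf{good},B\rangle\) domain, which is disjoint from \(x^\star\). Since the harness is secret-separating by construction, nothing in the view depends on \(G_K(x^\star)\). The remaining care is purely notational: the stated reference bound charges only \(\delta_{\Pi^C}\), even though \(\Pi^{-W}\) might reject an honest reference ciphertext. A rejection only lowers the pass probability, because a rejected run outputs \(K_-\), which is independent of \(K\) and matches it with probability at most \(2^{-\kappa}\), already counted in the final term.
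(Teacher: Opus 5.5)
Your proposal is correct and follows the paper's proof essentially verbatim. The honest-self claim is the endpoint's own correctness event, and the reference bound is \Cref{thm:code-blind} with $\Pr[\Hit]=0$ (via $Q_G(B)=\emptyset$ by domain separation), $\epsilon_{\mathsf{alias}}=0$ and $\epsilon_{\mathsf{fresh}}=0$. One refinement is worth making: drop the conditioning on $\Corr$ in your direct check. Because $\Pi^{-W}$ never queries any point $\langle\mathsf{good},B,w\rangle$, its output is unconditionally independent of $K$, so $\mathsf{Pass}^{\mathsf{href}}_{\Pi^C,\mathcal H}(\Pi^{-W})=2^{-\kappa}$ exactly. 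Conditioning on honest correctness, as in $\epsilon_{\mathsf{fresh}}$, can instead tie $K$ to the shared $J$ reply on reencryption failures, so that term is only negligibly, not exactly, zero.
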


\begin{proof}
	For honest-self testing, both sides use \(\Pi^{-W}\).  Except on the endpoint's
	own correctness failure, encapsulation and decapsulation both use
	\(G_K(\langle\mathsf{good},B\rangle)\).  For honest-reference testing, the
	hidden target is \(G_K(\langle\mathsf{good},B,W\rangle)\), while the SUT
	outputs \(G_K(\langle\mathsf{good},B\rangle)\).  These are distinct oracle
	inputs and the erased-code endpoint never queries the hidden point.  Apply
	\Cref{thm:code-blind} with \(\Pr[\mathsf{Hit}]=0\),
	\(\epsilon_{\mathsf{alias}}=0\), and \(\epsilon_{\mathsf{fresh}}=0\).
\end{proof}

\begin{proposition}[Support equivalence preserves pass probability]
	\label{thm:support-equivalence}
	If \(\mathcal S_0\) and \(\mathcal S_1\) are \(\Delta\)-support-equivalent on
	the honest-reference harness distribution \(\mathcal D_{\mathsf{href}}\), then
	\[
	\left|
	\mathsf{Pass}^{\mathsf{href}}_{\Pi^C,\mathcal H}(\mathcal S_0)
	-
	\mathsf{Pass}^{\mathsf{href}}_{\Pi^C,\mathcal H}(\mathcal S_1)
	\right|
	\le
	\Delta.
	\]
\end{proposition}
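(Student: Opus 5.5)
The plan is to express the acceptance event as a function of the pair $(z,\mathcal S_b(z))$ together with the hidden reference key, and then apply the data-processing property of total variation distance. In $\HRef_{\Pi^C,\mathcal H}$, the harness samples $\Inst$ and hands the SUT the input $z=(ek,dk,c,\mathsf{aux})$. We will read support equivalence on $\mathcal D_{\mathsf{href}}$ as a statement about the SUT's input--output behaviour. Concretely, "output" is taken to mean the returned key $\widehat K$ together with the SUT's oracle interaction, coupled with the harness's random-oracle tables. The verdict is $[\widehat K=K]$ with $K=G_K(x^\star)$.

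\textbf{Step 1.} Write the pass probability as
\[
\mathsf{Pass}^{\mathsf{href}}_{\Pi^C,\mathcal H}(\mathcal S_b)=\mathbb E\bigl[f(z,\mathcal S_b(z))\bigr],
\]
where $f$ is a $[0,1]$-valued function. To define $f$, first sample the hidden part $(K,B,W)$ from its conditional law given $z$ and the SUT's oracle transcript. Then set $f$ to the indicator that $\widehat K=K$, averaged over that conditional sampling. This step requires the hidden data to be generated by the same kernel for both $b=0,1$. That holds because the reference side of the harness does not depend on which SUT is plugged in. The sole channel linking the SUT to $K$ is the shared oracle $G_K$, and it is handled by including the oracle replies in the SUT's output, exactly as \Cref{def:support-equivalent} treats $\mathcal S_b(z)$ as the full randomized behaviour.

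\textbf{Step 2.} Bound the difference of the two expectations. For any $[0,1]$-valued $f$ and any two distributions $P_0,P_1$,
\[
\bigl|\mathbb E_{P_0}f-\mathbb E_{P_1}f\bigr|\le\Delta_{\mathsf{TV}}(P_0,P_1).
\]
Applying this to the two joint laws of $(z,\mathcal S_b(z))$ and invoking the hypothesis that $\Delta_{\mathsf{TV}}\le\Delta$ gives the claim directly. The randomized, test-function form of this inequality covers the conditional sampling of $K$ introduced in Step 1.

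\textbf{Main obstacle.} The hard part is Step 1: justifying that the acceptance event is a post-processing of $(z,\mathcal S_b(z))$ by a kernel that does not depend on $b$. $K$ is not a function of $z$ alone, since it depends on the oracle value $G_K(x^\star)$, and that value could be correlated with the SUT's queries. I would handle this by coupling the random-oracle tables as part of the harness randomness and regarding $\mathcal S_b(z)$ as including the transcript of the SUT's oracle interaction, which is consistent with the definition's "randomized SUT" behaviour. Under that reading, $K$ is obtained from the transcript and the independent remaining harness randomness through the same map in both experiments, and data processing applies. If one instead insists that $\mathcal S_b(z)$ denotes only $\widehat K$, the argument still works whenever the harness's hidden state is conditionally independent of $\widehat K$ given $z$, which is the secret-separating situation. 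I would state this reading explicitly.
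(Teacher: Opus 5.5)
This is the paper's argument: acceptance is an event (for you, a $b$-independent randomized test) on the joint law of harness data and SUT output, and total variation bounds its probability gap. It is correct under your reading that $\mathcal S_b(z)$ includes the oracle transcript. The paper's one-line proof leaves that reading tacit but genuinely needs it: in the ROM with no auxiliary data, the honest decapsulator and a SUT outputting a fresh uniform $\kappa$-bit string have the same law of $(z,\widehat K)$, yet they pass with probabilities $1-\delta_{\Pi^C}$ and $2^{-\kappa}$.

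The one inaccuracy is in your fallback. Conditional independence of $K$ and $\widehat K$ given $z$ is not secret separation, because $\epsilon_{\mathsf{fresh}}$ only constrains $K$ given the view on the event $\mathcal E$, where in particular $x^\star$ is not queried. That independence also fails for the honest decapsulator itself, so the fallback only covers SUTs whose output says nothing about $K$ beyond $z$.
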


\begin{proof}
	Acceptance is an event on the joint distribution of the harness input and SUT
	output.  Total variation distance bounds the difference in probability of any
	event.
\end{proof}

\begin{corollary}[Deterministic support coincidence]
	\label{thm:support-coincidence}
	If two deterministic SUTs output the same value on every input in the support
	of \(\mathcal D_{\mathsf{href}}\), then their honest-reference pass
	probabilities are equal.
\end{corollary}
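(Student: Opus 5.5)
The plan is to obtain the corollary as the case \(\Delta=0\) of \Cref{thm:support-equivalence}. Write \(\mathcal S_0,\mathcal S_1\) for the two deterministic SUTs. We show that they are \(0\)-support-equivalent on \(\mathcal D_{\mathsf{href}}\), which means the joint pairs \((z,\mathcal S_0(z))\) and \((z,\mathcal S_1(z))\) with \(z\leftarrow\mathcal D_{\mathsf{href}}\) have identical distributions. Since each SUT is deterministic, \(\mathcal S_i(z)\) is a fixed function of \(z\). The hypothesis gives \(\mathcal S_0(z)=\mathcal S_1(z)\) for every \(z\) in the support. Hence the two joint random variables agree pointwise on a probability-one set, and their total variation distance is \(0\).

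Next, apply \Cref{thm:support-equivalence} with \(\Delta=0\). This gives \(|\mathsf{Pass}^{\mathsf{href}}(\mathcal S_0)-\mathsf{Pass}^{\mathsf{href}}(\mathcal S_1)|\le 0\), which is the claimed equality. Alternatively, one can argue directly: the acceptance event \([\widehat K=K]\) is a function of the harness sample and the SUT output, so equal outputs on the support give equal acceptance indicators almost surely.

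The one point needing care is what counts as an "input". In \Cref{def:href} the SUT also sees oracle replies, and the reference key \(K\) is a random-oracle value rather than part of \(z=(ek,dk,c,\mathsf{aux})\). I will read "input" as the complete sampled harness object, including the oracle table (equivalently, \(K\) as a function of the sampled instance). I will also read "deterministic" as meaning that the output is a fixed function of that object. Under this reading, two SUTs that agree on every supported input make the same oracle queries and return the same output. The events \([\mathcal S_0\text{ outputs }K]\) and \([\mathcal S_1\text{ outputs }K]\) then coincide as events on the common sample space, and the corollary follows with no probabilistic estimate needed. I expect fixing this reading to be the only nontrivial step; the rest is immediate.
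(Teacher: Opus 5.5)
Your proposal is correct and follows the paper's route: the paper gives no separate proof and states the corollary as the \(\Delta=0\) case of \Cref{thm:support-equivalence}, which is how you derive it. Your reading that an ``input'' must include the sampled oracle table (so that \(K\) and the acceptance event are determined) is sensible and is left implicit in the paper; one small correction is that equal outputs do not force identical oracle query patterns, but your argument never relies on that claim.
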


\begin{corollary}[Decision-only faults on honest support]
	\label{cor:decision-only}
	If a decision-only fault agrees with correct decapsulation on every honestly
	generated reference ciphertext, then it has the same honest-reference pass
	probability as the correct line.  Malformed or differential harnesses sample
	different supports and may expose the fault.
\end{corollary}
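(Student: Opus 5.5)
The plan is to obtain \Cref{cor:decision-only} as a direct specialization of \Cref{thm:support-coincidence}, or in the randomized case of \Cref{thm:support-equivalence} with \(\Delta=0\). Let \(\mathcal S_{\mathsf{ref}}\) denote the correct decapsulation \(\Decaps^C\) of the canonical line, and let \(\mathcal S_{\mathsf{dec}}\) denote the decision-only mutant. Decision-only faults, in the sense of \Cref{tab:fault-classes}, are faults such as always selecting \(K_{+}\), ignoring rejection, or comparing only one component. Such a fault leaves the recomputation of \(B'\), \(W'\) and \(K_{+}'\) unchanged and alters only the branch taken by \(\ctsel\).

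First, I will identify the support of the honest-reference input distribution \(\mathcal D_{\mathsf{href}}\). By \Cref{def:href}, the SUT receives \(z=(dk,c,ek,\mathsf{aux})\), where \(c\) is produced by \(\Encaps^C(ek)\) under the reference line. Every ciphertext in the support is therefore honestly generated. The hypothesis of the corollary then says that \(\mathcal S_{\mathsf{dec}}(z)=\mathcal S_{\mathsf{ref}}(z)\) for every \(z\) in this support.

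Next, I will check that this pointwise agreement covers the joint behavior with the random oracles. Both programs see the same \(z\) and the same oracle table in a run, and the reference key \(K\) is fixed by the instance, not by the SUT. So on every realized run the pair \((z,\widehat K)\) coincides for the two SUTs. The acceptance event \([\widehat K=K]\) is a function of the instance and this output, so its probability is the same for both. Formally, \((z,\mathcal S_{\mathsf{dec}}(z))\) and \((z,\mathcal S_{\mathsf{ref}}(z))\) have total variation distance \(0\). \Cref{thm:support-equivalence} with \(\Delta=0\) then gives equal pass probabilities. In the deterministic case this is exactly \Cref{thm:support-coincidence}.

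The step needing care is fixing the meaning of ``agrees on every honestly generated reference ciphertext''. If the agreement is stated only as equality of output distributions given \(z\), while the oracles are shared with the instance, I would strengthen it to agreement of the output as a function of \((z,\text{oracle table},\text{coins})\). A marginal equality that ignored the correlation with \(K\) would not suffice. Under that reading the coupling argument goes through.

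The second sentence of the corollary is a remark about scope rather than a claim to prove. The harnesses \(\Tct^\mu\) and \(\Tdiff^\mu\) draw inputs outside the honest support, so the hypothesis says nothing about them. I would support this by pointing to the ``always select \(K_{+}\)'' mutant on a malformed ciphertext: the correct line returns \(K_{-}\), while the mutant returns \(K_{+}'\), so the two outputs differ there.
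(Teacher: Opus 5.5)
Your proposal is correct and takes the same route as the paper, which gives no separate proof and treats this corollary as an immediate specialization of \Cref{thm:support-coincidence} (equivalently \Cref{thm:support-equivalence} with \(\Delta=0\)), reading the second sentence as a scope remark. Your insistence on pointwise agreement as a function of \((z,\text{oracle table},\text{coins})\) is the right reading and is actually needed: total variation of \((z,\mathcal S(z))\) alone does not capture the correlation of \(\widehat K\) with \(K=G_K(x^\star)\), so the run-by-run coupling you describe is what makes the two acceptance indicators coincide.
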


\begin{theorem}[Reset amplification]
	\label{rem:repeated}
	In a \(t\)-fold reset-product harness, the instances, oracle namespaces, SUT
	state, and SUT coins of the individual runs are independent after reset.  If
	run \(i\), under its own marginal reset distribution, satisfies
	\(\Pr[\Acc_i]\le p_i\), then
	\[
	\Pr[\Acc_1\wedge\cdots\wedge\Acc_t]
	\le
	\prod_{i=1}^t p_i.
	\]
	If \(p_i\le p\) for all \(i\), the all-pass probability is at most \(p^t\).
	Under the smooth entropic bound with uniform parameters, one may take
	\[
	p=
	\delta_{\Pi^C}
	+
	\epsilon_{\mathsf{alias}}
	+
	\epsilon_{\mathsf{fresh}}
	+
	\epsilon_{\mathsf{sm}}
	+
	q_G2^{-\lambda}
	+
	2^{-\kappa}.
	\]
\end{theorem}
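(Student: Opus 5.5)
The argument rests on the independence structure of the reset-product harness in \Cref{def:reset-product}. I will write the full experiment as a product probability space. Run \(i\) is driven by its instance \(\Inst_i\), its oracle namespace \(G_K^{(i)}\) together with the other domain-separated oracles used in that run, and the SUT coins \(r_i\). The SUT state is reset at the start of each run. So the whole trace of run \(i\), and in particular \(\Acc_i\), is a measurable function \(f_i(\omega_i)\) of the run-local randomness \(\omega_i=(\Inst_i,G_K^{(i)},r_i)\) alone.

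\textbf{Step 1: independence of the runs.} The \(\omega_i\) are mutually independent by construction. The point to check is that no information from earlier runs leaks into run \(i\). Because of the reset, the SUT's behaviour in run \(i\) depends only on \(\omega_i\). Because the namespaces are independent, any oracle tables consulted in run \(i\) are independent of those in run \(j\). I expect this to be the main obstacle. It is a modeling check rather than a calculation: one must argue that the declared reset removes every shared cache or oracle table, which is exactly the kind of state condition (C5) worries about. If shared global oracles (e.g.\ \(G_\theta\) or \(H\)) are not namespaced, I will require them to be domain-separated per run as part of the hypothesis. Once this holds, the events \(\Acc_1,\ldots,\Acc_t\) are functions of independent random variables and are therefore mutually independent.

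\textbf{Step 2: multiplying the bounds.} Independence gives
\[
\Pr\Bigl[\textstyle\bigwedge_i\Acc_i\Bigr]=\prod_{i=1}^t\Pr[\Acc_i].
\]
The marginal law of run \(i\) inside the product experiment coincides with the single-run reset distribution. Hence \(\Pr[\Acc_i]\le p_i\), and the product bound follows. Monotonicity then gives \(\prod_i p_i\le p^t\) when every \(p_i\le p\).

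\textbf{Step 3: the explicit value of \(p\).} For the concrete choice of \(p\), each marginal run is exactly a single honest-reference experiment \(\HRef_{\Pi^C,\mathcal H}(\mathcal S)\), only with the renamed oracle \(G_K^{(i)}\). The renaming is an injective relabeling, so the parameters carry over unchanged. I apply \Cref{thm:code-blind} to each run. I then bound its list-hit term by the smooth case of \Cref{lem:list-minentropy}, taking \(Q(V)=Q_G(B)\) with \(q_G\)-localization, which is precisely the smooth form of \Cref{cor:entropic}. Under uniform parameters this gives \(p_i\le p\) with the stated expression, and Step 2 finishes the proof.
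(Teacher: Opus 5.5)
Your proposal is correct and follows the paper's argument: the paper likewise observes that the reset, the independent instances, and the independent oracle namespaces make the run distribution factor, so the all-pass probability is the product of the per-run acceptance probabilities. Your Step 3 obtains the explicit \(p\) by applying the smooth case of \Cref{cor:entropic} to each run, which only spells out what the paper's two-line proof leaves implicit.
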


\begin{proof}
	The reset-product harness resets the SUT and uses independent instances with
	independent oracle namespaces, so the run distribution factors.  The product
	bound follows immediately.
\end{proof}

\begin{remark}[Non-reset campaigns]
	For a non-reset stateful campaign,
	\[
	\Pr[\text{all accept}]
	=
	\prod_{i=1}^t
	\Pr[\Acc_i\mid \Acc_1,\ldots,\Acc_{i-1}].
	\]
	Exponentiation is valid only if the per-round hypotheses, including entropy
	and query localization, hold under this conditioning.
\end{remark}

\subsection{Dependency-cone lower bound for honest-reference testing}
\label{sec:dependency-cone-lower-bound}

The preceding theorems show how a confirmation-code harness can prove a
positive statement once the hidden witness is unpredictable or cUP-secure.  We
now prove the matching negative statement: over honest ciphertext support, a
black-box honest-reference harness cannot certify execution of operations that
are outside the dependency cone exposed by the confirmation witness and the
harness observation, relative to implementation classes that contain the
corresponding erasure faults.

\begin{definition}[Intervened trace]
	\label{def:intervened-trace}
	Fix an implementation \(\mathcal I\), an input \(z\), implementation coins
	\(\omega\), and a lazy-sampled oracle table \(\mathcal R\).  Let
	\[
	\Trace_{\mathcal I}(z,\omega,\mathcal R)
	=
	(\mathcal O,\prec,\mathsf{val})
	\]
	be the deterministic trace DAG induced by that run.  For a set
	\(U\subseteq\mathcal O\) and a replacement vector
	\[
	a_U\in\prod_{u\in U}\mathsf{Dom}(u),
	\]
	the intervened trace
	\[
	\Trace_{\mathcal I}(z,\omega,\mathcal R)[U\leftarrow a_U]
	\]
	is obtained by replacing the value of every node in \(U\) by the corresponding
	entry of \(a_U\), then recomputing all descendants in topological order using
	the same local transition functions and the same oracle table
	\(\mathcal R\).  Non-descendants are unchanged.
\end{definition}

\begin{remark}[Trace abstraction]
	\label{rem:trace-abstraction}
	The trace model is a typed, total abstraction of the black-box execution being
	certified.  Control-flow predicates, branch selectors, parsing decisions, and
	error states that can affect the black-box target are represented as trace
	nodes or as components of the target vector.  Dummy values used in an
	intervention must be valid inhabitants of the declared node domains.  A crash,
	undefined behavior, or changed public output is covered only if it appears in
	the declared black-box observation; timing, memory layout, debug output, and
	instrumentation state are outside the theorem unless the harness explicitly
	adds them to the target.
\end{remark}

\begin{definition}[Confirmation-observable target vector]
	\label{def:confirmation-observable-target}
	Let \(\mathcal H_{\mathsf{hon}}\) be an honest-reference harness whose input
	distribution \(\mathcal D_{\mathsf{hon}}\) is supported on honestly generated
	ciphertexts.  The confirmation-observable target vector
	\[
	\Tgt_{\mathsf{cc},\mathcal H}(\Trace)
	\]
	contains exactly the trace values through which the black-box observation of
	\(\mathcal H\) factors on honest support:
	\begin{enumerate}[leftmargin=*]
		\item the localized prefix \(B\), the retained witness \(W\), and the canonical
		final-key point \(x^\star=\langle\mathsf{good},B,W\rangle\);
		\item all final-key-oracle inputs and declared oracle-query inputs that the
		harness is allowed to observe or emulate;
		\item all declared public outputs of the SUT interface; and
		\item the branch selector restricted to honest support, after quotienting out
		support-constant predicates.
	\end{enumerate}
	The harness is \emph{black-box over honest support} if, for any two coupled
	runs with the same value of \(\Tgt_{\mathsf{cc},\mathcal H}\), the complete
	harness observation has the same distribution.  The complete observation is
	written
	\[
	\Obs_{\mathcal H}(\mathcal I;z,\omega,\mathcal R),
	\]
	and includes the visible input/output transcript, the declared oracle query
	log, and the verifier's accept bit, but not undeclared internal trace nodes,
	timing, memory layout, debug output, or instrumentation state.
\end{definition}

\begin{definition}[Support-active dependency cone]
	\label{def:support-active-cone}
	Let \(\mathcal D\) be a distribution over inputs and let \(\Tgt\) be a target
	vector on traces.  A set \(U\subseteq\mathcal O\) is inactive for \(\Tgt\) on
	\(\mathcal D\), written
	\[
	U\perp_{\mathcal D}\Tgt,
	\]
	if for every \(z\in\supp(\mathcal D)\), every implementation coin string
	\(\omega\), every oracle table \(\mathcal R\), and every replacement vector
	\(a_U\),
	\[
	\Tgt\!
	\left(
	\Trace_{\mathcal I}(z,\omega,\mathcal R)[U\leftarrow a_U]
	\right)
	=
	\Tgt\!
	\left(
	\Trace_{\mathcal I}(z,\omega,\mathcal R)
	\right).
	\]
	The support-active dependency cone is the complement of the inactive sets: an
	operation belongs to \(\Cone_{\mathcal D}(\Tgt)\) if changing it can change
	\(\Tgt\) on some point of \(\supp(\mathcal D)\), for some fixing of coins and
	oracle table.  We call
	\[
	\Cone_{\mathsf{cc},\mathcal H}
	=
	\Cone_{\mathcal D_{\mathsf{hon}}}
	(\Tgt_{\mathsf{cc},\mathcal H})
	\]
	the support-active confirmation cone of \(\mathcal H\).
	
	For set-valued claims, inactivity must be certified for the whole set \(U\).
	It is not inferred from singleton non-membership unless the trace semantics
	proves that simultaneous interventions on the set cannot create a new path to
	\(\Tgt\).
\end{definition}

\begin{definition}[Erasure fault and execution certifier]
	\label{def:erasure-and-certifier}
	For \(U\subseteq\mathcal O\), a \(U\)-erasure implementation
	\(\mathcal I^{-U}\) is an implementation that never executes operations in
	\(U\), and instead supplies fixed dummy values at those locations before
	continuing with the remaining transition functions.  Write
	\(\Exec_U(\mathcal I)=1\) if \(\mathcal I\) executes every operation in \(U\)
	on every input in the honest support, and \(\Exec_U(\mathcal I)=0\) if it is a
	\(U\)-erasure implementation.
	
	A class \(\mathcal C\) is \(U\)-erasure closed for a harness
	\(\mathcal H\) if, for every \(\mathcal I\in\mathcal C\) with
	\(\Exec_U(\mathcal I)=1\), at least one coupled \(U\)-erasure implementation
	constructed by \Cref{lem:erasure-coupling} also belongs to \(\mathcal C\).
	Equivalently, \(\mathcal C\) contains the erasure faults against which the
	certifier claims soundness.
	
	A black-box \(n\)-sample certification rule \(\Cert\) for \(U\)-execution sees
	only \(n\) independent observations from \(\Obs_{\mathcal H}\) and outputs
	\(1\) to certify that \(U\) was executed.  It is \((\alpha,\beta)\)-sound and
	complete for \(U\)-execution on a class \(\mathcal C\) if
	\[
	\Pr[\Cert=1\mid \mathcal I]\le \alpha
	\quad\text{for all }\mathcal I\in\mathcal C\text{ with }
	\Exec_U(\mathcal I)=0,
	\]
	and
	\[
	\Pr[\Cert=1\mid \mathcal I]\ge 1-\beta
	\quad\text{for all }\mathcal I\in\mathcal C\text{ with }
	\Exec_U(\mathcal I)=1.
	\]
	A nontrivial certifier has \(\alpha+\beta<1\).
\end{definition}

\begin{lemma}[Erasure coupling outside the cone]
	\label{lem:erasure-coupling}
	Let \(\mathcal H_{\mathsf{hon}}\) be black-box over honest support.  If
	\[
	U\perp_{\mathcal D_{\mathsf{hon}}}\Tgt_{\mathsf{cc},\mathcal H},
	\]
	then for every implementation \(\mathcal I\) there exists a \(U\)-erasure
	implementation \(\mathcal I^{-U}\) such that
	\[
	\Obs_{\mathcal H}(\mathcal I;z,\omega,\mathcal R)
	\equiv
	\Obs_{\mathcal H}(\mathcal I^{-U};z,\omega,\mathcal R)
	\]
	for every \(z\in\supp(\mathcal D_{\mathsf{hon}})\), every coin string
	\(\omega\), and every oracle table \(\mathcal R\).  Consequently the one-run
	observation distributions are identical:
	\[
	\Obs_{\mathcal H}(\mathcal I)
	\equiv
	\Obs_{\mathcal H}(\mathcal I^{-U}).
	\]
\end{lemma}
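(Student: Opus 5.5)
The construction is the natural one. Given \(\mathcal I\), take as dummy values at \(U\) the values from a fixed trace, and show that the target vector is unchanged pointwise. Black-box-ness then transfers this to the observation.

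\textbf{Step 1: define the erasure.} Fix once and for all a replacement vector \(a_U\in\prod_{u\in U}\mathsf{Dom}(u)\); any valid inhabitants suffice, as required by \Cref{rem:trace-abstraction}. Let \(\mathcal I^{-U}\) be the implementation that runs \(\mathcal I\)'s local transition functions in topological order, but at each node of \(U\) writes the constant \(a_u\) instead of executing the operation. Operations in \(U\) are never executed, so \(\mathcal I^{-U}\) is a \(U\)-erasure in the sense of \Cref{def:erasure-and-certifier}. It uses the same coins \(\omega\) and oracle table \(\mathcal R\) as \(\mathcal I\); this is the coupling.

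\textbf{Step 2: identify the erased run with an intervention.} By \Cref{def:intervened-trace}, the trace of \(\mathcal I^{-U}\) on \((z,\omega,\mathcal R)\) coincides with \(\Trace_{\mathcal I}(z,\omega,\mathcal R)[U\leftarrow a_U]\). Non-descendants of \(U\) are computed identically in both. Descendants are recomputed in topological order with the same transition functions and the same \(\mathcal R\).

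One point needs care here. Replacing values can change which oracle points are queried, and hence which lazily sampled entries are consumed. Taking \(\mathcal R\) to be a full table fixed in advance, rather than sampled on demand, makes both runs read the same function, so this causes no problem.

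\textbf{Step 3: equal targets, then equal observations.} For \(z\in\supp(\mathcal D_{\mathsf{hon}})\), the inactivity hypothesis \(U\perp_{\mathcal D_{\mathsf{hon}}}\Tgt_{\mathsf{cc},\mathcal H}\) from \Cref{def:support-active-cone} holds for every \(\omega\), \(\mathcal R\) and \(a_U\). So
\[
\Tgt_{\mathsf{cc},\mathcal H}(\Trace_{\mathcal I^{-U}})=\Tgt_{\mathsf{cc},\mathcal H}(\Trace_{\mathcal I}).
\]
The harness is black-box over honest support, so coupled runs with equal target values have identically distributed complete observations. This gives \(\Obs_{\mathcal H}(\mathcal I;z,\omega,\mathcal R)\equiv\Obs_{\mathcal H}(\mathcal I^{-U};z,\omega,\mathcal R)\). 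Averaging over \(z\leftarrow\mathcal D_{\mathsf{hon}}\), \(\omega\) and \(\mathcal R\), which are sampled identically for both implementations, yields equality of the one-run observation distributions.

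\textbf{Main obstacle.} The delicate step is Step 2, for two reasons.

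First, we must check that \(\mathcal I^{-U}\) is a well-defined total implementation. Erasing a node may change branch selectors downstream, and the trace DAG of the erased run must match the intervened DAG rather than a structurally different execution. I would handle this through the total, typed trace abstraction of \Cref{rem:trace-abstraction}: control flow is itself a trace node, so the intervened trace \emph{is} the erased execution.

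Second, the target vector includes declared oracle-query inputs. Any changed query pattern is therefore already ruled out by target invariance, and this is what makes the fixed-table coupling legitimate.
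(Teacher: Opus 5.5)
Your proof is correct and follows essentially the paper's argument: a fixed dummy vector \(a_U\) defines \(\mathcal I^{-U}\), its trace on \((z,\omega,\mathcal R)\) is the intervened trace \(\Trace_{\mathcal I}(z,\omega,\mathcal R)[U\leftarrow a_U]\), inactivity preserves \(\Tgt_{\mathsf{cc},\mathcal H}\), black-box factorization gives pointwise equal observations, and averaging over \((z,\omega,\mathcal R)\) finishes. Your use of a fully fixed oracle table agrees with the paper's definition of intervention, which reuses the same \(\mathcal R\); one correction is that target invariance only constrains declared queries, so it is the fixed table, not target invariance, that makes changes in undeclared queries harmless.
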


\begin{proof}
	Construct \(\mathcal I^{-U}\) by replacing every operation in \(U\) by a fixed
	dummy value of the correct syntactic type and then executing the remaining
	code.  For fixed \((z,\omega,\mathcal R)\), the trace of \(\mathcal I^{-U}\)
	is exactly an intervened trace
	\(\Trace_{\mathcal I}(z,\omega,\mathcal R)[U\leftarrow a_U]\) for the dummy
	vector \(a_U\).  By the inactivity hypothesis, the value of
	\(\Tgt_{\mathsf{cc},\mathcal H}\) is unchanged.  Since \(\mathcal H\) is
	black-box over honest support, its complete observation factors through this
	target vector.  Hence the observations are identical pointwise for the coupled
	run.  Averaging over \(z\), coins, and the oracle table proves equality of
	observation distributions.
\end{proof}

\begin{theorem}[Dependency-cone non-certification lower bound]
	\label{thm:dependency-cone-lower-bound}
	Let \(\mathcal H_{\mathsf{hon}}\) be any black-box honest-reference harness
	over honest ciphertext support, and let \(U\subseteq\mathcal O\) be inactive
	for the confirmation-observable target:
	\[
	U\perp_{\mathcal D_{\mathsf{hon}}}\Tgt_{\mathsf{cc},\mathcal H}.
	\]
	Let \(\mathcal C\) be an implementation class that contains an implementation
	\(\mathcal I_1\) with \(\Exec_U(\mathcal I_1)=1\) and a coupled
	\(U\)-erasure implementation \(\mathcal I_0\in\mathcal C\) with
	\(\Exec_U(\mathcal I_0)=0\) and the same black-box transcript distribution.
	Then every \((\alpha,\beta)\)-sound and complete black-box certifier for
	\(U\)-execution over \(\mathcal C\) must satisfy
	\[
	\alpha+\beta\ge 1.
	\]
	In particular, the same conclusion holds for every nonempty
	\(U\)-erasure-closed class \(\mathcal C\) containing an implementation that
	executes \(U\).
	
	More generally, if the erasure construction only achieves
	\[
	\Delta_{\TV}
	\bigl(
	\Obs_{\mathcal H}(\mathcal I),
	\Obs_{\mathcal H}(\mathcal I^{-U})
	\bigr)
	\le \Delta
	\]
	per run for a pair \(\mathcal I,\mathcal I^{-U}\in\mathcal C\) with
	\(\Exec_U(\mathcal I)=1\) and \(\Exec_U(\mathcal I^{-U})=0\), then every
	\(n\)-sample certifier over \(\mathcal C\) must satisfy
	\[
	\alpha+\beta+n\Delta\ge 1.
	\]
\end{theorem}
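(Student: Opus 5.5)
The plan is a standard two-point indistinguishability argument. We use \Cref{lem:erasure-coupling} to get identical observation distributions, and then apply the definition of soundness and completeness to the same certifier run on two implementations it cannot tell apart.

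First, fix \(\mathcal I_1\in\mathcal C\) with \(\Exec_U(\mathcal I_1)=1\). In the first form of the theorem, the coupled erasure \(\mathcal I_0\in\mathcal C\) is given by hypothesis. In the erasure-closed form, \(U\)-erasure closure supplies a coupled \(U\)-erasure \(\mathcal I_0\in\mathcal C\) constructed as in \Cref{lem:erasure-coupling}. Because \(U\perp_{\mathcal D_{\mathsf{hon}}}\Tgt_{\mathsf{cc},\mathcal H}\) and the harness is black-box over honest support, that lemma gives \(\Obs_{\mathcal H}(\mathcal I_1)\equiv\Obs_{\mathcal H}(\mathcal I_0)\) for a single run. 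A black-box \(n\)-sample certifier sees \(n\) independent observations, so its input is distributed as the \(n\)-fold product of the one-run law. Equal one-run laws therefore give equal product laws. Since \(\Cert\) is a (possibly randomized) function of that input with its own coins independent of the implementation, we get \(\Pr[\Cert=1\mid\mathcal I_1]=\Pr[\Cert=1\mid\mathcal I_0]\). Completeness gives \(\Pr[\Cert=1\mid\mathcal I_1]\ge 1-\beta\), and soundness gives \(\Pr[\Cert=1\mid\mathcal I_0]\le\alpha\). Combining the two yields \(1-\beta\le\alpha\).

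For the approximate version, the same chain works with a total-variation loss. By subadditivity of total variation over product distributions (a hybrid argument replacing one coordinate at a time), the \(n\)-sample laws are within \(n\Delta\). The acceptance event \(\{\Cert=1\}\) is measurable with respect to the certifier's input and coins. Its probability under the two laws therefore differs by at most \(n\Delta\), as in the proof of \Cref{thm:support-equivalence}. This gives \(1-\beta\le\alpha+n\Delta\).

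The main obstacle is bookkeeping rather than mathematics. We must make sure the certifier's randomness and the ``independent observations'' are genuinely product samples of the same one-run law under both implementations, with no extra channel such as state carried across samples. We must also make sure the class-membership hypotheses deliver a matched pair \(\mathcal I_1,\mathcal I_0\) inside \(\mathcal C\). I would state explicitly that the \(n\) samples use independent \((z,\omega,\mathcal R)\), reset between runs as in \Cref{def:reset-product}, so that the pointwise coupling of \Cref{lem:erasure-coupling} lifts coordinatewise.
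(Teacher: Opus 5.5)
Your proposal is correct and follows the paper's proof. Identical one-run observation laws give identical \(n\)-fold product laws, so a single acceptance probability must be both \(\ge 1-\beta\) and \(\le\alpha\); in the approximate case, the \(n\Delta\) product total-variation bound transfers directly to the acceptance probabilities. The only differences are cosmetic: you invoke \Cref{lem:erasure-coupling} even in the exact form, where equal transcript distributions are already a hypothesis, and you spell out the erasure-closed case and the certifier's independent coins, which the paper leaves implicit.
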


\begin{proof}
	For the exact statement, the observations of \(\mathcal I_1\) and
	\(\mathcal I_0\) have identical one-run distributions.  Hence their
	\(n\)-sample product distributions are also identical.  Let \(p\) be the
	probability that \(\Cert\) outputs \(1\) under this common distribution.
	Completeness on \(\mathcal I_1\) implies \(p\ge 1-\beta\), while soundness on
	\(\mathcal I_0\) implies \(p\le \alpha\).  Therefore
	\(1-\beta\le \alpha\), equivalently \(\alpha+\beta\ge 1\).  Thus no certifier
	with \(\alpha+\beta<1\) exists.
	
	For the approximate statement, total variation distance between \(n\)-fold
	product distributions is at most \(n\Delta\).  Therefore the certifier's
	output probabilities on the executing implementation \(\mathcal I\) and the
	erasure implementation \(\mathcal I^{-U}\) differ by at most \(n\Delta\).
	Completeness and soundness give
	\[
	1-\beta
	\le
	\Pr[\Cert=1\mid \mathcal I]
	\le
	\Pr[\Cert=1\mid \mathcal I^{-U}] + n\Delta
	\le
	\alpha+n\Delta.
	\]
	Rearranging proves \(\alpha+\beta+n\Delta\ge 1\).
\end{proof}

\begin{corollary}[Cone boundary for code claims]
	\label{cor:cone-boundary}
	An honest-reference confirmation-code test can support an operation-execution
	claim only for operations in the support-active confirmation cone
	\(\Cone_{\mathsf{cc},\mathcal H}\), or for operations exposed by additional
	harness observations, instrumentation, malformed support, or differential
	support.  For operations outside that cone, the strongest correct statement
	over erasure-closed classes, or over any class containing the coupled erasure,
	is a non-certification statement of \Cref{thm:dependency-cone-lower-bound}.
\end{corollary}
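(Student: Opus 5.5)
The corollary is essentially a restatement of \Cref{thm:dependency-cone-lower-bound} combined with \Cref{lem:erasure-coupling}, organized as a dichotomy on an operation set \(U\). My plan is to fix a harness \(\mathcal H_{\mathsf{hon}}\) that is black-box over honest support. I will then split on whether \(U\) is inactive for the confirmation-observable target,
\[
U\perp_{\mathcal D_{\mathsf{hon}}}\Tgt_{\mathsf{cc},\mathcal H}.
\]
If it is not inactive, then some operation of \(U\) (or the set jointly) lies in \(\Cone_{\mathsf{cc},\mathcal H}\) by \Cref{def:support-active-cone}. In that case the corollary asserts nothing beyond permission, so there is nothing to prove. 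The same holds when the harness is enlarged by instrumentation, malformed support, or differential support. Such extensions change the observation \(\Obs_{\mathcal H}\), and with it the target vector and the support, so the cone must be recomputed for the new harness. The first step is therefore to make explicit that ``support a claim'' means ``admit a nontrivial certifier with \(\alpha+\beta<1\)''. The corollary is then the contrapositive statement that a nontrivial certifier forces \(U\) to meet the cone of the (possibly extended) harness.

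For the substantive branch, \(U\) inactive, I would argue as follows. Let \(\mathcal C\) be any class containing an executing implementation \(\mathcal I_1\) together with its coupled erasure, or let \(\mathcal C\) be erasure closed. \Cref{lem:erasure-coupling} produces \(\mathcal I^{-U}\) whose trace is the intervened trace at the dummy vector. Inactivity keeps the target unchanged, and black-box factoring then gives identical observation distributions. Applying the exact case of \Cref{thm:dependency-cone-lower-bound} yields \(\alpha+\beta\ge1\) for every \(n\)-sample certifier. This shows that the non-certification statement is achievable. To see that it is the strongest correct statement, note that no positive statement \(\alpha+\beta<1\) can hold for such a class, and the theorem says exactly this.

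The main obstacle is the set-valued subtlety flagged in \Cref{def:support-active-cone}. Knowing that each singleton \(u\in U\) lies outside the cone does not give \(U\perp\Tgt\). I will therefore phrase the corollary's ``operations outside that cone'' as the certified set-level inactivity of the full \(U\), and invoke the theorem only under that hypothesis. A secondary point to check is that the erasure's dummy values are valid domain inhabitants, as required by \Cref{rem:trace-abstraction}, so that \(\mathcal I^{-U}\) really is an intervened trace. Once these two points are settled, the remainder is immediate from the cited results.
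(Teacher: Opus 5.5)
Your proposal is correct and follows the paper's own argument: an operation set outside the cone is inactive for \(\Tgt_{\mathsf{cc},\mathcal H}\), so \Cref{lem:erasure-coupling} together with the exact case of \Cref{thm:dependency-cone-lower-bound} gives \(\alpha+\beta\ge 1\), while added observations, instrumentation, or a different support change the target and therefore the cone. Requiring certified set-level inactivity of the whole \(U\) is also right: the paper's one-line step ``outside the cone, hence inactive'' passes over this point, and the caveat in \Cref{def:support-active-cone} says it must be handled exactly as you do.
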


\begin{proof}
	If \(U\) is outside the cone, then it is inactive for
	\(\Tgt_{\mathsf{cc},\mathcal H}\), and
	\Cref{thm:dependency-cone-lower-bound} rules out nontrivial black-box
	certification for any implementation class that contains the coupled erasure.
	If a test samples a different support, logs additional internal data, or adds
	a code source that depends on \(U\), then the target vector and its cone have
	changed; the theorem no longer classifies \(U\) as outside the cone.
\end{proof}

\begin{remark}[Indistinguishability from cone inactivity]
	The only structural hypothesis is black-box factorization: the harness
	observation must depend on the implementation trace only through declared
	inputs, declared oracle-query logs, declared outputs, and the final-key test.
	That is the definition of the black-box honest-reference setting.  The theorem
	constructs a \(U\)-erasure with the same transcript distribution whenever
	\(U\) is outside the support-active cone.  The certification lower bound is a
	statement about implementation classes that include that erasure, because
	soundness is always relative to the class on which a certifier is specified.
\end{remark}

\begin{definition}[Cone-inactivity certificate]
	\label{def:cone-certificate}
	A cone-inactivity certificate for an artifact row consists of:
	\begin{enumerate}[leftmargin=*]
		\item the omitted operation set \(U\);
		\item the exact target vector components being checked;
		\item the support condition, such as honest ciphertext support or a named
		malformed/differential support;
		\item a trace argument proving
		\[
		U\perp_{\mathcal D}\Tgt
		\]
		for the whole set \(U\); and
		\item the implementation class over which the coupled erasure is included.
	\end{enumerate}
	Rows without such a certificate remain finite-catalog observations or
	source-code-cone boundaries, depending on the theorem invoked.
\end{definition}

\begin{table}[H]
	\centering
	\footnotesize
	\begin{tabularx}{\textwidth}{@{}>{\raggedright\arraybackslash}p{2.7cm}
			>{\raggedright\arraybackslash}p{3.5cm}
			>{\raggedright\arraybackslash}p{3.7cm}
			>{\raggedright\arraybackslash}X@{}}
		\toprule
		Row family & Omitted set \(U\) & Target components and support & Certificate status \\
		\midrule
		\hqc{}-\cdone{} non-retained tail bits &
		tail coordinates outside the retained \cdone{} projection &
		\(B,W,x^\star\), declared final-key inputs, public outputs, and the
		support-constant branch selector on honest ciphertext support &
		set-level certificate when the black-box target logs only the retained tail
		projection; invalid if a harness exposes the full tail, a full comparison
		trace, or instrumentation state \\
		\mlkem{} unselected source coefficients &
		coefficients outside the selected source set \(S\) &
		source-code target defined by the selected coefficient serialization &
		certificate for the selected source-code cone only; a full
		\(\Tgt_{\mathsf{cc},\mathcal H}\) certificate must additionally exclude
		ordinary comparison outputs, declared query logs, and branch behavior \\
		Always-select-\(K_+\) on honest ciphertexts &
		decision node selecting between \(K_+\) and \(K_-\) &
		honest ciphertext support, after quotienting support-constant predicates &
		certificate only on honest support; malformed or differential support changes
		the target and can make the branch observable \\
		Component-selective comparison rows &
		comparison component not sampled by the row's retained code or mode &
		the named mode and honest-support target for that row &
		row-specific certificate required; singleton non-membership is insufficient
		for a set of comparisons unless simultaneous inactivity is proved \\
		\bottomrule
	\end{tabularx}
	\caption{Cone-inactivity certificates used to interpret negative rows.  A
		row becomes theorem-covered non-certification only for the stated target,
		support, and erasure-containing implementation class.}
	\label{tab:cone-certificates}
\end{table}

\subsection{Coverage supplied by the code source}
\label{sec:coverage}

\begin{definition}[Trace obligations and code coverage]
	\label{def:trace-coverage}
	Let a decapsulation trace be a labeled DAG
	\[
	\Trace=(\mathcal O,\prec,\mathsf{val}),
	\]
	where \(\mathcal O\) is the set of operations, \(\prec\) is the data-dependency
	relation, and \(\mathsf{val}(o)\) is the value produced by operation \(o\).
	Let \(W=\mathsf{Code}(\Trace)\).  For \(U\subseteq\mathcal O\), let
	\(\pi_{\overline U}(\Trace)\) be the trace with all values produced only by
	operations in \(U\) removed.
	
	We say \(U\) is \((\lambda,\epsilon)\)-covered by the code against a SUT class
	\(\mathcal C\) and harness \(\mathcal H\) if, for every
	\(\mathcal S\in\mathcal C\) that omits all operations in \(U\), there is a
	view \(V_U\) containing the SUT transcript such that \(Q_G(B)\) is
	\(V_U\)-measurable and \(W\) has \((\lambda,\epsilon)\)-smooth average
	conditional min-entropy given \(V_U\).
\end{definition}

\begin{theorem}[Covered omitted obligations are detected]
	\label{thm:coverage}
	If \(U\) is \((\lambda,\epsilon)\)-covered by the confirmation code against
	\((\mathcal C,\mathcal H)\), then every \(q_G\)-localized SUT
	\(\mathcal S\in\mathcal C\) omitting \(U\) satisfies
	\[
	\mathsf{Pass}^{\mathsf{href}}_{\Pi^C,\mathcal H}(\mathcal S)
	\le
	\delta_{\Pi^C}
	+
	\epsilon_{\mathsf{alias}}(\mathcal S,\mathcal H)
	+
	\epsilon_{\mathsf{fresh}}(\mathcal S,\mathcal H)
	+
	\epsilon
	+
	q_G2^{-\lambda}
	+
	2^{-\kappa}.
	\]
\end{theorem}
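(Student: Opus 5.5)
The plan is to reduce \Cref{thm:coverage} to the list-hit bound of \Cref{thm:code-blind}, and then bound the list-hit term with the smoothed branch of \Cref{lem:list-minentropy}. The route mirrors the proof of \Cref{cor:entropic}. The only new ingredient is that the conditioning view \(V_U\) comes from the coverage hypothesis in \Cref{def:trace-coverage} rather than being assumed directly.

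First, fix a \(q_G\)-localized SUT \(\mathcal S\in\mathcal C\) that omits every operation in \(U\). Applying \Cref{thm:code-blind} to \(\mathcal S\) gives
\[
\mathsf{Pass}^{\mathsf{href}}_{\Pi^C,\mathcal H}(\mathcal S)
\le
\delta_{\Pi^C}
+\epsilon_{\mathsf{alias}}(\mathcal S,\mathcal H)
+\epsilon_{\mathsf{fresh}}(\mathcal S,\mathcal H)
+\Pr[W\in Q_G(B)]
+2^{-\kappa}.
\]
It therefore suffices to show
\[
\Pr[W\in Q_G(B)]\le\epsilon+q_G2^{-\lambda}.
\]

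Second, invoke the \((\lambda,\epsilon)\)-coverage of \(U\). Since \(\mathcal S\) omits \(U\), coverage supplies a view \(V_U\) with three properties: it contains the SUT transcript, \(Q_G(B)\) is \(V_U\)-measurable, and \(W\) has \((\lambda,\epsilon)\)-smooth average conditional min-entropy given \(V_U\). Set \(Q(V_U)=Q_G(B)\). Localization gives \(|Q(V_U)|\le q_G\) in every run. The smoothed case of \Cref{lem:list-minentropy} then yields exactly \(\Pr[W\in Q_G(B)]\le\epsilon+q_G2^{-\lambda}\). Substituting this into the first display proves the theorem.

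The main obstacle is the measurability requirement in the smoothed lemma. On the event \(\Good\), the list must remain a function of \(V_U\), and the guessing bound must be computed with respect to the conditional law given \(\Good\). I would handle this by splitting
\[
\Pr[W\in Q]\le\Pr[\neg\Good]+\Pr[\Good]\cdot\Pr[W\in Q\mid\Good].
\]
Because \(Q\) is a function of \(V_U\) alone, the per-\(v\) argument from the proof of \Cref{lem:list-minentropy} applies conditionally on \(\Good\). It gives \(\Pr[W\in Q\mid\Good]\le q_G\,P_{\mathsf{guess}}(W\mid V_U,\Good)\le q_G2^{-\lambda}\). The factor \(\Pr[\Good]\le1\) can be dropped.

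A second subtlety is that \(V_U\) may differ from the freshness view \(V\) used inside \Cref{thm:code-blind}. This causes no conflict: that theorem bounds the hit term only by its probability, so any view making the list measurable suffices.
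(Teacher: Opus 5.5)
Your proposal is correct and takes the paper's route: the paper's proof uses coverage to get the smooth min-entropy hypothesis for the view \(V_U\) and then applies \Cref{cor:entropic}, which is itself exactly the combination of \Cref{thm:code-blind} with the smoothed case of \Cref{lem:list-minentropy} that you carry out directly. Your handling of the conditioning on \(\Good\), and your remark that \(V_U\) need not coincide with the freshness view, are both sound; they make explicit what the paper's two-line proof leaves implicit.
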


\begin{proof}
	By coverage, the omitted operations leave \(W\) with smooth average conditional
	min-entropy at least \(\lambda\) given the SUT view.  Apply
	\Cref{cor:entropic}.
\end{proof}

\begin{proposition}[Factorization is the one-operation cone case]
	\label{prop:noncoverage}
	Suppose
	\[
	W=f(\pi_{\overline{\{o\}}}(\Trace))
	\]
	for some operation \(o\), and suppose \(o\) also has no active path on honest
	support to any other value in \(\Tgt_{\mathsf{cc},\mathcal H}\).  Then
	\(o\notin\Cone_{\mathsf{cc},\mathcal H}\), and no black-box honest-reference
	certifier over an \(o\)-erasure-closed class can certify execution of \(o\)
	with \(\alpha+\beta<1\).
\end{proposition}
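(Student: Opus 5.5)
The argument has two steps: show that \(\{o\}\) is inactive for the full target vector \(\Tgt_{\mathsf{cc},\mathcal H}\) on \(\mathcal D_{\mathsf{hon}}\), then apply \Cref{thm:dependency-cone-lower-bound} with \(U=\{o\}\).

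\textbf{Step 1: inactivity of \(\{o\}\).} Fix any \(z\in\supp(\mathcal D_{\mathsf{hon}})\), coins \(\omega\), oracle table \(\mathcal R\), and replacement value \(a\in\mathsf{Dom}(o)\). Consider the intervened trace \(\Trace[o\leftarrow a]\) of \Cref{def:intervened-trace}. We check each component of the target listed in \Cref{def:confirmation-observable-target}.

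\begin{enumerate}[leftmargin=*]
	\item \emph{The witness \(W\).} By hypothesis \(W=f(\pi_{\overline{\{o\}}}(\Trace))\), so \(W\) depends on the trace only through values not produced solely by \(o\). This is where care is needed. Descendants of \(o\) are recomputed after an intervention and may also lie outside \(\pi_{\overline{\{o\}}}\). I would read the factorization hypothesis as holding across interventions: \(f\) composed with the projection is invariant under every value substitution at \(o\) on honest support. Equivalently, no descendant of \(o\) enters \(W\) except through values discarded by the projection. Under this reading, \(W\) is unchanged.
	\item \emph{The prefix \(B\), the point \(x^\star\), and the remaining components.} The final-key and declared oracle inputs, the public outputs, and the honest-support branch selector are unchanged by the second hypothesis: \(o\) has no active path on honest support to any other target value. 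The point \(x^\star=\langle\mathsf{good},B,W\rangle\) is a function of \(B\) and \(W\), so it is unchanged as well.
\end{enumerate}

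Hence \(\Tgt_{\mathsf{cc},\mathcal H}(\Trace[o\leftarrow a])=\Tgt_{\mathsf{cc},\mathcal H}(\Trace)\). This is exactly \(\{o\}\perp_{\mathcal D_{\mathsf{hon}}}\Tgt_{\mathsf{cc},\mathcal H}\), so by \Cref{def:support-active-cone}, \(o\notin\Cone_{\mathsf{cc},\mathcal H}\). Because \(U\) is a singleton, the set-level caveat in \Cref{def:support-active-cone} about simultaneous interventions does not arise.

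\textbf{Step 2: non-certification.} Assume, as in the surrounding section, that the harness is black-box over honest support. \Cref{lem:erasure-coupling} then gives, for any \(\mathcal I\) in the class executing \(o\), an \(o\)-erasure \(\mathcal I^{-\{o\}}\) with an identical observation distribution. Erasure-closedness of the class places this coupled erasure in the class. \Cref{thm:dependency-cone-lower-bound} (exact case, \(\Delta=0\)) then yields \(\alpha+\beta\ge1\) for every black-box certifier of \(o\)-execution. If the class contains no implementation executing \(o\), completeness holds vacuously and the claim is trivial; I would note this briefly.

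\textbf{Main obstacle.} The delicate step is item 1 of Step 1: turning the static factorization \(W=f(\pi_{\overline{\{o\}}}(\Trace))\) into invariance of \(W\) under interventions, since intervened descendants of \(o\) are recomputed. I would first attempt to derive invariance from the DAG semantics of \Cref{def:trace-coverage}: every path from \(o\) to the code node must pass through values removed by \(\pi_{\overline{\{o\}}}\). If that derivation does not go through, I would state the hypothesis explicitly in its interventional form, consistent with the paper's convention that inactivity must be certified by a trace argument.
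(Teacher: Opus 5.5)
Your argument is the paper's: you derive \(\{o\}\perp_{\mathcal D_{\mathsf{hon}}}\Tgt_{\mathsf{cc},\mathcal H}\) component by component from the two hypotheses, then invoke \Cref{thm:dependency-cone-lower-bound}. The paper's proof simply asserts that ``the factorization says that intervening on \(o\) cannot change \(W\)''. That is exactly the interventional reading you propose as a fallback.

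The fallback is genuinely needed. \(\pi_{\overline{\{o\}}}\) deletes only the value produced by \(o\) itself, so read statically the factorization holds for essentially any \(o\): if \(W\) is the value of a descendant \(d\neq o\), take \(f\) to read \(d\). Your DAG-path derivation cannot close this gap. The condition that every path from \(o\) passes through a removed value is trivially satisfied, since every such path starts at \(o\), and it says nothing about recomputed descendants.

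One side remark is wrong. Suppose the \(o\)-erasure-closed class contains no implementation with \(\Exec_{\{o\}}=1\). Then the non-certification conclusion does not hold trivially; it fails. Completeness is vacuous with \(\beta=0\), and the constant certifier \(\Cert\equiv 0\) has \(\alpha=0\), so \(\alpha+\beta=0<1\). This case must be excluded by hypothesis, as \Cref{thm:dependency-cone-lower-bound} does with ``containing an implementation that executes \(U\)''. The proposition leaves that condition implicit, so you should state it rather than dismiss the case.
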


\begin{proof}
	The factorization says that intervening on \(o\) cannot change the witness
	\(W\).  The additional no-active-path condition says that intervening on \(o\)
	cannot change the other components of the confirmation-observable target
	vector either.  Thus
	\[
	\{o\}\perp_{\mathcal D_{\mathsf{hon}}}
	\Tgt_{\mathsf{cc},\mathcal H}.
	\]
	Apply \Cref{thm:dependency-cone-lower-bound}.
\end{proof}

\begin{corollary}[No whole-reencryption conclusion without a cone certificate]
	\label{cor:no-whole-reenc}
	Let \(\mathcal O_{\mathsf{reenc}}\) be the set of operations in the
	reencryption path.  An honest-reference confirmation-code test certifies only
	those operations in
	\[
	\mathcal O_{\mathsf{reenc}}
	\cap
	\Cone_{\mathsf{cc},\mathcal H}
	\]
	for which a positive list-hit, cUP, or entropy bound is supplied.  For every
	operation
	\[
	o\in
	\mathcal O_{\mathsf{reenc}}\setminus\Cone_{\mathsf{cc},\mathcal H},
	\]
	there exists an \(o\)-erasure implementation with identical black-box
	honest-reference transcript distribution on honest support.  Hence no
	nontrivial black-box certificate of whole-reencryption execution follows from
	that harness alone over erasure-closed classes, or over any class containing
	the corresponding erasure implementation.
\end{corollary}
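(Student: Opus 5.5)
The corollary has two parts, and I would prove them separately, reusing the earlier results for each.

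\emph{Positive part.} This is largely bookkeeping. I would observe that any certification statement the harness makes about an operation \(o\) comes through the pass bound of \Cref{thm:code-blind}. That pass bound becomes a detection guarantee only after the list-hit term \(\Pr[W\in Q_G(B)]\) is bounded by one of three results:
\begin{enumerate}[leftmargin=*]
\item \Cref{thm:certified-lcup} (the cUP route);
\item \Cref{cor:entropic} (the entropy route);
\item \Cref{thm:coverage} (the coverage route).
\end{enumerate}
Each of these is a hypothesis about the omitting SUT. Without such a bound, \Cref{thm:hidden-point-optimality} shows that a predictor of \(W\) yields a passing SUT, so no positive claim follows. Combined with the negative part below, this shows that operations in \(\mathcal O_{\mathsf{reenc}}\cap\Cone_{\mathsf{cc},\mathcal H}\) equipped with such a bound are the only candidates. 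The word ``only'' is thus justified by the contrapositive of the negative part.

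\emph{Negative part.} Fix \(o\in\mathcal O_{\mathsf{reenc}}\setminus\Cone_{\mathsf{cc},\mathcal H}\).
\begin{enumerate}[leftmargin=*]
\item By \Cref{def:support-active-cone}, non-membership in the cone means that intervening on \(o\) never changes \(\Tgt_{\mathsf{cc},\mathcal H}\), for any honest-support input, any coins, any oracle table and any replacement value. That is exactly \(\{o\}\perp_{\mathcal D_{\mathsf{hon}}}\Tgt_{\mathsf{cc},\mathcal H}\).
\item Since the setting is a black-box harness over honest support, \Cref{lem:erasure-coupling} applies with \(U=\{o\}\). It yields an \(o\)-erasure implementation \(\mathcal I^{-\{o\}}\) whose observation equals that of \(\mathcal I\) pointwise under the coupling, and hence has the same transcript distribution. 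This is the claimed existence statement.
\item For the non-certification conclusion, I would apply \Cref{thm:dependency-cone-lower-bound} to the pair \((\mathcal I,\mathcal I^{-\{o\}})\), over any class containing both or any \(\{o\}\)-erasure-closed class. This gives \(\alpha+\beta\ge1\) for any certifier of \(o\)-execution.
\item Whole-reencryption execution means \(\Exec_{\mathcal O_{\mathsf{reenc}}}=1\). An implementation that erases \(o\) fails this predicate, so it lies on the soundness side. A certifier for the whole set therefore also certifies \(o\), and it inherits the same bound by the same two-distribution argument.
\end{enumerate}

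\emph{Main obstacle.} The delicate step is item 4, the move from single operations to the whole set. Erasing one \(o\) suffices to violate whole-set execution, so singleton inactivity is enough here. This avoids the set-level caveat of \Cref{def:support-active-cone}, which is needed only when one claims that several operations are simultaneously undetectable. I would state this explicitly so the corollary does not appear to claim set-level inactivity of \(\mathcal O_{\mathsf{reenc}}\setminus\Cone_{\mathsf{cc},\mathcal H}\). I would also check that the certifier's soundness quantifies over all non-executing implementations in the class, and not only the full \(\mathcal O_{\mathsf{reenc}}\)-erasure, so that the single-\(o\) erasure counts as a soundness instance.
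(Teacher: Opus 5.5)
Your proposal is correct and follows the paper's proof. The paper takes the positive part from the coverage, cUP, and entropy bounds inside the cone, and for the negative part applies \Cref{thm:dependency-cone-lower-bound} to each \(o\in\mathcal O_{\mathsf{reenc}}\setminus\Cone_{\mathsf{cc},\mathcal H}\), which is what you do, with \Cref{lem:erasure-coupling} supplying the coupled erasure. Your item 4 makes explicit a step the paper's one-line proof leaves implicit: a whole-reencryption certifier must count a single-\(o\) erasure as a soundness instance, because the paper's \(\Exec_U\) assigns the value \(0\) only to full \(U\)-erasures.
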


\begin{proof}
	The positive part is \Cref{thm:coverage} or the corresponding cUP/entropy
	bound applied to the operations inside the cone.  The negative part is
	\Cref{thm:dependency-cone-lower-bound} applied to each operation outside the
	support-active confirmation cone.
\end{proof}

\begin{lemma}[\mlkem{} code-source coverage]
	\label{lem:mlkem-coverage}
	In the source \mlkem{} confirmation-code line of
	\cite{Glabush2025VerifiableDecapsulation}, the code is made of selected
	uncompressed ciphertext coefficients from positions
	\(S\subseteq\{1,\ldots,256\}\) in each recomputed \(u_i\) component and in the
	recomputed \(v\) component.  For \(|S|=2\), this selects \(2(k+1)\)
	coefficients, serialized as 12, 16, and 20 bytes for ML-KEM-512, ML-KEM-768,
	and ML-KEM-1024.  Computing that source code through the specified procedure
	requires computing those selected uncompressed reencryption coefficients.
	
	The artifact \mlkem{}-\cdone{} line uses a smaller diagnostic hashed code:
	one byte obtained by hashing 32 residue bytes from selected recomputed
	\(u\)- and \(v\)-component coefficients.  Its formal interpretation is
	\Cref{thm:mlkem-cd1-diagnostic}, a specialization of
	\Cref{thm:hashed-diagnostic} under the named raw-residue entropy assumption.
	It is not covered by the source-paper \mlkem{} cUP theorem.
\end{lemma}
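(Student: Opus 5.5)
The lemma makes three claims. The first is a counting fact about the source code. The second is a computational-dependency claim: the source code requires the selected uncompressed coefficients. The third is a classification of the artifact \mlkem{}-\cdone{} line. I would prove them in that order, directly from the source construction of \cite{Glabush2025VerifiableDecapsulation} and the FIPS~203 parameters~\cite{NISTFIPS203}.

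\textbf{Counting.} In \mlkem{} with module rank \(k\in\{2,3,4\}\), reencryption produces \(k\) polynomials \(u_1,\ldots,u_k\) and one polynomial \(v\), each with 256 coefficients in \(\mathbb Z_q\), where \(q=3329\). Selecting the positions in \(S\) from each of the \(k+1\) polynomials gives \(|S|(k+1)\) coefficients, which is \(2(k+1)\) for \(|S|=2\). Each uncompressed coefficient lies below \(2^{12}\), so it is encoded in 12 bits. The serialized length is therefore \(12\cdot 2(k+1)/8=3(k+1)\) bytes, giving 9, 12 and 15 bytes for \(k=2,3,4\).

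This does not reproduce the stated 12, 16 and 20 bytes. Those values equal \(4(k+1)\), which corresponds to a 16-bit (two-byte) container per coefficient. So I would check the source serialization convention. If each coefficient is stored in a 16-bit word, as in the reference \texttt{poly} representation, then \(2\cdot 2(k+1)=4(k+1)\) bytes gives 12, 16 and 20, and I would cite that convention explicitly. Pinning down this convention is the main obstacle. The arithmetic is routine, but the stated byte counts depend on a packing choice that must be quoted from the source.

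\textbf{Computational requirement.} The selected coefficients are outputs of the reencryption polynomial arithmetic before compression: \(u=A^{T}r+e_1\) and \(v=t^{T}r+e_2+\mathsf{Decompress}(m)\). I would argue at the level of the specified procedure, not against arbitrary programs. The code is defined as the serialization of these values, so any faithful execution of the procedure evaluates those trace nodes. I would state this as an obligation in the sense of \Cref{def:trace-coverage}: the selected coefficient nodes lie in the dependency cone of \(W\), and no unconditional unpredictability claim is made. This keeps the statement consistent with \Cref{rem:source-cup-scope} and with the cone table.

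\textbf{Classification of \cdone{}.} This part is definitional. The \cdone{} code is \(\mathsf{Trunc}_8(\Hcd(\langle\mathsf{cd},B,R\rangle))\), where \(R\) is the 32 residue bytes, so it has exactly the form of \Cref{def:hashed-diagnostic} with \(L=8\). Its pass bound therefore follows from \Cref{thm:hashed-diagnostic} once the RawEnt hypothesis on \(R\) is supplied. Finally, the \cdone{} line changes the witness, so condition (S1) of \Cref{def:source-equivalent-split} fails. \Cref{rem:diagnostic-not-source-transfer} then applies, and \Cref{prop:split-transfer} cannot transport the source cUP bound to this line.
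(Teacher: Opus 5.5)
Your proposal is correct and follows the same route the paper relies on. The paper gives no separate proof of this lemma: it treats the statement as a reading of the cited source construction, together with the artifact definition in \Cref{def:mlkem-cd1-rawent}, \Cref{thm:mlkem-cd1-diagnostic}, and \Cref{rem:diagnostic-not-source-transfer}. Your resolution of the byte counts is also the right one: the paper's own accounting, with \(2(k+1)=6,8,10\) selected coefficients in \Cref{tab:mlkem-cone-accounting} against 12, 16, and 20 bytes, implies two bytes per coefficient rather than 12-bit packing.
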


\begin{lemma}[\hqc{} code-source coverage]
	\label{lem:hqc-coverage}
	In the August 2025 \hqc{} \cdone{} and \cdstar{} variants, the confirmation code
	is read from the residual \(v\)-path bits immediately before
	\texttt{vect\_truncate}.  Computing \cdstar{} through the specified procedure
	requires computing the retained tail bits of that pre-truncation \(v\)-buffer;
	computing \cdone{} requires the first \(\min(\ell_{\mathsf{tail}},8)\)
	retained bits.  Full ciphertext comparison and other internal reencryption
	values require separate harness or code-source evidence.
\end{lemma}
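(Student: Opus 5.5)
The plan is to argue directly from the specified procedure of \Cref{def:hqc-tail}, reading it as a trace DAG in the sense of \Cref{def:trace-coverage}. I would first fix the data flow of the \(v\)-path: \(r_2\) and \(e\) are derived from \(\theta=G_\theta(B)\), the buffer \(y=s r_2+e\in\bits^{n}\) is formed, and \(\mathsf{Truncate}(y,\ell)\) (implemented by \texttt{vect\_truncate}) discards the coordinates \(n_1n_2,\ldots,n-1\). The \cdstar{} code is by definition \(W=\mathsf{Tail}_{L}(T)\) with \(L=\ell_{\mathsf{tail}}\) and \(T=y[n_1n_2,\ldots,n-1]\). The \cdone{} code is \(W=\mathsf{Tail}_{L}(T)\) with \(L=\min(\ell_{\mathsf{tail}},8)\), so the bit counts come from \Cref{prop:cdstar}.

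The second step is the sufficiency claim. Under the specified procedure, the only trace node whose value contains \(T\) is the pre-truncation buffer \(y\). After \texttt{vect\_truncate}, the tail coordinates are no longer present in any later node: \(v\) and \(c_{\mathsf{PKE}}\) depend only on \(y[0,\ldots,n_1n_2-1]\). Every code-producing operation is therefore a descendant of the tail coordinates of \(y\) that are read, and computing \(W\) through the procedure requires those retained bits of \(y\). This gives the full \(\ell_{\mathsf{tail}}\) bits for \cdstar{} and the first \(\min(\ell_{\mathsf{tail}},8)\) bits for \cdone{}. These bits in turn sit downstream of \(s r_2\), \(e\) and \(\theta\), which places them in the code's dependency cone.

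The third step is the negative scope sentence. I would argue at the level of \Cref{prop:noncoverage}. The comparison node \(c'_{\mathsf{PKE}}=c_{\mathsf{PKE}}\), and the coordinates of \(y\) that feed \(v\) but not the tail, are not ancestors of \(W\). Writing \(W=f(\pi_{\overline{U}}(\Trace))\) for such a set \(U\), the code gives no coverage of \(U\). Coverage must then come from some other target component or harness observation, which is what ``separate harness or code-source evidence'' means here.

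The main obstacle is making ``requires computing'' precise. A SUT could obtain \(T\) by a different route, for example by recomputing only the product coefficients it needs. The lemma is stated only relative to the specified procedure, so I would restrict the claim to that trace semantics, and I would not argue about alternative implementations, which fall under the list-hit and TailEnt route instead.
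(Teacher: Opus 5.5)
Your argument is correct and is essentially the paper's own justification: the paper gives no separate proof. It treats the lemma as immediate from the construction of \texttt{hqc\_pke\_encrypt\_cd} (code bit \(i\) is bit \(\mathsf{PARAM\_N1N2}+i\) of the pre-truncation buffer \(s r_2+e\)), together with the count \(\min(8,\ell_{\mathsf{tail}})\) from \Cref{prop:cdstar}, and reads the last sentence as the scope remark formalized by \Cref{prop:noncoverage} and \Cref{cor:no-whole-reenc}, just as you do. One small slip: the retained bits lie in the code cone because \(W\) is literally those bits, so intervening on them changes \(W\), and upstream nodes such as \(s r_2\), \(e\) and \(\theta\) enter the cone only through their effect on those bits, not the other way round as your ``downstream'' sentence suggests.
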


\subsection{Source security and ROM/QROM scope}
\label{sec:source-security}

The detectability theorems proved here are classical ROM testing statements for
classical SUTs with classical oracle queries, with construction-level security
of the confirmation-code-augmented KEM line cited from the source
verifiable-decapsulation and FO literature.

In the source notation, cUP is the game for predicting the confirmation code
under restricted access to the internal confirmation function, and fCOR is the
game in which a faulty-but-benign decapsulation outputs the honest encapsulated
key.  The source theorem states a bound of the form
\[
\Adv^{\mathsf{fCOR}}
\le
\delta+\Adv^{\mathsf{cUP}}+2^{-\kappa}
\]
for the confirmation-code-augmented transforms
under the modeled final-key derivation oracle
function~\cite[Theorem~16]{Glabush2025VerifiableDecapsulation}.  Our
contribution is the harness-indexed list version: a concrete test induces a
candidate list \(Q_G(B)\), and the source cUP term controls that list through
\Cref{thm:list-cup} when the simulation interface matches.

The classical query log \(Q_G(B)\) and the event that the hidden point
\(x^\star\) is queried are part of the testing model.  A quantum SUT with
superposition access has no such classical list; a QROM analogue would require
a different theorem, replacing \(\Pr[\Hit]\) by an appropriate query-amplitude
or one-way-to-hiding term for \(x^\star\).  References to QROM results give
construction-level provenance for the underlying augmented FO KEM, not testing
theorems for the harness model~\cite[Theorems~26, 27 and
30]{Glabush2025VerifiableDecapsulation}.

\begin{table}[t]
	\centering
	\footnotesize
	\begin{tabularx}{\textwidth}{@{}>{\raggedright\arraybackslash}p{2.3cm}
			>{\raggedright\arraybackslash}p{4.6cm}
			>{\raggedright\arraybackslash}X@{}}
		\toprule
		Source game or term & Standard role & Use in this paper \\
		\midrule
		cUP & unpredictability of the confirmation code for a faulty implementation with restricted confirmation-function access & single-candidate primitive lifted to list-cUP \\
		list-cUP & probability that a candidate list contains the hidden witness & central hit term for honest-reference testing \\
		fCOR & faulty implementation outputs the honest encapsulated key & matches honest-reference pass probability when interfaces coincide \\
		IND-CCA & ordinary active security of the KEM & inherited from source augmented FO transforms, not reproved here \\
		FFP and spreadness & bad events for simulating decapsulation in FO/QROM proofs & source security provenance \\
		explicit/implicit rejection & invalid-ciphertext behavior of FO-KEM decapsulation & explains why branch-selection faults need their own harness support \\
		\bottomrule
	\end{tabularx}
	\caption{Security vocabulary used in the paper. Construction-level KEM security
		statements are cited from the source games.}
	\label{tab:source-games}
\end{table}

\section{Standard Lines, Augmented Variants, and Mutants}
\label{sec:layers}

The theorem applies to confirmation-code-augmented KEM variants, while the
unmodified standardized KEM, the augmented variant, and the mutant tested
against that variant carry different claims, and the artifact separates these
three objects throughout the experiments.

\begin{table}[t]
	\centering
	\footnotesize
	\begin{tabularx}{\textwidth}{@{}>{\raggedright\arraybackslash}p{3.0cm}
			>{\raggedright\arraybackslash}p{4.4cm}
			>{\raggedright\arraybackslash}X@{}}
		\toprule
		Object & Example in the artifact & Claim made in this paper \\
		\midrule
		Standard line & FIPS 203 \mlkem{} and the public August 2025 \hqc{} line & baseline implementation and source of the decapsulation algorithms \\
		Confirmation-code-augmented variant & \(\VSFO/\cdone{}\), \(\VSFO/\cdstar{}\), and the source-paper \mlkem{} confirmation-code line & object to which the list-hit/list-cUP testing theorem applies \\
		Artifact mutant & drop \(cd\), overwrite \(cd\), always select \(K_+\), compare only one component, symmetric endpoint mutation & tested faulty implementation under a named harness \\
		\bottomrule
	\end{tabularx}
	\caption{Three layers used in the paper.  The augmented variants inherit
		algorithmic structure and formats from the baseline lines, but their
		shared secrets include confirmation-code material and are not output-compatible
		with the unmodified standards.}
	\label{tab:layers}
\end{table}

\begin{definition}[Output compatibility with a standard KEM]
	\label{def:output-compatible}
	An augmented line \(\Pi^C\) is output-compatible with a standard KEM \(\Pi\)
	if, for every standard key pair and every honestly generated standard
	ciphertext, the good-branch shared secret output by \(\Pi^C\) is identical to
	the shared secret output by \(\Pi\), except with the stated correctness error.
\end{definition}

\begin{proposition}[Confirmation-code binding is not output-compatible]
	\label{prop:non-output-compatible}
	Let the standard good key be derived at a domain-separated oracle point
	\(x_{\mathsf{std}}\), and let the augmented good key be derived at
	\(x^\star=\langle\mathsf{good},B,W\rangle\).  If
	\(x_{\mathsf{std}}\neq x^\star\) except with aliasing probability
	\(\epsilon_{\mathsf{alias}}\), then the augmented line is not
	output-compatible with the standard line.  More precisely, on honest
	ciphertexts,
	\[
	\Pr[K_{\mathsf{std}}=K_{\mathsf{aug}}]
	\le
	\epsilon_{\mathsf{alias}}+2^{-\kappa}.
	\]
\end{proposition}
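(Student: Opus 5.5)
The plan is to reduce the event \(K_{\mathsf{std}}=K_{\mathsf{aug}}\) to two cases: either the two oracle points collide, or they are distinct and the two random-oracle outputs coincide by chance. Fix an honest key pair and an honestly generated ciphertext, and work on the event that both the standard and the augmented decapsulation follow their good branch; runs where this fails are absorbed into the stated correctness error, as permitted by \Cref{def:output-compatible}. On this event \(K_{\mathsf{std}}=G_K(x_{\mathsf{std}})\) and \(K_{\mathsf{aug}}=G_K(x^\star)\), both taken from the same lazily sampled \(\kappa\)-bit final-key oracle. If the standard line uses a separate oracle, the argument is simpler: the two outputs are then independent uniform strings.

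First I would decompose
\[
\Pr[K_{\mathsf{std}}=K_{\mathsf{aug}}]
\le
\Pr[\Can(x_{\mathsf{std}})=\Can(x^\star)]
+\Pr[K_{\mathsf{std}}=K_{\mathsf{aug}}\wedge x_{\mathsf{std}}\neq x^\star].
\]
By hypothesis, the first term is at most \(\epsilon_{\mathsf{alias}}\). This is the same accounting as the aliasing defect of \Cref{def:alias-freshness}, with the standard derivation playing the role of the querying party.

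For the second term, condition on any fixing of the transcript \((ek,dk,c,B,W)\) for which the two canonical points differ. Both points are determined by the transcript alone, before any \(G_K\) value is sampled. Lazy sampling then assigns \(G_K(x_{\mathsf{std}})\) and \(G_K(x^\star)\) independently and uniformly in \(\bits^\kappa\). Their equality therefore has probability exactly \(2^{-\kappa}\), and averaging over transcripts gives the bound. This is the same fresh-point argument used in \Cref{thm:hidden-point-optimality} and \Cref{thm:self-vs-reference}.

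Adding the two terms gives the displayed inequality. Because \(\epsilon_{\mathsf{alias}}+2^{-\kappa}\) is negligible, the good-branch outputs disagree with overwhelming probability, which contradicts output compatibility. The main obstacle I expect is justifying that the oracle points are fixed independently of the oracle's answers. If \(x_{\mathsf{std}}\) were itself computed from \(G_K\) outputs, the independence step would fail. I would handle this by invoking domain separation: \(G_\theta\) and \(H\) are distinct oracles from \(G_K\), so neither \(B\) nor \(W\) depends on \(G_K\), and I would state this explicitly as part of the modeling.
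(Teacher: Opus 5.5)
Your proposal is correct and follows the paper's proof. The aliasing event is charged to \(\epsilon_{\mathsf{alias}}\); off that event, \(G_K\) evaluated at two distinct domain-separated points gives independent uniform \(\kappa\)-bit values, which collide with probability \(2^{-\kappa}\). Your extra steps make explicit what the paper's short proof leaves implicit: you condition on the transcript before \(G_K\) is sampled, and you check that the points do not depend on \(G_K\) outputs. The restriction to good-branch runs is unnecessary, because \(K_{\mathsf{std}}\) and \(K_{\mathsf{aug}}\) are by definition the oracle values at \(x_{\mathsf{std}}\) and \(x^\star\), so no correctness term enters the displayed bound.
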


\begin{proof}
	Outside the aliasing event, the standard and augmented keys are outputs of the
	random oracle on distinct domain-separated inputs.  They are therefore
	independent uniform \(\kappa\)-bit strings, and collide with probability
	\(2^{-\kappa}\).  Adding the aliasing event proves the bound.
\end{proof}

\begin{remark}[Deployment reading]
	\label{rem:test-mode-reading}
	The augmented variants in this paper are test-mode or derived research lines.
	They are not drop-in output-compatible modes for FIPS 203 \mlkem{} or for the
	public \hqc{} line.  The standard implementations serve as baselines and
	sources of the decapsulation algorithms; the confirmation-code variants create
	a separate test signal by changing the good-branch key input.
\end{remark}

FIPS 203 defines the standardized \mlkem{} behavior, and the artifact
\(\VSFO/\cdone{}\) line inherits the ciphertext format of \mlkemnative{} while
changing the good-branch shared secret by hashing the recomputed diagnostic
code with the ordinary key material.  It is a test variant over FIPS 203
algorithms, not a FIPS 203-compatible output mode.  The same distinction
applies to the \hqc{} augmented lines: the baseline is the public August 2025
specification, while \cdone{} and \cdstar{} are confirmation-code-augmented
variants used to test the code-binding mechanism.

The standard lines appear in the experiments as baselines for support and
branch behavior, and some representative reencryption omissions are already
visible under honest-reference testing on those baselines.  The positive
theorem concerns the additional localized final-key signal introduced by
binding a hidden confirmation witness into the good key of the augmented
variant.

\section{Instantiations}
\label{sec:instantiations}

\subsection{\mlkem{}}

The \mlkem{} source construction is the confirmation-code line of
\cite{Glabush2025VerifiableDecapsulation}, where the confirmation code is
derived from selected recomputed uncompressed ciphertext coefficients and
included in final key derivation.  If the source cUP bound for the selected
code is \(\varepsilon_{\mathsf{cUP}}^{\mlkem}\), and if the harness and SUT
class satisfy the cUP-faithful certificate of \Cref{def:cup-faithful}, then the
harness list-cUP theorem yields, for a \(q_G\)-localized SUT in that simulated
class,
\[
\mathsf{Pass}^{\mathsf{href}}_{\Pi^C,\mathcal H}(\mathcal S)
\le
\delta_{\Pi^C}
+
\epsilon_{\mathsf{alias}}(\mathcal S,\mathcal H)
+
\epsilon_{\mathsf{fresh}}(\mathcal S,\mathcal H)
+
q_G\varepsilon_{\mathsf{cUP}}^{\mlkem}
+
2^{-\kappa}.
\]

For the selected-code line, the transfer requires both the source cUP theorem
and the cUP-faithful harness certificate.  SUT classes that can compute \(W\)
through paths outside the source cUP interface fall outside this
source-security route.

The source paper proposes short \mlkem{} codes by selecting
\(|S|=2\) coefficients from each recomputed ciphertext component.  With the
usual \mlkem{} module ranks \(k=2,3,4\), this selects 12, 16, and 20 serialized
code bytes for ML-KEM-512, ML-KEM-768, and ML-KEM-1024.  Those source codes are
the faithful \mlkem{} instantiation of the cUP analysis cited above.

\paragraph{Cone accounting for the source \mlkem{} code.}
The dependency-cone theorem assigns an exact interpretation to the source
coefficient selection.  With \(|S|=2\), the source code reads two selected
uncompressed coefficients from each recomputed ciphertext component
\(u_1,\ldots,u_k,v\).  The source-code cone therefore contains \(2(k+1)\)
selected coefficient nodes.  The full recomputed uncompressed \(u\)-and-\(v\)
coefficient array has \(256(k+1)\) coefficient nodes, so \(254(k+1)\)
coefficient nodes are outside this particular selected source-code cone.
Membership outside the full \(\Tgt_{\mathsf{cc},\mathcal H}\) cone requires a
separate target certificate: ordinary comparison outputs, declared query logs,
public outputs, or branch behavior may expose those nodes under a larger
harness target.

\begin{table}[H]
	\centering
	\small
	\begin{tabular}{lrrrr}
		\toprule
		Parameter set & \(k\) & Total \(u/v\) coeffs & Selected coeffs & Outside source-code cone \\
		\midrule
		ML-KEM-512  & 2 & \(256\cdot3=768\)   & \(2\cdot3=6\)  & \(762\) \\
		ML-KEM-768  & 3 & \(256\cdot4=1024\)  & \(2\cdot4=8\)  & \(1016\) \\
		ML-KEM-1024 & 4 & \(256\cdot5=1280\)  & \(2\cdot5=10\) & \(1270\) \\
		\bottomrule
	\end{tabular}
	\caption{Dependency-cone accounting for the source \mlkem{} confirmation code
		with \(|S|=2\).  The selected coefficients are the code-source cone.  The
		outside column is the boundary of this selected
		source-code target.  A full black-box non-certification claim additionally
		requires a cone-inactivity certificate for
		\(\Tgt_{\mathsf{cc},\mathcal H}\), as in \Cref{def:cone-certificate}.}
	\label{tab:mlkem-cone-accounting}
\end{table}

Starting from the pinned \mlkemnative{} v1.0.0 baseline used in the
experiments, the artifact patch materializes a \(\VSFO/\cdone{}\) variant.
This derived line is a one-byte
diagnostic hashed-code variant over the FIPS 203 algorithms, distinct from the
full source-paper short-code construction and from a FIPS 203
output-compatible KEM.  Its proof obligation is therefore
\Cref{thm:mlkem-cd1-diagnostic}, which specializes the hashed-witness theorem
under the named \mlkem{}-\cdone{} raw-residue entropy assumption.  The
source-paper \mlkem{} cUP theorem does not cover this diagnostic line.  The
ciphertext format is inherited from the baseline, while the shared secret is
derived from the recomputed diagnostic code as well as the ordinary \mlkem{}
key material.

The patched encapsulation computes the usual
\[
\mathsf{buf}=m\|H(pk),\qquad
kr=G(\mathsf{buf}),
\]
uses the second 32-byte half of \(kr\) as the IND-CPA coins, and calls
\texttt{mlk\_indcpa\_enc\_cd}.  That routine returns the ciphertext and a
one-byte confirmation code.  The code is the first byte of
\(\mlkem{}\)'s hash \(H\) applied to a 32-byte witness: the first 16 bytes are
compression residues from the first 16 coefficients of the recomputed
\(u\)-component, and the last 16 bytes are compression residues from the first
16 coefficients of the recomputed \(v\)-component.  The artifact good key is
computed as \(H(kr\|cd)\).  Decapsulation decrypts, recomputes the same
ciphertext and code, compares the ciphertext in the ordinary way, computes the
same \(H(kr\|cd)\) value for the good branch, and leaves the rejection branch
unchanged.

\begin{definition}[\mlkem{}-\cdone{} raw-residue entropy assumption]
	\label{def:mlkem-cd1-rawent}
	Let \(R_{\mathsf{ML}}\in\bits^{256}\) be the 32-byte residue witness used by
	the artifact \mlkem{}-\cdone{} line: the first 16 bytes are the selected
	recomputed \(u\)-component compression residues and the last 16 bytes are the
	selected recomputed \(v\)-component compression residues.  Let
	\[
	W=\mathsf{Trunc}_8(\Hcd(\langle\mathsf{cd},B,R_{\mathsf{ML}}\rangle)).
	\]
	For a harness class \(\mathfrak H\), SUT class \(\mathcal C\), and parameters
	\((\rho_{\mathsf{ML}},\epsilon_{\mathsf{ML}})\), the assumption
	\[
	\mathsf{MLKEM\mbox{-}CD1\mbox{-}RawEnt}^{
		\mathfrak H,\mathcal C}_{\rho_{\mathsf{ML}},\epsilon_{\mathsf{ML}}}
	\]
	states that, in the canonical honest-reference \mlkem{}-\cdone{} experiment,
	after exposing the pre-\(\Hcd\)-query SUT view \(V_0\) but excluding
	\(R_{\mathsf{ML}}\), \(W\), \(K\), \(x^\star\), and the private oracle value
	\(\Hcd(\langle\mathsf{cd},B,R_{\mathsf{ML}}\rangle)\), one has
	\[
	P_{\mathsf{guess}}(R_{\mathsf{ML}}\mid V_0)
	\le
	2^{-\rho_{\mathsf{ML}}}
	+
	\epsilon_{\mathsf{ML}}.
	\]
\end{definition}

The RawEnt assumption names a hypothesis about the SUT class and harness view;
without that hypothesis, the artifact rows for \mlkem{}-\cdone{} are
finite-catalog diagnostic observations.

\begin{theorem}[\mlkem{}-\cdone{} diagnostic detectability]
	\label{thm:mlkem-cd1-diagnostic}
	Assume
	\[
	\mathsf{MLKEM\mbox{-}CD1\mbox{-}RawEnt}^{
		\mathfrak H,\mathcal C}_{\rho_{\mathsf{ML}},\epsilon_{\mathsf{ML}}}.
	\]
	For every SUT \(\mathcal S\in\mathcal C\) making at most \(q_H\) raw
	\(\Hcd\)-queries and at most \(q_G\) localized final-key suffix queries,
	\[
	\mathsf{Pass}^{\mathsf{href}}_{\mathsf{MLKEM\mbox{-}CD1},\mathcal H}
	(\mathcal S)
	\le
	\delta_{\mathsf{MLKEM\mbox{-}CD1}}
	+
	\epsilon_{\mathsf{alias}}(\mathcal S,\mathcal H)
	+
	\epsilon_{\mathsf{fresh}}(\mathcal S,\mathcal H)
	+
	q_H(2^{-\rho_{\mathsf{ML}}}+\epsilon_{\mathsf{ML}})
	+
	q_G2^{-8}
	+
	2^{-\kappa}.
	\]
\end{theorem}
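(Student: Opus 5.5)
The plan is to obtain the bound as a specialization of \Cref{thm:hashed-diagnostic}, equivalently \Cref{thm:code-blind} combined with \Cref{lem:hashed-witness}, with the raw witness \(R=R_{\mathsf{ML}}\) and retained length \(L=8\). First I will check that the \mlkem{}-\cdone{} honest-reference experiment fits \Cref{def:hashed-diagnostic}. The witness is \(W=\mathsf{Trunc}_8(\Hcd(\langle\mathsf{cd},B,R_{\mathsf{ML}}\rangle))\), where \(\Hcd\) is domain-separated from \(G_K\), and the accepted key is \(K=G_K(\langle\mathsf{good},B,W\rangle)\). The artifact computes the key as \(H(kr\|cd)\), so in the model I will read this as the canonical final-key point. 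Any syntactic non-canonicity is charged to \(\epsilon_{\mathsf{alias}}\), as \Cref{def:canonical-instance} prescribes. The SUT's raw \(\Hcd\)-queries at \(\langle\mathsf{cd},B,\cdot\rangle\) are at most \(q_H\), and its localized suffix list satisfies \(|Q_G(B)|\le q_G\).

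Next, \Cref{thm:code-blind} gives
\[
\Pass^{\mathsf{href}}\le\delta_{\mathsf{MLKEM\mbox{-}CD1}}+\epsilon_{\mathsf{alias}}+\epsilon_{\mathsf{fresh}}+\Pr[W\in Q_G(B)]+2^{-\kappa},
\]
so the remaining task is to bound the hit term. I will follow the proof of \Cref{lem:hashed-witness} and split on \(\mathsf{RawHit}\), the event that the SUT queries \(\Hcd(\langle\mathsf{cd},B,R_{\mathsf{ML}}\rangle)\).

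On \(\mathsf{RawHit}\), the first hitting query is determined by the pre-\(\Hcd\) view \(V_0\) together with earlier \(\Hcd\) replies at wrong points. Those replies are fresh uniform values independent of \(R_{\mathsf{ML}}\), so the candidate raw list of size \(q_H\) is effectively \(V_0\)-measurable (plus independent coins). The argument of \Cref{lem:list-minentropy}, run on the guessing probability directly rather than on its logarithm, then gives
\[
\Pr[\mathsf{RawHit}]\le q_H\,P_{\mathsf{guess}}(R_{\mathsf{ML}}\mid V_0)\le q_H(2^{-\rho_{\mathsf{ML}}}+\epsilon_{\mathsf{ML}}).
\]
This is the same manipulation used in the proof of \Cref{cor:hqc-law}: multiplying the per-view maximum by \(q_H\) and averaging over the view. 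It absorbs \(\epsilon_{\mathsf{ML}}\) multiplicatively, not as a smoothing additive term, which matches the form of the stated bound.

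On \(\neg\mathsf{RawHit}\), the oracle value at the true raw point is unrevealed, so its 8-bit truncation \(W\) is uniform and independent of the SUT's final-key queries. Hence the hit probability is at most \(q_G2^{-8}\). Adding the two cases and substituting into the display above yields the theorem.

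The main obstacle is justifying the raw-hit step rigorously. Adaptive \(\Hcd\)-queries could leak \(R_{\mathsf{ML}}\) through their replies, and \(G_K\)-query replies could also feed later raw queries. I will handle this by lazy sampling: until \(\mathsf{RawHit}\) occurs, every \(\Hcd\) and \(G_K\) reply the SUT sees is at a point other than the hidden ones. Such replies are independent of \(R_{\mathsf{ML}}\) and of \(W\), except when a \(G_K\) query itself hits \(x^\star\), which is already the hit event. So the whole pre-hit transcript can be simulated from \(V_0\) and independent randomness, and the RawEnt assumption, stated relative to \(V_0\), applies to every raw query in the run.
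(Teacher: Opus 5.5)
Your proposal is correct and follows the paper's proof: you reduce to the hit term via \Cref{thm:code-blind}, bound the raw-query event by $q_H\,P_{\mathsf{guess}}(R_{\mathsf{ML}}\mid V_0)\le q_H(2^{-\rho_{\mathsf{ML}}}+\epsilon_{\mathsf{ML}})$ with the list-hit argument, and bound the remaining case by $q_G2^{-8}$ because the unqueried $\Hcd$ value is fresh. Your lazy-sampling treatment of adaptive queries spells out a step the paper leaves implicit; it is cleanest phrased as a split on whether the raw hit or the final-key hit occurs first.
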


\begin{proof}
	The raw-query list has size at most \(q_H\).  The raw-residue entropy
	assumption and the list-hit lemma imply that the probability of querying the
	exact raw witness \(R_{\mathsf{ML}}\) to \(\Hcd\) is at most
	\(q_H(2^{-\rho_{\mathsf{ML}}}+\epsilon_{\mathsf{ML}})\).  Conditioned on not
	making that raw query, the value
	\(\Hcd(\langle\mathsf{cd},B,R_{\mathsf{ML}}\rangle)\) is fresh and its first
	byte is uniform from the SUT view.  A \(q_G\)-localized final-key suffix list
	then hits \(W\) with probability at most \(q_G2^{-8}\).  Apply
	\Cref{thm:code-blind}.
\end{proof}

The \mlkem{} experiments compare plain faults, transformed diagnostic binding
faults, decision-only faults, and symmetric endpoint faults on the one-byte
code-bearing artifact variant named \cdone{}.  That variant is a small,
reproducible harness object for the theorem-relevant fault families, with a
diagnostic security reading separate from the source confirmation-code theorem.

\begin{table}[t]
	\centering
	\footnotesize
	\begin{tabularx}{\textwidth}{@{}>{\raggedright\arraybackslash}p{2.9cm}
			>{\raggedright\arraybackslash}p{4.4cm}
			>{\raggedright\arraybackslash}X@{}}
		\toprule
		\mlkem{} line & Confirmation code & Role in this paper \\
		\midrule
		FIPS 203 \mlkem{} & none & standards baseline and source of the unmodified algorithms \\
		source \(\mlkem{}^C[S]\) & selected uncompressed \(u_i\) and \(v\) coefficients; 12/16/20 bytes for \(|S|=2\) & faithful source-paper instantiation covered by the cited cUP analysis \\
		artifact \(\VSFO/\cdone{}\) & one byte from a hash of 32 recomputed residue bytes & diagnostic non-standard variant governed by \Cref{thm:mlkem-cd1-diagnostic} \\
		\bottomrule
	\end{tabularx}
	\caption{\mlkem{} layers used in the paper.  Only the augmented lines bind a
		confirmation code into the good key; the artifact line uses a smaller code than
		the source-paper short code.}
	\label{tab:mlkem-layers}
\end{table}

\begin{table}[t]
	\centering
	\footnotesize
	\begin{tabularx}{\textwidth}{@{}>{\raggedright\arraybackslash}p{3.0cm}
			>{\raggedright\arraybackslash}p{4.6cm}
			>{\raggedright\arraybackslash}X@{}}
		\toprule
		Object & \mlkem{} reading & Testing role \\
		\midrule
		\(B\) & \(kr\) and the final-key domain material used by the patched good branch & localized prefix for the final key query \\
		\(\Wit\) in \(\mlkem{}^C[S]\) & selected recomputed uncompressed ciphertext coefficients & source confirmation code controlled by the cited cUP theorem \\
		\(\Wit\) in artifact \cdone{} & one-byte hash of the recomputed compression-residue witness & diagnostic hidden code value interpreted through hashed-witness detectability \\
		binding mutant & remove or overwrite the confirmation code before the final key derivation & direct localized-list-hit fault \\
		decision mutant & accept on a branch condition that is unchanged on honest ciphertexts & support-coincidence example \\
		symmetric endpoint mutant & remove the code from both encapsulation and decapsulation & self-vs-reference example \\
		\bottomrule
	\end{tabularx}
	\caption{\mlkem{} interpretation of the abstract quantities.}
	\label{tab:mlkem-map}
\end{table}

\subsection{August 2025 \hqc{}}

For \hqc{}, we use the public specification dated 2025-08-22 as the main
public baseline line~\cite{HQC2025Spec}.  That line uses the salted
implicit-rejection FO transform analyzed for multi-target security in
\cite{GlabushHoevelmannsStebila2025MultiTarget}: the ciphertext is salted, the
encapsulation key is hashed into the derivations, and the derivation transcript
used by the transform is
\[
H(ek_{\mathsf{KEM}})\|m\|\mathsf{salt}.
\]
In our augmented variants, the diagnostic confirmation code comes from the
truncation tail of the \(v\)-path.  The August 2025 salted-SFO line yields a direct
range statement for that tail, formalized in \Cref{prop:cdstar}.  Tail-code
detectability requires \(\TailEnt\), as stated in \Cref{cor:hqc-law}, or
another explicit lower bound for the witness given the SUT view.  We consider
two tail-derived variants:
\begin{itemize}[leftmargin=1.8em]
	\item \cdone{} retains one byte, truncated to the available tail;
	\item \cdstar{} retains the full truncation-tail range budget.
\end{itemize}
The artifact patch computes
\(\theta=G_\theta(H(ek_{\mathsf{KEM}})\|m\|\mathsf{salt})\) with the
\hqc{} domain byte for \(G_\theta\).  The modified PKE encryption routine
\texttt{hqc\_pke\_encrypt\_cd} extracts the confirmation code from the
residual bits of the \(v\)-path immediately before \texttt{vect\_truncate}.
For bit \(i\) of the retained code, the source bit is
\(\mathsf{PARAM\_N1N2}+i\) in the pre-truncation \(v\)-buffer, subject to the
configured bit limit and output byte length.  The good key is computed by
\texttt{hash\_k\_confirmation}, which absorbs
\(H(ek_{\mathsf{KEM}})\), \(m\), \(\mathsf{salt}\), the retained confirmation
code, and the \hqc{} domain byte for \(G_K\).  Decapsulation decrypts,
recomputes \(\theta\), reruns the same code-producing encryption routine,
compares the PKE ciphertext, and uses the same code-bearing good-key hash for
the good branch.

The earlier verifiable-decapsulation discussion of \hqc{} uses its own
confirmation-code analysis and independence assumptions for that setting.  For
the August 2025 salted transcript, we prove the retained range budget and state
the conditional tail-entropy assumption needed to turn that range into a
detectability bound.

\begin{table}[t]
	\centering
	\footnotesize
	\begin{tabularx}{\textwidth}{@{}>{\raggedright\arraybackslash}p{2.4cm}
			>{\raggedright\arraybackslash}p{5.0cm}
			>{\raggedright\arraybackslash}X@{}}
		\toprule
		Step & August 2025 \hqc{} baseline & Augmented \hqc{} variant used here \\
		\midrule
		Encaps randomness and key seed & \((K,\theta)\leftarrow G(H(ek_{\mathsf{KEM}})\|m\|\mathsf{salt})\) & derive the same reencryption transcript, with \(G_\theta\) and \(G_K\) modeled as separated domains \\
		PKE encryption & \(c_{\mathsf{PKE}}\leftarrow \mathsf{HQC\text{-}PKE.Enc}(ek_{\mathsf{KEM}},m;\theta)\) & \((c_{\mathsf{PKE}},cd)\leftarrow \mathsf{HQC\text{-}PKE.EncC}(ek_{\mathsf{KEM}},m;\theta)\) \\
		Good key & \(K\) from the baseline derivation & \(K_+\leftarrow G_K(\langle\mathsf{good},H(ek_{\mathsf{KEM}})\|m\|\mathsf{salt},cd\rangle)\) \\
		Decaps recomputation & recompute \(m'\), \(\theta'\), and \(c'_{\mathsf{PKE}}\), then compare & recompute \(m'\), \(\theta'\), \(c'_{\mathsf{PKE}}\), and \(cd'\), then bind \(cd'\) into \(K_+'\) \\
		Ciphertext format & \((c_{\mathsf{PKE}},\mathsf{salt})\) & unchanged \((c_{\mathsf{PKE}},\mathsf{salt})\) \\
		Shared-secret output & baseline shared secret & changed by confirmation-code binding, hence non-output-compatible with the baseline \\
		\bottomrule
	\end{tabularx}
	\caption{Pseudo-diff for the August 2025 public \hqc{} line and the
		confirmation-code-augmented variants used in the artifact.}
	\label{tab:hqc-pseudodiff}
\end{table}

\paragraph{August 2025 as anchor.}
During 2025, the public \hqc{} line changed in ways that affect the baseline
for the augmented variants.  The February line is useful historical context,
including for the implementation flaw that motivated part of this work, but the
August public specification is the relevant line for the August 2025 \hqc{}
experiments.  The salted transcript, the implicit-rejection flow, and the key
layout are part of the baseline from which the augmented variants are derived.

\begin{table}[t]
	\centering
	\small
	\begin{tabular}{l S[table-format=2.0] S[table-format=2.0] S[table-format=2.0]}
		\toprule
		Parameter set & {\(\ell_{\mathsf{tail}}\)} & {\cdone{} range bits} & {\cdstar{} range bits} \\
		\midrule
		HQC-1 & 5  & 5 & 5 \\
		HQC-3 & 11 & 8 & 11 \\
		HQC-5 & 37 & 8 & 37 \\
		\bottomrule
	\end{tabular}
	\caption{August 2025 \hqc{} tail-derived range budgets.  \(\ell_{\mathsf{tail}}\) is the number
		of residual truncation-tail bits in the code source.  These are range bounds;
		conditional min-entropy is the separate hypothesis used in
		\Cref{cor:hqc-law}.}
	\label{tab:hqc-budgets}
\end{table}

\paragraph{Cone accounting for \hqc{} tail variants.}
For \hqc{}, the retained code cone is exactly the retained truncation-tail
slice.  On the August 2025 salted-SFO line, \(\ell_{\mathsf{tail}}=5,11,37\) for
HQC-1/3/5.  The \cdone{} line retains \(5,8,8\) bits, so it leaves
\(0,3,29\) tail bits outside its retained tail-code cone.  The \cdstar{} line retains the
full tail range and therefore leaves no tail bit outside the tail-code cone,
although operations outside the tail path or outside the honest-support target
remain subject to \Cref{thm:dependency-cone-lower-bound}.

\begin{table}[H]
	\centering
	\small
	\begin{tabular}{lrrrrr}
		\toprule
		Parameter set & \(\ell_{\mathsf{tail}}\) & \cdone{} retained & outside \cdone{} tail cone & \cdstar{} retained & outside \cdstar{} tail cone \\
		\midrule
		HQC-1 & 5  & 5 & 0  & 5  & 0 \\
		HQC-3 & 11 & 8 & 3  & 11 & 0 \\
		HQC-5 & 37 & 8 & 29 & 37 & 0 \\
		\bottomrule
	\end{tabular}
	\caption{Dependency-cone accounting for August 2025 \hqc{} tail-derived codes.
		These are cone and range counts, not entropy claims.  Tail-code detectability
		still requires \TailEnt{} or another unpredictability bound for the retained
		bits, while non-retained tail bits give theorem-covered non-certification only
		when the row includes a cone-inactivity certificate for the black-box target,
		as in \Cref{tab:cone-certificates}.}
	\label{tab:hqc-cone-accounting}
\end{table}

\begin{table}[t]
	\centering
	\footnotesize
	\begin{tabularx}{\textwidth}{@{}>{\raggedright\arraybackslash}p{2.4cm}
			>{\raggedright\arraybackslash}p{4.2cm}
			>{\raggedright\arraybackslash}X@{}}
		\toprule
		HQC aspect & February 2025 line & August 2025 line used here \\
		\midrule
		Transform shape & earlier salted/HHK-style presentation & public salted implicit-rejection flow with explicit \(H(ek_{\mathsf{KEM}})\|m\|\mathsf{salt}\) transcript \\
		Public-key binding & narrower historical binding surface & all bytes of \(ek_{\mathsf{KEM}}\) enter the key and randomness derivation through the public hash \\
		Secret-key layout & historical appended layout & August 2025 public layout and compressed alternatives \\
		Code source & truncation-tail opportunity present & truncation-tail opportunity still present and parameterized \\
		Role in artifact & drift comparison & main public baseline \\
		\bottomrule
	\end{tabularx}
	\caption{Condensed view of the \hqc{} drift relevant to this paper.}
	\label{tab:hqc-drift-main}
\end{table}

\begin{table}[t]
	\centering
	\footnotesize
	\begin{tabularx}{\textwidth}{@{}>{\raggedright\arraybackslash}p{3.0cm}
			>{\raggedright\arraybackslash}p{4.6cm}
			>{\raggedright\arraybackslash}X@{}}
		\toprule
		Object & August 2025 \hqc{} reading & Testing role \\
		\midrule
		\(B\) & \(H(ek_{\mathsf{KEM}})\|m\|\mathsf{salt}\) & localized final-key prefix fixed by the salted reference trace \\
		\(\theta\) & \(G_\theta(B)\) & reencryption randomness that must be recomputed \\
		\(\Wit\) & retained truncation-tail code \(cd\) & hidden code entering the final good-key query \\
		\cdone{} & one-byte retained code, truncated to the available tail on HQC-1 & small August 2025-line variant with a range budget; detectability requires TailEnt or another entropy claim \\
		\cdstar{} & full retained truncation-tail code & full-range variant for the same family; range is not unpredictability \\
		\bottomrule
	\end{tabularx}
	\caption{August 2025 \hqc{} interpretation of the abstract quantities.}
	\label{tab:hqc-map}
\end{table}

\subsection{Instantiation Contributions}

\mlkem{} links the testing statement to a source confirmation-code construction
derived from a standardized KEM and also supplies a small diagnostic
hashed-code artifact line.  The August 2025 \hqc{} line has a public salted
flow with a tail-code family whose retained range can be swept experimentally.
The main theorem applies to augmented variants once the list-hit term is
bounded by source cUP, by min-entropy, or by an explicit diagnostic assumption.
The standard lines provide baselines and implementation context, and the
\hqc{} \cdone{} versus \cdstar{} comparison varies retained range and measured
witness-guess behavior while keeping the KEM design goal fixed.

\begin{table}[t]
	\centering
	\footnotesize
	\begin{tabularx}{\textwidth}{@{}>{\raggedright\arraybackslash}p{2.4cm}
			>{\raggedright\arraybackslash}p{4.5cm}
			>{\raggedright\arraybackslash}X@{}}
		\toprule
		Family & Hidden code value & Main role in this paper \\
		\midrule
		\mlkem{} & source confirmation code from the verifiable-decapsulation line; one-byte hashed diagnostic code in the artifact & source-paper instantiation plus diagnostic artifact variant over FIPS 203 algorithms \\
		August 2025 \hqc{} & truncation-tail diagnostic code in the salted public line & salted public baseline plus range-budget diagnostic variants under explicit TailEnt assumptions when used for detectability \\
		\bottomrule
	\end{tabularx}
	\caption{The two concrete instantiations used in the paper.}
	\label{tab:instantiations}
\end{table}

\begin{table}[t]
	\centering
	\footnotesize
	\begin{tabularx}{\textwidth}{@{}>{\raggedright\arraybackslash}p{3.0cm}
			>{\raggedright\arraybackslash}p{2.5cm}
			>{\raggedright\arraybackslash}X@{}}
		\toprule
		Artifact line & Final-key length \(\kappa\) & Reading \\
		\midrule
		\mlkem{}-\cdone{} & 256 bits & SHA3/SHAKE-derived 32-byte shared secret in the diagnostic augmented line \\
		\hqc{}-\cdone{} & 256 bits & 32-byte shared secret in the augmented salted line \\
		\hqc{}-\cdstar{} & 256 bits & same final-key length, with a larger retained code input \\
		\bottomrule
	\end{tabularx}
	\caption{Final-key length used for the \(2^{-\kappa}\) term in the artifact
		variants.  The term is a random-oracle-model final-key coincidence term.}
	\label{tab:kappa}
\end{table}

\section{Artifact}
\label{sec:artifact}

FO-FaultBench packages the experiment code, patch lines, mutant catalogs,
frozen JSONL records, three derived summaries, and two plots for the confirmation-code
decapsulation experiments.  The package separates a read-only validation path
from a rebuild path: \texttt{make check} validates the stored layout and parses
the JSONL records, \texttt{make summary} prints a compact inventory, and the
fetch/build/run targets acquire pinned sources and rerun the campaigns.

\paragraph{Baselines.}
The main baselines are \mlkemnative{} v1.0.0 and the August 2025 public \hqc{}
reference line.  The \hqc{} baseline is built with an explicit portability
patch on the development host.  The derived lines are materialized from local
patches: \path{mlkem-vsfo-cd1}, \path{hqc-vsfo-cd1}, and
\path{hqc-vsfo-cdstar}.  The historical February 2025 \hqc{} line is retained
for standards-drift comparison.

\paragraph{Harnesses.}
The artifact implements \(\Dhonref\), \(\Dhonself\), \(\Dctmal\),
\(\Ddiff\), and \(\Dkeyfmt\), using the same harness names as the theorem
statements; every ``detected'' or ``undetected'' row is relative to the stated
harness.

\paragraph{Mutants.}
The mutant catalogs contain deterministic recomputation and binding faults,
decision-only faults, symmetric endpoint faults, probabilistic code-guessing
faults, and the \hqc{} range-budget sweep.  Each catalog entry records the
fault class, affected operation, expected harness behavior, and patch.

\paragraph{Reporting.}
Deterministic tables report pass/fail counts over the frozen seeds and
parameter sets.  Probabilistic rows report observed pass
probability and a 95\% interval or upper bound.  For a fixed
line--fault--harness row with \(x\) passes in \(n\) seeded harness runs, the
point estimate is \(\hat p=x/n\).  Nonzero rows use the Wilson score interval
with \(z_{0.975}=1.9599639845\),
\[
\frac{\hat p+z^2/(2n)\pm z\sqrt{\hat p(1-\hat p)/n+z^2/(4n^2)}}{1+z^2/n}.
\]
Zero-pass rows use the upper endpoint \(1-(0.025)^{1/n}\) of the two-sided
95\% Clopper--Pearson interval.  Repeated-test survival columns are computed
as \(\hat p^t\) under the reset-product interpretation of
\Cref{rem:repeated}, using the zero-pass upper endpoint where applicable.

\paragraph{Reproducibility.}
The packaged snapshot contains eleven raw JSONL files, three Markdown
summaries, two SVG plots, and a hash file over the stored package inputs. The
read-only path uses \texttt{make check} and \texttt{make summary}; it validates the
expected directories and files, parses every stored JSONL record, and reports
the inventory without rewriting the stored results. The current snapshot
contains 3592 raw records. The rebuild path uses \texttt{make fetch},
\texttt{make verify-upstreams}, \texttt{make build-baselines},
\texttt{make prepare-variants}, \texttt{make run-core}, and
\texttt{make run-mutants}; it fetches pinned upstreams, verifies them against
\texttt{artifact/baselines.json}, builds the baseline and derived
implementations, and reruns the harness families. Reruns may create build
products and append new JSONL records under \texttt{results/raw/}.

Each source-security row in the artifact records the interface classification
used to interpret the corresponding harness result.  Rows using the source
\mlkem{} confirmation code are eligible for cUP-faithful certification, whereas
rows using \cdone{} or \cdstar{} diagnostic codes are interpreted through the
diagnostic or TailEnt route.

\begin{table}[H]
	\centering
	\footnotesize
	\begin{tabularx}{\textwidth}{@{}>{\raggedright\arraybackslash}p{3.0cm}
			>{\raggedright\arraybackslash}p{3.3cm}
			>{\raggedright\arraybackslash}X@{}}
		\toprule
		Campaign & Main harnesses & Stored records \\
		\midrule
		deterministic recomputation and binding & \(\Dhonref\), \(\Dctmal\), \(\Ddiff\) & raw mutant and differential JSONL records, with the compact mutant summary \\
		decision-only and comparison faults & \(\Dctmal^\mu\), \(\Ddiff^\mu\), \(\Dhonref\) & raw mutant and differential JSONL records \\
		symmetric endpoint faults & \(\Dhonref\), \(\Dhonself\) & raw honest-reference, honest-self, and mutant JSONL records \\
		probabilistic code guessing & \(\Dhonref\) & raw probabilistic-witness JSONL records and compact probability summary \\
		\hqc{} range-budget sweep & \(\Dhonref\) & raw sweep JSONL records, compact sweep summary, and frozen plot \\
		\hqc{} standards drift & baseline/spec comparison & pinned baseline metadata and raw comparison records \\
		\bottomrule
	\end{tabularx}
	\caption{Main artifact campaigns.}
	\label{tab:campaigns}
\end{table}

\begin{table}[H]
	\centering
	\footnotesize
	\begin{tabularx}{\textwidth}{@{}>{\raggedright\arraybackslash}p{2.6cm}
			>{\raggedright\arraybackslash}p{3.0cm}
			>{\raggedright\arraybackslash}X@{}}
		\toprule
		Line & Status in package & Role \\
		\midrule
		\path{mlkem-native-v1.0.0} & pinned, built, smoke-validated & plain \mlkem{} baseline \\
		\path{mlkem-vsfo-cd1} & derived by patch and validated & transformed \mlkem{} code line \\
		\path{hqc-current-ref} & pinned August 2025 public line with portability patch & August 2025 \hqc{} baseline \\
		\path{hqc-vsfo-cd1} & derived by patch and validated & August 2025 \hqc{} one-byte code line \\
		\path{hqc-vsfo-cdstar} & same derived line with full code retention & August 2025 \hqc{} full-budget code line \\
		\texttt{\seqsplit{hqc-historical-feb-2025-additional-128}} & acquired for comparison & standards-drift context \\
		\bottomrule
	\end{tabularx}
	\caption{Baseline and derived-line inventory used by the experiments.}
	\label{tab:baseline-inventory}
\end{table}

\begin{definition}[Artifact ROM-interface certificate]
	\label{def:artifact-rom-certificate}
	An artifact row has a ROM-interface certificate if the row records the
	following structural correspondence between the concrete harness and the
	oracle model.
	\begin{enumerate}[label=(R\arabic*),leftmargin=2em]
		\item Each concrete hash, XOF, or KDF call used as a modeled oracle call is
		assigned to a length-delimited, domain-separated namespace.  Calls modeled as
		\(G_K\), \(\Hcd\), or another oracle either use distinct namespaces or use an
		injective domain tag inside a single concrete primitive.
		\item The final-key point \(x^\star=\langle\mathsf{good},B,W\rangle\) and the
		projection \(Q_G(B)\) use the same serialization as the concrete harness.
		\item All SUT calls to the modeled final-key namespace are logged before
		projection to \(Q_G(B)\), and the SUT has no access to a shared oracle table or
		unlogged cache for \(G_K(x^\star)\).
		\item The SUT input excludes \(W\), \(K\), \(x^\star\), the private reference
		evaluation \(G_K(x^\star)\), deterministic witness-sampling seeds, and debug
		or trace data containing witness bytes.
		\item Any disabled item, noncanonical serialization, shared cache, or
		implementation leak affecting the hidden witness or final key is reported as
		an explicit aliasing or freshness defect for that row.
	\end{enumerate}
\end{definition}

\begin{proposition}[Structural artifact-to-ROM correspondence]
	\label{prop:artifact-rom-correspondence}
	If an artifact row satisfies \Cref{def:artifact-rom-certificate}, then in the
	ROM interpretation induced by that certificate the logged localized final-key
	queries are exactly the modeled list \(Q_G(B)\), and the row has
	\(\epsilon_{\mathsf{fresh}}=0\) except for explicitly reported defects.  The
	claim is a structural correspondence between the artifact interface and the
	ROM model; random-oracle ideality for the concrete SHA3/SHAKE primitives is
	the usual modeling assumption, not an empirical conclusion of the artifact.
\end{proposition}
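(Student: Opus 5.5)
The statement has two parts, and I would prove them in order. Throughout, I fix the ROM interpretation induced by the certificate of \Cref{def:artifact-rom-certificate}. Each concrete hash, XOF, or KDF call is mapped to a lazily sampled oracle in its namespace. By (R1), this map is injective across namespaces, so distinct concrete inputs in different modeled roles become distinct oracle points. It is therefore a valid relabeling in the sense used for (S2) in \Cref{prop:split-transfer}.

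\textbf{The list is exactly \(Q_G(B)\).} Take a realized run and let \(\mathcal L\) be the set of logged SUT calls to the final-key namespace. By (R3), every SUT call to that namespace is logged before projection, and there is no unlogged cache through which \(G_K\) values could be obtained. So \(\mathcal L\) coincides with the set of \(G_K\)-queries of the modeled SUT, and this gives both inclusions. By (R2), the concrete serialization of a logged call equals the encoding \(\langle\mathsf{good},B,w\rangle\). Projecting onto calls with prefix \(B\) and reading off the suffix \(w\) therefore recovers the definition of \(Q_G(B)\) in \Cref{def:href}. Any serialization that breaks injectivity is reported under (R5) as an alias, so it is charged to \(\epsilon_{\mathsf{alias}}\) rather than silently changing the list.

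\textbf{Freshness.} Let \(\mathcal E\) be the event of \Cref{def:alias-freshness}: correctness holds, there is no alias, and \(x^\star\) is not queried. On \(\mathcal E\), I would show that \(K=G_K(x^\star)\) is uniform and independent of the complete output-time view \(V\). The view consists of three parts:
\begin{enumerate}[leftmargin=*]
\item the SUT input, which by (R4) excludes \(W\), \(K\), \(x^\star\), \(G_K(x^\star)\), witness seeds and debug data;
\item oracle replies at points other than \(x^\star\), which are lazily sampled independently of the value at \(x^\star\);
\item no shared table or cache access, by (R3).
\end{enumerate}
Together these make \(V\) a function of the SUT coins, the instance excluding \(G_K(x^\star)\), and oracle values at points other than \(x^\star\). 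Hence \(\Pr[K=k\mid V=v,\mathcal E]=2^{-\kappa}\), which gives \(\epsilon_{\mathsf{fresh}}=0\). If some item fails, (R5) reports it as an explicit freshness or alias defect, and the statement then holds with that reported defect in place of zero.

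\textbf{Main obstacle and scope.} The hardest step is the independence argument: conditioning on \(\mathcal E\) must not bias \(G_K(x^\star)\). I would handle this by noting that \(\mathcal E\) is measurable with respect to the same variables that determine \(V\); the non-query of \(x^\star\) and the alias check depend only on the query list. Consequently, the lazily sampled value at \(x^\star\) is never touched on \(\mathcal E\) and remains uniform. The final sentence of the statement needs no proof: it records that the correspondence is conditional on modeling SHA3/SHAKE as random oracles, so nothing empirical is claimed.
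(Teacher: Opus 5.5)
Your overall route is the paper's: (R1) and (R2) give the namespace and canonical projection, (R3) identifies the log with the modeled query list and rules out unlogged caches, (R4) removes witness-bearing data from the view, and (R5) charges any violation. Your first part, which identifies the log with \(Q_G(B)\), is fine.

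\textbf{The gap.} It lies in the step you yourself single out as the main obstacle. The event \(\mathcal E\) of \Cref{def:alias-freshness} contains the conjunct \(\mathsf{Corr}=[\Decaps^C(dk,c)=K]\). This conjunct is \emph{not} determined by the variables that exclude \(G_K(x^\star)\), because it compares the reference output with \(K\) itself. Suppose the reference decapsulation fails to recover \(B\). It then outputs either \(K_-=J(H(ek_{\mathsf{KEM}})\|\sigma\|c_{\mathsf{KEM}})\) or \(G_K\) at a recomputed point \(x'\neq x^\star\). In that case \(\mathsf{Corr}\) holds exactly on the coincidence event that this value equals \(G_K(x^\star)\). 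Conditioning on \(\mathcal E\) therefore pins \(K\) to a value that a SUT holding \(dk\) can compute.

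\textbf{A concrete counterexample.} Take the correct decapsulator as the SUT; it satisfies (R1)--(R4).
\begin{itemize}
\item On the success path it queries \(x^\star\), so \(\mathcal E\) fails.
\item On the failure path it does not query \(x^\star\), and \(\mathcal E\) holds only when its own output equals \(K\).
\end{itemize}
So on \(\mathcal E\) the key is a function of \(V\) and the posterior is a point mass. The literal definition then gives \(\epsilon_{\mathsf{fresh}}=1-2^{-\kappa}\) whenever the decryption-failure probability is positive. The same holds for a decision-only mutant that still computes \(K_-\). For binding mutants that never query \(x^\star\), the bias is only of order \(\delta_{\Pi^C}2^{-\kappa}\), but it is still nonzero. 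Hence \(\Pr[K=k\mid V=v,\mathcal E]=2^{-\kappa}\) does not follow from the certificate.

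\textbf{What is needed.} The conditioning event must be independent of \(G_K(x^\star)\). The paper's (much sketchier) proof phrases freshness as the posterior of \(K\) after conditioning on no list hit, i.e.\ on \(\neg\mathsf{Alias}_{\mathcal S}\wedge\neg\mathsf{Hit}\). You may also add the recomputation event \(D\) that the reference decapsulation reaches \(x^\star\). In the split presentation, \(D\) is a function of \((dk,c,B,W)\) and the \(G_\theta\) table.

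Such an event is decided by the run up to the first query of \(x^\star\); this is precisely your identical-until-bad observation. On that event, \(K\) is exactly uniform given \(V\). Correctness is then charged as a separate bad event in \Cref{thm:code-blind} rather than placed inside the conditioning. If you use \(D\), note that \(\Pr[\neg D]=\delta_{\Pi^C}/(1-2^{-\kappa})\).

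With \(\mathcal E\) as literally defined, including \(\mathsf{Corr}\), your argument cannot yield \(\epsilon_{\mathsf{fresh}}=0\). By the example above, that conclusion fails for such SUTs.
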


\begin{proof}
	By (R1) and (R2), concrete final-key calls have a unique modeled namespace and
	canonical projection to \(B\)-localized candidates.  By (R3), every SUT call to
	that namespace appears in the logged list, and no shared table or cache exposes
	the SUT the private reference value without a logged query.  By (R4), the SUT
	view omits the hidden witness, final key, final-key point, and witness-bearing
	sampling or debug data.  Thus the row has no unreported path that increases the
	posterior guessing probability of \(K=G_K(x^\star)\) after conditioning on no
	list hit.  Any violation is charged by (R5) as aliasing or freshness defect.
\end{proof}

\section{Experimental Results}
\label{sec:results}

\begin{definition}[Artifact claim levels]
	\label{def:artifact-claim-levels}
	An experimental row has one of three claim levels.
	\begin{enumerate}[label=(A\arabic*),leftmargin=2em]
		\item \textbf{Finite-catalog claim.} The row concerns exactly the listed
		mutants, seeds, parameter sets, and harnesses.
		\item \textbf{Theorem-covered class claim.} The row is paired with a formal
		class hypothesis such as cUP-faithfulness, conditional min-entropy,
		raw-witness entropy, TailEnt, support equivalence, or code-source coverage.
		\item \textbf{No class claim.} If neither (A1) nor (A2) is stated, the row is
		not evidence for unenumerated faults.
	\end{enumerate}
\end{definition}

\begin{proposition}[Finite-catalog soundness of artifact rows]
	\label{prop:finite-catalog-soundness}
	Let \(\mathcal M\) be the finite mutant catalog named by a table row, let
	\(\mathcal D_n\) be the frozen seeded harness distribution used by the
	artifact, and let \(x_m\) be the number of accepting runs for
	mutant \(m\in\mathcal M\) over \(n_m\) runs.  The unconditional empirical
	statement of the row is exactly the collection of estimates
	\(\widehat p_m=x_m/n_m\) with the stated confidence-interval convention.  It
	implies no statement about a mutant outside \(\mathcal M\) unless an
	additional theorem-covered class claim is invoked.
\end{proposition}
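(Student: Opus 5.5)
The argument unfolds the definitions of the row's empirical content and observes that nothing beyond the finite catalog is measured. First fix the objects named by the row: the catalog \(\mathcal M\), the frozen seeded distribution \(\mathcal D_n\), and, for each \(m\in\mathcal M\), the Bernoulli indicators of acceptance over the \(n_m\) seeded runs. By the reporting convention of \Cref{sec:artifact}, the stored JSONL records for the row determine exactly the pairs \((x_m,n_m)\). The point estimates \(\widehat p_m=x_m/n_m\) and the Wilson or zero-pass Clopper--Pearson endpoints are deterministic functions of these pairs. So the row's unconditional content is the collection \(\{(\widehat p_m,\mathrm{CI}_m)\}_{m\in\mathcal M}\), read as interval statements about \(\Pass_{\Pi,m,\mathcal T}\) under \(\mathcal D_n\). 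This gives the first half of the claim, that the empirical statement is \emph{exactly} this collection, since no other quantity is computed from the records.

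For the second half, that nothing follows for \(m'\notin\mathcal M\), I will give a separation argument in the style of \Cref{thm:dependency-cone-lower-bound}. The acceptance events of mutants in \(\mathcal M\) are measurable with respect to those mutants' runs only. Now take any alternative implementation \(m'\). One can define \(m''\) that behaves as \(m'\) on the seeded inputs of \(\mathcal D_n\) and as an arbitrary fixed (for example, always-reject) strategy elsewhere. Alternatively, one can pair \(m'\) with a fault of arbitrary pass probability. The observed records for \(\mathcal M\) are unchanged, because \(m'\) was never run. So the joint distribution of the row's data is identical under any choice of behaviour for \(m'\). The same two-hypothesis argument as in \Cref{thm:dependency-cone-lower-bound} then shows that no statistic of the row separates \(\Pass(m')=0\) from \(\Pass(m')=1\) with \(\alpha+\beta<1\). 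In other words, the row carries no information about \(m'\).

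The remaining step is to explain why invoking a theorem-covered class claim (A2) is the only way to extend the row. I will do this by pointing to the class hypotheses of \Cref{thm:certified-lcup}, \Cref{cor:entropic}, \Cref{thm:hashed-diagnostic}, \Cref{cor:hqc-law} and \Cref{thm:support-equivalence}. Each of these bounds \(\Pass\) uniformly over a class \(\mathcal C\), and none relies on the empirical counts. The main obstacle is partly definitional: ``implies no statement'' must be made precise. I will formalize it as non-identifiability, meaning the row's data distribution does not depend on the behaviour of \(m'\), rather than attempt any stronger logical claim.
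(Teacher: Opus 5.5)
Your proposal is correct. The first half (the replayed records fix \((x_m,n_m)\), and the estimates and intervals are deterministic functions of them) is the paper's argument. The second half uses a different indistinguishable pair.

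The paper anchors its pair at a tested mutant. For some \(m\in\mathcal M\) it builds a deterministic implementation that agrees with \(m\) on every frozen input and differs on an untested one. The frozen transcripts then coincide while the pass probabilities can differ. So the row cannot transfer even to an unenumerated implementation that looks identical to a catalog member on the seeds.

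You instead note that the records are produced only by runs of \(\mathcal M\), so their distribution does not depend on what any \(m'\notin\mathcal M\) does. You then turn this into an \(\alpha+\beta\ge 1\) two-hypothesis bound in the style of \Cref{thm:dependency-cone-lower-bound}. This makes ``implies no statement'' precise as non-identifiability, which the paper leaves informal. By itself, though, it only shows that unrun implementations are unconstrained. Your auxiliary \(m''\) is anchored at \(m'\) rather than at a tested mutant, so it adds nothing. Re-anchoring it at some \(m\in\mathcal M\) recovers the paper's sharper point.

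Your third paragraph is not needed. The proposition only carves out the class-claim exception; it does not assert that such claims are the only way to extend a row. Also, exact pass probabilities \(0\) and \(1\) need not be attainable, but any two implementations with separated pass probabilities serve the two-hypothesis argument.
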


\begin{proof}
	The replay script fixes the baseline, patch, mutant, parameter set, harness,
	seed schedule, and acceptance predicate.  Therefore each \(x_m/n_m\) is an
	empirical estimate for that named finite experiment.  For any mutant
	\(m'\notin\mathcal M\), one can define a deterministic implementation that
	agrees with a tested mutant on all frozen inputs and differs on an untested
	input.  The finite transcript distribution cannot distinguish these two
	implementations.  Hence the finite row alone cannot imply any statement about
	\(m'\).
\end{proof}

The experimental section reports rows as empirical witnesses for
theorem-relevant behavior under named harnesses and fixed catalogs.  After
\Cref{thm:dependency-cone-lower-bound}, a negative honest-reference row has a
formal interpretation when the omitted operation is outside the support-active
confirmation cone and the implementation class contains the coupled erasure:
no black-box honest-support certifier can certify that operation without
changing the code source, adding instrumentation, or sampling a different
support.  A positive row is a finite-catalog observation unless it is linked to
one of the theorem-covered routes: source cUP, conditional entropy,
hashed-witness entropy, TailEnt, support equivalence, self/reference
separation, or dependency-cone non-certification.  Negative rows use the last
route only when a cone-inactivity certificate of \Cref{def:cone-certificate} is
supplied.

\begin{table}[H]
	\centering
	\footnotesize
	\begin{tabularx}{\textwidth}{@{}>{\raggedright\arraybackslash}p{3.0cm}
			>{\raggedright\arraybackslash}p{3.1cm}
			>{\raggedright\arraybackslash}p{3.2cm}
			>{\raggedright\arraybackslash}X@{}}
		\toprule
		Artifact family & Honest-reference claim level & Additional harness claim & Reason \\
		\midrule
		Source \mlkem{} selected-code faults & eligible for theorem-covered positive status once source cUP and a cUP-faithful harness certificate are supplied & none needed for selected code nodes beyond that certificate & selected coefficients lie in the source confirmation cone \\
		\mlkem{} unselected coefficient omissions & selected-source-cone boundary; full non-certification only with a \(\Tgt_{\mathsf{cc},\mathcal H}\) cone certificate & requires larger selected set, instrumentation, malformed support, or differential support & outside the selected source-code cone by \Cref{tab:mlkem-cone-accounting}; see \Cref{tab:cone-certificates} for the extra full-target condition \\
		\mlkem{}-\cdone{} drop/overwrite diagnostic code & theorem-covered positive under hashed-witness theorem plus RawEnt, or finite-catalog if RawEnt is not asserted & none for the named deterministic mutants & hidden diagnostic code changes the final-key point \\
		Plain reencryption omissions & finite-catalog unless a separate ordinary-path cone certificate is supplied & malformed/differential rows may be theorem-covered for the sampled mode & may be visible through ordinary output behavior, not through the confirmation-code cone alone \\
		\hqc{}-\cdone{} retained tail-bit faults & theorem-covered positive under \TailEnt{} for retained bits & none for retained tail cone & retained bits enter the hidden final-key point \\
		\hqc{}-\cdone{} non-retained tail-bit omissions & theorem-covered non-certification only with the retained-tail cone certificate & use \cdstar{}, instrumentation, or a harness exposing those bits & outside \cdone{} retained tail cone by \Cref{tab:hqc-cone-accounting,tab:cone-certificates} \\
		\hqc{}-\cdstar{} retained tail-bit faults & theorem-covered positive under \TailEnt{} for the full tail & none for tail bits & all tail bits enter the retained code cone \\
		Always-select-\(K_+\) decision faults on honest ciphertexts & theorem-covered non-certification only with the honest-support branch certificate & malformed/differential support establishes the positive branch claim & branch selector is support-constant on honest ciphertexts; see \Cref{tab:cone-certificates} \\
		Component-selective comparison faults on honest ciphertexts & theorem-covered non-certification only for rows with a mode-specific cone certificate & differential support establishes the mode-selective claim & comparison target must be certified inactive as a set for that mode \\
		Symmetric endpoint faults & theorem-covered separation & honest-self and honest-reference must both be reported & peer model changes the sampled reference key \\
		Witness-guess mutants & theorem-covered quantitative row for the named guessing interface & reset-product theorem for repeated independent trials & pass probability follows the retained-code list-hit scale \\
		\bottomrule
	\end{tabularx}
	\caption{Artifact claim levels after the dependency-cone theorem.  Negative
		honest-reference rows are upgraded from mere empirical absences to
		theorem-covered non-certification statements only when the omitted operation
		set has a cone-inactivity certificate for the stated target, support, and
		erasure-containing implementation class.}
	\label{tab:artifact-cone-claim-levels}
\end{table}

\subsection{Binding faults}

\begin{table}[H]
	\centering
	\footnotesize
	\begin{tabularx}{\textwidth}{@{}>{\raggedright\arraybackslash}p{2.0cm}
			>{\raggedright\arraybackslash}p{2.3cm}
			>{\raggedright\arraybackslash}p{2.5cm}
			cc
			>{\raggedright\arraybackslash}X@{}}
		\toprule
		Line & Mutant family & Fault class & \(\Dhonref\) & \(\Dctmal\) & Reading \\
		\midrule
		\mlkem{} plain & reencryption omission & recomputation & detected (3/3) & undetected (0/3) & the listed recomputation mutants are visible on this honest support \\
		\mlkem{} VSFO/\cdone{} & overwrite or drop \(cd\) & binding & detected (6/6) & undetected (0/6) & the transformed final key exposes the missing code \\
		\mlkem{} VSFO/\cdone{} & always select \(K_{+}\) & decision-only & undetected (0/3) & detected (3/3) & malformed support samples this branch behavior \\
		August 2025 \hqc{} plain & reencryption omission & recomputation & detected (3/3) & undetected (0/3) & the August 2025 public line exposes this listed recomputation mutant on this support \\
		August 2025 \hqc{} VSFO/\cdone{} & overwrite or drop \(cd\) & binding & detected (6/6) & undetected (0/6) & the salted line exposes the missing code \\
		August 2025 \hqc{} VSFO/\cdone{} & always select \(K_{+}\) & decision-only & undetected (0/3) & detected (3/3) & malformed support samples this branch behavior \\
		\bottomrule
	\end{tabularx}
	\caption{Main deterministic mutant families over the frozen parameter-set
		tranche.  Counts are detected families over tested rows for the stated harness.}
	\label{tab:binding-results}
\end{table}

\begin{observation}
	On both scheme families, the listed plain reencryption-omission mutants are
	detected under honest-reference testing on the frozen support.  The transformed rows isolate
	the additional signal introduced by the confirmation-code path: faults that
	remove, overwrite, or fail to bind the recomputed code change the good-key
	input under \(\Dhonref\), whereas branch-only faults require malformed or
	differential support.
\end{observation}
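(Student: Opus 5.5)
The observation is an empirical claim about the frozen artifact rows in \Cref{tab:binding-results}, so I will not try to prove it as a class statement. I will read it as a finite-catalog claim in the sense of \Cref{def:artifact-claim-levels}(A1), plus a theorem-covered explanation for each row family. The first step is to invoke \Cref{prop:finite-catalog-soundness}. The detection counts $3/3$ and $6/6$ under $\Dhonref$, and $0/3$ and $0/6$ under $\Dctmal$, are then exact statements about the named mutants, seeds, and parameter sets. This already establishes the literal content of the first sentence for the listed plain reencryption-omission mutants.

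Next I would explain the transformed binding rows. Fix a drop-$cd$ or overwrite-$cd$ mutant. Its good branch queries $G_K(\langle\mathsf{good},B,w\rangle)$ with $w$ a constant or absent, so its localized list $Q_G(B)$ has size at most $1$ and is independent of the hidden witness.
\begin{itemize}
\item For the drop variant, the queried point is syntactically distinct from $x^\star$, so $\Pr[\Hit]=0$. \Cref{thm:code-blind}, with $\epsilon_{\mathsf{alias}}=\epsilon_{\mathsf{fresh}}=0$ under the ROM-interface certificate of \Cref{prop:artifact-rom-correspondence}, then gives $\Pass^{\mathsf{href}}\le\delta+2^{-\kappa}$.
\item For the overwrite variant, the bound is \Cref{thm:hashed-diagnostic} on the \mlkem{}-\cdone{} line (via \Cref{thm:mlkem-cd1-diagnostic} under RawEnt), or \Cref{cor:hqc-law} on the \hqc{} line under TailEnt.
\end{itemize}
In both cases the pass probability is small, which matches detection on every tested row.

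For the decision-only rows, I would invoke \Cref{cor:decision-only}. Always selecting $K_+$ agrees with correct decapsulation on honestly generated ciphertexts, so its $\Dhonref$ pass probability equals that of the correct line, and nondetection is expected. The mismatch appears only on malformed support, where $K_-$ should have been returned. This explains the detection under $\Dctmal$.

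The main obstacle is the plain reencryption-omission rows. Their detection under $\Dhonref$ does not come from the confirmation-code cone. As \Cref{tab:artifact-cone-claim-levels} states, it runs through the ordinary key path: for example, a skipped $\theta$ or ciphertext recomputation changes $K_+'$ or the branch selector on honest inputs. Without an ordinary-path cone certificate, I would only claim these rows at level (A1), and I would state explicitly that the observation makes no class claim about unlisted omissions. The phrase ``isolate the additional signal'' would be justified by the contrast between rows on the same support: plain versus transformed, and $\Dhonref$ versus $\Dctmal$.
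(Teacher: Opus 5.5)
Your proposal is correct and matches how the paper supports this observation: a finite-catalog (A1) reading of \Cref{tab:binding-results} via \Cref{prop:finite-catalog-soundness}, with the binding, always-select-$K_+$, and plain-omission rows interpreted as in \Cref{tab:artifact-cone-claim-levels} (list-hit/hashed-witness/TailEnt routes, \Cref{cor:decision-only}, and ordinary-path finite-catalog status, respectively). Your direct treatment of the drop mutant through \Cref{thm:code-blind} with $\Pr[\Hit]=0$ slightly sharpens the paper's grouped drop/overwrite reading, and the same move works for the overwrite mutant: a constant list $\{w_0\}$ gives $\Pr[\Hit]=\Pr[W=w_0]$, which needs only the marginal min-entropy of $W$ rather than RawEnt or TailEnt, and those view-conditioned hypotheses can fail for a mutant that recomputes the code before discarding it, because its view then determines the raw witness.
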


\subsection{\cdone{} versus \cdstar{}}

\begin{table}[H]
	\centering
	\footnotesize
	\begin{tabularx}{\textwidth}{@{}>{\raggedright\arraybackslash}p{3.2cm}
			>{\raggedright\arraybackslash}p{2.9cm}
			>{\raggedright\arraybackslash}p{2.9cm}
			>{\raggedright\arraybackslash}X@{}}
		\toprule
		Fault family on August 2025 \hqc{} & \cdone{} & \cdstar{} & Reading \\
		\midrule
		overwrite \(cd\) & detected (3/3) under \(\Dhonref\) & detected (3/3) under \(\Dhonref\) & both variants bind the code into the final good key \\
		drop \(cd\) from \(K_{+}\) & detected (3/3) under \(\Dhonref\) & detected (3/3) under \(\Dhonref\) & deterministic behavior is unchanged \\
		always select \(K_{+}\) & undetected (0/3) under \(\Dhonref\) & undetected (0/3) under \(\Dhonref\) & the support limitation is unchanged \\
		compare only \(u\) & mode-selective (0/3 on honest support) & mode-selective (0/3 on honest support) & both variants keep the same comparison surface \\
		\bottomrule
	\end{tabularx}
	\caption{August 2025 \hqc{} \cdone{} and \cdstar{} on the deterministic mutant
		tranche.  The variants differ quantitatively; the deterministic fault classes
		coincide.}
	\label{tab:cdstar-det}
\end{table}

\subsection{Self-vs-reference tests}

\begin{table}[H]
	\centering
	\small
	\begin{tabularx}{\textwidth}{@{}>{\raggedright\arraybackslash}p{3.0cm}
			>{\raggedright\arraybackslash}p{2.3cm}
			cc
			>{\raggedright\arraybackslash}X@{}}
		\toprule
		Line & Symmetric mutant & \(\Dhonref\) & \(\Dhonself\) & Reading \\
		\midrule
		\mlkem{} VSFO/\cdone{} & drop \(cd\) at both endpoints & detected (3/3) & undetected (0/3) & self-tests pass because both endpoints agree on the same faulty key input \\
		August 2025 \hqc{} VSFO/\cdone{} & drop \(cd\) at both endpoints & detected (3/3) & undetected (0/3) & the salted line has the same peer-model separation \\
		August 2025 \hqc{} VSFO/\cdstar{} & drop \(cd_\star\) at both endpoints & detected (3/3) & undetected (0/3) & the separation is independent of the retained code budget \\
		\bottomrule
	\end{tabularx}
	\caption{Self-consistent endpoint faults.  These rows instantiate
		\Cref{thm:self-vs-reference}.}
	\label{tab:self-reference}
\end{table}

\subsection{Probabilistic code guessing}

\begin{table}[H]
	\centering
	\scriptsize
	\begin{tabular}{l S[table-format=2.0] S[table-format=4.0]
			S[table-format=3.0] S[table-format=1.4] l l}
		\toprule
		Line & {Bits} & {Trials} & {Passes} & {Est.\ pass prob.} & {95\% CI / ub} & {Ref.\ scale} \\
		\midrule
		ML-KEM-512-\cdone{} & 8 & 4096 & 14 & 0.0034 & {[0.0020, 0.0057]} & \(2^{-8}\) \\
		ML-KEM-768-\cdone{} & 8 & 4096 & 10 & 0.0024 & {[0.0013, 0.0045]} & \(2^{-8}\) \\
		ML-KEM-1024-\cdone{} & 8 & 4096 & 24 & 0.0059 & {[0.0039, 0.0087]} & \(2^{-8}\) \\
		HQC-1-\cdone{} & 5 & 4096 & 151 & 0.0369 & {[0.0315, 0.0431]} & \(2^{-5}\) \\
		HQC-3-\cdone{} & 8 & 4096 & 21 & 0.0051 & {[0.0034, 0.0078]} & \(2^{-8}\) \\
		HQC-5-\cdone{} & 8 & 4096 & 13 & 0.0032 & {[0.0019, 0.0054]} & \(2^{-8}\) \\
		HQC-1-\cdstar{} & 5 & 4096 & 151 & 0.0369 & {[0.0315, 0.0431]} & \(2^{-5}\) \\
		HQC-3-\cdstar{} & 11 & 8192 & 5 & 0.0006 & {[0.0003, 0.0014]} & \(2^{-11}\) \\
		HQC-5-\cdstar{} & 37 & 4096 & 0 & 0 & {[0, 0.0009]} & \(2^{-37}\) \\
		\bottomrule
	\end{tabular}
	\caption{Probabilistic code-guessing mutants under \(\Dhonref\).  The
		reference scale is \(2^{-L}\) for the tested witness-guess mutant with an
		\(L\)-bit retained code.  The interval column uses Wilson score intervals for
		nonzero counts and the exact Clopper--Pearson upper endpoint for zero counts.}
	\label{tab:probabilistic}
\end{table}

\begin{table}[H]
	\centering
	\small
	\begin{tabular}{S[table-format=2.0] S[table-format=1.0]
			S[table-format=4.0] S[table-format=4.0]
			S[table-format=1.4] l S[table-format=1.4]}
		\toprule
		{Code bits} & {Bytes} & {Trials} & {Passes} & {Est.\ pass prob.} & {95\% interval} & {Ref.\ scale} \\
		\midrule
		1  & 1 & 4096 & 2046 & 0.4995 & {[0.4842, 0.5148]} & 0.5000 \\
		2  & 1 & 4096 & 1045 & 0.2551 & {[0.2420, 0.2687]} & 0.2500 \\
		4  & 1 & 4096 & 273  & 0.0667 & {[0.0594, 0.0747]} & 0.0625 \\
		8  & 1 & 4096 & 21   & 0.0051 & {[0.0034, 0.0078]} & 0.0039 \\
		11 & 2 & 4096 & 3    & 0.0007 & {[0.0002, 0.0022]} & 0.0005 \\
		\bottomrule
	\end{tabular}
	\caption{Selected rows from the August 2025 \hqc{} range-budget sweep.  The full
		artifact sweep covers \(L=1,\ldots,11\) and is plotted in
		\Cref{fig:hqc-sweep}; intervals use the same convention as
		\Cref{tab:probabilistic}.}
	\label{tab:hqc-sweep}
\end{table}

\begin{figure}[H]
	\centering
	\begin{tikzpicture}
		\begin{semilogyaxis}[
			width=0.76\linewidth,
			height=5.2cm,
			xlabel={Retained code bits \(L\)},
			ylabel={Pass probability},
			xmin=1,
			xmax=11,
			ymin=1e-4,
			ymax=1,
			xtick={1,2,3,4,5,6,7,8,9,10,11},
			legend style={font=\scriptsize, at={(0.03,0.04)}, anchor=south west},
			grid=both,
			minor grid style={gray!12},
			major grid style={gray!25},
			]
			\addplot[mark=*, thick, black] coordinates {
				(1,0.499512) (2,0.255127) (3,0.132324) (4,0.066650)
				(5,0.032471) (6,0.015869) (7,0.007568) (8,0.005127)
				(9,0.002930) (10,0.001221) (11,0.0007324)
			};
			\addlegendentry{Observed}
			\addplot[dashed, thick, gray!65!black] coordinates {
				(1,0.5) (2,0.25) (3,0.125) (4,0.0625)
				(5,0.03125) (6,0.015625) (7,0.0078125) (8,0.00390625)
				(9,0.001953125) (10,0.0009765625) (11,0.00048828125)
			};
			\addlegendentry{\(2^{-L}\)}
		\end{semilogyaxis}
	\end{tikzpicture}
	\caption{August 2025 \hqc{} range-budget sweep under \(\Dhonref\).  The observed
		points use the full artifact sweep; selected rows are listed in
		\Cref{tab:hqc-sweep}.  For these witness-guess mutants, retaining more hidden
		code bits reduces the observed pass probability on the order of the \(2^{-L}\)
		reference scale.}
	\label{fig:hqc-sweep}
\end{figure}

\begin{observation}
	The probabilistic rows are reported against the \(2^{-L}\) reference scale for
	the witness-guess mutants tested here.  The \mlkem{} and \hqc{} one-byte
	variants fall on the corresponding \(2^{-8}\) or \(2^{-5}\) scale, and the
	August 2025 \hqc{} sweep tracks the same order of magnitude through \(L=11\).  The
	HQC-5-\cdstar{} row has zero passes in 4096 trials; the table records the
	finite-run exact upper endpoint, and the \(2^{-37}\) entry records the
	retained range budget for that row.
\end{observation}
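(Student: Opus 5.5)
The plan is to read the observation as a finite-catalog claim in the sense of \Cref{def:artifact-claim-levels}, (A1), resting on one exact theoretical computation. It then needs two checks: a comparison of the stored estimates with that computation, and a computation of the zero-pass endpoint.

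\textbf{Step 1: the exact pass probability of a witness-guess mutant.} I would model each witness-guess mutant as the $1$-localized SUT $\mathcal S_A$ of \Cref{thm:hidden-point-optimality}. The predictor $A$ outputs a uniformly random element of the retained range $\bits^L$, using fresh coins independent of the view $V$ and of $W$. For any distribution of $W$ on $\bits^L$,
\[
\Pr[A(V)=W]=\sum_w \Pr[W=w]\,2^{-L}=2^{-L}.
\]
So no TailEnt or RawEnt hypothesis is needed for this particular mutant. The artifact rows carry the ROM-interface certificate, so \Cref{prop:artifact-rom-correspondence} gives secret-separation and no aliasing, and the correctness error is treated as negligible. \Cref{thm:hidden-point-optimality} then gives the true pass probability
\[
2^{-L}+(1-2^{-L})2^{-\kappa}.
\]
By \Cref{tab:kappa}, $\kappa=256$, so this equals $2^{-L}$ up to an additive term below $2^{-256}$. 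This justifies calling $2^{-L}$ the reference scale, with $L=8$ for the one-byte \mlkem{} lines and $L=\min(8,\ell_{\mathsf{tail}})$ or $\ell_{\mathsf{tail}}$ for the \hqc{} lines, as in \Cref{prop:cdstar}.

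\textbf{Step 2: compare each stored row with $2^{-L}$.} Using \Cref{prop:finite-catalog-soundness}, each row's content is the estimate $x/n$ together with its Wilson interval. I would check row by row that $2^{-L}$ lies in, or at the edge of, the reported interval. A sample of the checks:
\begin{itemize}
\item $2^{-8}\approx0.00391$ against the three \mlkem{}-\cdone{} rows and the HQC-3/5-\cdone{} rows;
\item $2^{-5}=0.03125$ against $[0.0315,0.0431]$;
\item $2^{-11}\approx0.00049$ against $[0.0003,0.0014]$.
\end{itemize}
For the sweep in \Cref{fig:hqc-sweep}, I would check that every observed ratio $\hat p/2^{-L}$ for $L=1,\dots,11$ lies within a small constant factor (about $[0.98,1.5]$). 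That is the claimed ``same order of magnitude'' statement.

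\textbf{Step 3: the HQC-5-\cdstar{} row.} Here $x=0$ and $n=4096$. The stated endpoint is
\[
1-(0.025)^{1/4096}\approx 3.69/4096\approx 0.0009,
\]
which matches the table entry. This row is consistent with $2^{-37}$, and the entry only records the range budget. No empirical claim of $2^{-37}$ is made, which is consistent with \Cref{lem:range-no-entropy}.

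\textbf{Main obstacle.} I expect the main difficulty to be that this is not a statistical hypothesis test. Some reference values sit at or just outside an interval edge: the ML-KEM-1024 interval starts at $0.0039$, and HQC-1's lower endpoint $0.0315$ sits just above $2^{-5}=0.03125$. The right conclusion is therefore the order-of-magnitude reading stated in the observation, not interval containment. I would make this explicit rather than claim coverage of every row.
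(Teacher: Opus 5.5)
Your proposal is correct. Steps 2 and 3 are exactly what the paper offers: the observation has no separate proof and is a direct reading of \Cref{tab:probabilistic}, \Cref{tab:hqc-sweep-full} and \Cref{fig:hqc-sweep} against \(2^{-L}\). Your arithmetic checks out, including a point the paper leaves implicit. The Wilson lower endpoints for ML-KEM-1024 (\(\approx 0.00394\)) and for HQC-1 (\(\approx 0.03151\), in both the \cdone{} and \cdstar{} rows) lie just above \(2^{-8}\) and \(2^{-5}\). So ``fall on the scale'' can only be an order-of-magnitude statement, as you conclude. One small slip: the sweep ratio at \(L=7\) is \(31/32\approx 0.969\), so the ratios span roughly \([0.97,1.5]\) rather than \([0.98,1.5]\).

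Where you go beyond the paper is Step 1. The paper never derives the reference value. It only asserts, in the caption of \Cref{tab:probabilistic} and in \Cref{tab:artifact-cone-claim-levels}, that witness-guess rows follow the retained-code list-hit scale for the named guessing interface. Your derivation through \Cref{thm:hidden-point-optimality} explains why \(2^{-L}\) is the right comparator without any entropy hypothesis. It does, however, rest on premises the paper does not state:
\begin{itemize}
\item the mutant guesses uniformly on \(\bits^L\) with coins independent of \((V,W)\);
\item these rows carry the certificate of \Cref{def:artifact-rom-certificate};
\item since that theorem assumes \(\delta_{\Pi^C}=0\), the formula holds only up to an additive correctness error.
\end{itemize}
If the mutant instead submitted a fixed guess \(w_0\), its pass probability would be \(\Pr[W=w_0]\) up to the \(2^{-\kappa}\) coincidence term. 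For the \mlkem{}-\cdone{} hashed code that is still \(2^{-8}\) in the ROM. For the \hqc{} tail codes, however, it equals \(2^{-L}\) only under a \TailEnt{}-type uniformity hypothesis. This is why the paper speaks of a reference scale rather than the true pass probability. The observation only claims what the rows are reported against, so Step 1 is supplementary, and you should label its premises as modelling assumptions rather than consequences of the paper.
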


\subsection{Mode-selective comparison faults}

\begin{table}[H]
	\centering
	\footnotesize
	\begin{tabularx}{\textwidth}{@{}>{\raggedright\arraybackslash}p{2.8cm}
			>{\raggedright\arraybackslash}p{2.2cm}
			ccc
			>{\raggedright\arraybackslash}X@{}}
		\toprule
		Line & Fault & honest-ref & \(\mathsf{flip}_u\) & \(\mathsf{flip}_v\) & Reading \\
		\midrule
		\mlkem{} plain & compare only \(u\) & undetected (0/3) & undetected (0/3) & detected (3/3) & comparison coverage depends on the corrupted component \\
		\mlkem{} VSFO/\cdone{} & compare only \(u\) & undetected (0/3) & undetected (0/3) & detected (3/3) & the comparison-mode boundary is unchanged \\
		August 2025 \hqc{} plain & compare only \(u\) & undetected (0/3) & undetected (0/3) & detected (3/3) & the August 2025 \hqc{} baseline has the same signature \\
		August 2025 \hqc{} VSFO/\cdone{} & compare only \(u\) & undetected (0/3) & undetected (0/3) & detected (3/3) & code binding leaves this comparison boundary unchanged \\
		\bottomrule
	\end{tabularx}
	\caption{Mode-selective differential signatures for the component-selective
		comparison fault.}
	\label{tab:mode-selective}
\end{table}

\subsection{Summary across campaigns}

\begin{table}[H]
	\centering
	\footnotesize
	\begin{tabularx}{\textwidth}{@{}>{\raggedright\arraybackslash}p{3.0cm}
			>{\raggedright\arraybackslash}p{2.2cm}
			>{\raggedright\arraybackslash}p{2.3cm}
			>{\raggedright\arraybackslash}X@{}}
		\toprule
		Fault family & Honest-reference behavior & Other harness behavior & Interpretation \\
		\midrule
		plain reencryption omission & detected in representative rows & secondary malformed signal & some recomputation faults are already visible before adding a code \\
		transformed binding faults & detected & secondary & the added code creates the honest-reference signal \\
		branch-only decision faults & undetected & detected on malformed or differential support & support-coincidence limitation \\
		component-selective comparison & undetected & detected only on matching modes & comparison coverage is mode-indexed \\
		symmetric endpoint faults & detected against reference & undetected under self-tests & peer-model separation \\
		probabilistic code guesses & quantitative & optional & survival scale follows the retained-code reference scale for the tested mutants \\
		\bottomrule
	\end{tabularx}
	\caption{Family-level summary of the frozen campaigns.}
	\label{tab:family-summary}
\end{table}

At campaign level, the transformed variants add evidence for the code-bearing
good-key path under a named honest-correctness harness and a named support,
with negative rows marking the expected boundaries of that harness.

\section{Scope and Interpretation}
\label{sec:scope}

The positive theorem covers honest-reference tests through localized final-key
list hits, while branch-selection faults that agree with correct decapsulation
on honest ciphertexts are handled by malformed and differential harnesses
within their own supports.

\paragraph{Relative cone closure of the testing layer.}
Within the classical ROM model, honest ciphertext support, black-box
observation boundary, and erasure-containing implementation class specified in
\Cref{sec:dependency-cone-lower-bound}, the honest-reference claim grammar is
closed in the following sense.  Inside the support-active confirmation cone,
positive claims require a bound on the list-hit term, supplied here by cUP,
conditional entropy, hashed-witness entropy, or TailEnt.  Outside that cone,
\Cref{thm:dependency-cone-lower-bound} shows that no black-box honest-support
certifier can prove execution over an implementation class containing the
coupled erasure.  The resulting claim grammar matches the black-box
honest-reference interface under the stated model and erasure-class conditions.

The statements apply to confirmation-code-augmented variants, with standard
\mlkem{} and the August 2025 public \hqc{} line appearing as baselines and
algorithmic anchors.  The augmented variants keep the relevant ciphertext
formats in the artifact, but their good-branch shared secrets include
confirmation-code material.  A standard implementation obtains this testing
signal through an adopted augmented test mode, exported confirmation-code mode,
or corresponding modified line.

\paragraph{ROM versus concrete hash testing.}
The \(2^{-\kappa}\) term in \Cref{thm:code-blind} is a random-oracle-model
statement for a lazy-sampled \(G_K\).  The artifact uses concrete SHA3/SHAKE
implementations with domain separation.  Rows interpreted through the theorem
must satisfy the structural ROM-interface certificate of
\Cref{def:artifact-rom-certificate}, which records the namespace mapping,
serialization, query logging, and absence of hidden witness leakage.  The
experimental rows therefore report concrete pass/fail behavior and empirical
confidence intervals for the tested mutants; they are not empirical
measurements of the ROM fresh-key-coincidence term or proofs that the concrete
primitives are random oracles.

The reproducible mutant campaign runs over concrete \mlkem{} and August 2025
\hqc{} code lines, with timing measurements kept as artifact context, which use the harness outcomes and the probabilistic pass estimates.

Security of the augmented KEM line is cited from the existing
confirmation-code-augmented FO results, and the statements here concern
testing evidence under named harnesses.  For \hqc{}, the paper proves a
retained range budget and states the conditional TailEnt assumption needed to
turn that budget into the entropic theorem.  For \mlkem{}, the source-paper
multi-byte codes carry the cited cUP analysis through the list-cUP bridge; the
artifact's one-byte \cdone{} line is a hashed diagnostic specialization
governed by \Cref{thm:mlkem-cd1-diagnostic}.

The testing theorems are classical ROM statements about harnessed
implementations, with IND-CCA security for \mlkem{} and \hqc{} and QROM
guarantees for the underlying confirmation-code transform cited as
construction-level claims.  Short diagnostic codes are engineering instruments
whose security meaning is captured by the list-hit, min-entropy, and
hashed-witness bounds.  A retained range of \(L\) bits is only a capacity
bound; detectability requires conditional unpredictability of the retained
witness relative to the SUT view.

\section{Conclusion}

Confirmation-code-augmented decapsulation connects correctness testing to the
final key derivation by placing a hidden witness value in the good-branch oracle
input, so a faulty decapsulator that agrees with a sound peer on honest
ciphertexts must compute that witness, query it under the localized prefix, or
pass through the residual freshness term.  In the honest-reference harness,
once correctness errors, aliasing, and freshness defects are separated,
acceptance is bounded by a localized final-key list hit and a fresh-key
coincidence.  The theorem ties the peer model and the sampled support to the
claim, since honest-self tests can mask symmetric endpoint faults, malformed and
differential tests expose branch or component-specific behavior, and repeated
campaigns amplify only under reset, fresh instances, and isolated oracle
namespaces.

The list-hit term is controlled by source cUP, conditional min-entropy, or a
hashed diagnostic witness, according to how the confirmation code is obtained
and what the SUT view contains.  Source cUP applies through a cUP-faithful
interface with no extra witness path; conditional min-entropy yields the
\(q2^{-\lambda}\) bound; hashed diagnostic codes separate raw-witness discovery
from truncated-code guessing; and TailEnt supplies the additional hypothesis
needed to interpret an \hqc{} tail range as a detectability statement.  Under
these hypotheses, the source \mlkem{} confirmation-code line uses the cited cUP
analysis through the cUP-faithful route, the one-byte \mlkem{} artifact line is
a hashed diagnostic experiment, and the August 2025 \hqc{} lines differ by the
number of retained tail bits in \cdone{} and \cdstar{}.

The support-active dependency cone identifies which operations a black-box
honest-reference transcript can certify on a chosen support.  For a black-box
honest-reference harness on honest ciphertext support, operations outside the
confirmation-observable target admit a coupled erasure implementation with the
same transcript distribution over any implementation class containing that
erasure.  Over erasure-closed classes, no execution certifier for such
operations can have soundness and completeness errors summing to less than one.
A negative artifact row has a theorem-covered non-certification interpretation
only when its omitted operation set has a cone-inactivity certificate for the
stated target, support, and class.  Operations outside that certified cone
require a different target, such as a larger retained code, malformed or
differential support, or instrumentation that exposes the missing dependency.

FO-FaultBench records this claim grammar on pinned \mlkem{} and August 2025
\hqc{} code lines.  The stored campaigns show that the derived variants make
the listed missing-code binding faults visible under honest-reference testing,
that branch-only and component-selective faults fall on the supports predicted
by the cone analysis, that symmetric faults separate honest-self from
honest-reference testing, and that probabilistic witness-guess mutants follow
the retained-code scale on the tested distributions.  The resulting testing
theory for confirmation-code-augmented FO KEM variants, paired with the
reproducible implementation package, specifies what a harness proves, which
hypotheses bound the list-hit term, and which operations remain outside
black-box certification until the harness target or support changes.

\printbibliography

@inproceedings{HHK2017,
	author    = {Dennis Hofheinz and Kathrin H{"o}velmanns and Eike Kiltz},
	editor    = {Yael Kalai and Leonid Reyzin},
	title     = {A Modular Analysis of the Fujisaki-Okamoto Transformation},
	booktitle = {Theory of Cryptography -- 15th International Conference, {TCC} 2017, Baltimore, MD, USA, November 12--15, 2017, Proceedings, Part {I}},
	series    = {Lecture Notes in Computer Science},
	volume    = {10677},
	pages     = {341--371},
	publisher = {Springer},
	year      = {2017},
	doi       = {10.1007/978-3-319-70500-2_12},
	url       = {https://doi.org/10.1007/978-3-319-70500-2_12}
}

@inproceedings{BindelEtAl2019TighterQROM,
	author    = {Nina Bindel and Mike Hamburg and Kathrin H{"o}velmanns and Andreas H{"u}lsing and Edoardo Persichetti},
	editor    = {Dennis Hofheinz and Alon Rosen},
	title     = {Tighter Proofs of {CCA} Security in the Quantum Random Oracle Model},
	booktitle = {Theory of Cryptography -- 17th International Conference, {TCC} 2019, Nuremberg, Germany, December 1--5, 2019, Proceedings, Part {II}},
	series    = {Lecture Notes in Computer Science},
	volume    = {11892},
	pages     = {61--90},
	publisher = {Springer},
	year      = {2019},
	doi       = {10.1007/978-3-030-36033-7_3},
	url       = {https://doi.org/10.1007/978-3-030-36033-7_3}
}

@inproceedings{JiangZhangMa2019ExplicitRejection,
	author    = {Haodong Jiang and Zhenfeng Zhang and Zhi Ma},
	editor    = {Dongdai Lin and Kazue Sako},
	title     = {Key Encapsulation Mechanism with Explicit Rejection in the Quantum Random Oracle Model},
	booktitle = {Public-Key Cryptography -- {PKC} 2019 -- 22nd {IACR} International Conference on Practice and Theory of Public-Key Cryptography, Beijing, China, April 14--17, 2019, Proceedings, Part {II}},
	series    = {Lecture Notes in Computer Science},
	volume    = {11443},
	pages     = {618--645},
	publisher = {Springer},
	year      = {2019},
	doi       = {10.1007/978-3-030-17259-6_21},
	url       = {https://doi.org/10.1007/978-3-030-17259-6_21}
}

@inproceedings{DumanEtAl2021PrefixHashing,
	author    = {Julien Duman and Kathrin H{"o}velmanns and Eike Kiltz and Vadim Lyubashevsky and Gregor Seiler},
	title     = {Faster Lattice-Based {KEMs} via a Generic Fujisaki-Okamoto Transform Using Prefix Hashing},
	booktitle = {Proceedings of the 2021 {ACM} {SIGSAC} Conference on Computer and Communications Security},
	pages     = {2722--2737},
	publisher = {Association for Computing Machinery},
	year      = {2021},
	doi       = {10.1145/3460120.3484819},
	url       = {https://doi.org/10.1145/3460120.3484819}
}

@inproceedings{ShanGeXue2022QCCA,
	author    = {Tianshu Shan and Jiangxia Ge and Rui Xue},
	editor    = {Alexandra Boldyreva and Vladimir Kolesnikov},
	title     = {{QCCA}-Secure Generic Transformations in the Quantum Random Oracle Model},
	booktitle = {Public-Key Cryptography -- {PKC} 2023 -- 26th {IACR} International Conference on Practice and Theory of Public-Key Cryptography, Atlanta, GA, USA, May 7--10, 2023, Proceedings, Part {I}},
	series    = {Lecture Notes in Computer Science},
	volume    = {13940},
	pages     = {36--64},
	publisher = {Springer},
	year      = {2023},
	doi       = {10.1007/978-3-031-31368-4_2},
	url       = {https://doi.org/10.1007/978-3-031-31368-4_2}
}

@inproceedings{DingEtAl2022KyberInjectivity,
	author    = {Xiaohui Ding and Muhammed F. Esgin and Amin Sakzad and Ron Steinfeld},
	editor    = {Khoa Nguyen and Guomin Yang and Fuchun Guo and Willy Susilo},
	title     = {An Injectivity Analysis of {Crystals}-{Kyber} and Implications on Quantum Security},
	booktitle = {Information Security and Privacy -- 27th Australasian Conference, {ACISP} 2022, Wollongong, NSW, Australia, November 28--30, 2022, Proceedings},
	series    = {Lecture Notes in Computer Science},
	volume    = {13494},
	pages     = {332--351},
	publisher = {Springer},
	year      = {2022},
	doi       = {10.1007/978-3-031-22301-3_17},
	url       = {https://doi.org/10.1007/978-3-031-22301-3_17}
}

@misc{BarbosaHulsing2023KyberFO,
	author       = {Manuel Barbosa and Andreas H{"u}lsing},
	title        = {The security of Kyber's {FO}-transform},
	howpublished = {Cryptology {ePrint} Archive, Paper 2023/755},
	year         = {2023},
	url          = {https://eprint.iacr.org/2023/755}
}

@inproceedings{GeShanXue2023ExplicitRejection,
	author    = {Jiangxia Ge and Tianshu Shan and Rui Xue},
	editor    = {Helena Handschuh and Anna Lysyanskaya},
	title     = {Tighter {QCCA}-Secure Key Encapsulation Mechanism with Explicit Rejection in the Quantum Random Oracle Model},
	booktitle = {Advances in Cryptology -- {CRYPTO} 2023 -- 43rd Annual International Cryptology Conference, {CRYPTO} 2023, Santa Barbara, CA, USA, August 20--24, 2023, Proceedings, Part {V}},
	series    = {Lecture Notes in Computer Science},
	volume    = {14085},
	pages     = {292--324},
	publisher = {Springer},
	year      = {2023},
	doi       = {10.1007/978-3-031-38554-4_10},
	url       = {https://doi.org/10.1007/978-3-031-38554-4_10}
}

@inproceedings{GeLiaoXue2024MRE,
	author    = {Jiangxia Ge and Heming Liao and Rui Xue},
	editor    = {Kai-Min Chung and Yu Sasaki},
	title     = {Measure-Rewind-Extract: Tighter Proofs of One-Way to Hiding and {CCA} Security in the Quantum Random Oracle Model},
	booktitle = {Advances in Cryptology -- {ASIACRYPT} 2024 -- 30th International Conference on the Theory and Application of Cryptology and Information Security, Kolkata, India, December 9--13, 2024, Proceedings, Part {IV}},
	series    = {Lecture Notes in Computer Science},
	volume    = {15487},
	pages     = {3--34},
	publisher = {Springer},
	year      = {2025},
	doi       = {10.1007/978-981-96-0894-2_1},
	url       = {https://doi.org/10.1007/978-981-96-0894-2_1}
}

@inproceedings{FischlinGuenther2023VV,
	author    = {Marc Fischlin and Felix G{"u}nther},
	title     = {Verifiable Verification in Cryptographic Protocols},
	booktitle = {Proceedings of the 2023 {ACM} {SIGSAC} Conference on Computer and Communications Security},
	pages     = {3239--3253},
	publisher = {Association for Computing Machinery},
	year      = {2023},
	doi       = {10.1145/3576915.3623151},
	url       = {https://doi.org/10.1145/3576915.3623151}
}

@inproceedings{Glabush2025VerifiableDecapsulation,
	author    = {Lewis Glabush and Felix G{"u}nther and Kathrin H{"o}velmanns and Douglas Stebila},
	editor    = {Yael Tauman Kalai and Seny F. Kamara},
	title     = {Verifiable Decapsulation: Recognizing Faulty Implementations of Post-quantum {KEMs}},
	booktitle = {Advances in Cryptology -- {CRYPTO} 2025 -- 45th Annual International Cryptology Conference, Santa Barbara, CA, USA, August 17--21, 2025, Proceedings, Part {III}},
	series    = {Lecture Notes in Computer Science},
	volume    = {16002},
	pages     = {543--574},
	publisher = {Springer},
	year      = {2025},
	doi       = {10.1007/978-3-032-01881-6_17},
	url       = {https://doi.org/10.1007/978-3-032-01881-6_17}
}

@misc{GlabushHoevelmannsStebila2025MultiTarget,
	author       = {Lewis Glabush and Kathrin H{"o}velmanns and Douglas Stebila},
	title        = {Tight Multi-challenge Security Reductions for Key Encapsulation Mechanisms},
	howpublished = {Cryptology {ePrint} Archive, Paper 2025/343},
	year         = {2025},
	url          = {https://eprint.iacr.org/2025/343}
}

@inproceedings{HovelmannsHulsingMajenz2022FailingGracefully,
	author    = {Kathrin H{"o}velmanns and Andreas H{"u}lsing and Christian Majenz},
	editor    = {Shweta Agrawal and Dongdai Lin},
	title     = {Failing Gracefully: Decryption Failures and the Fujisaki-Okamoto Transform},
	booktitle = {Advances in Cryptology -- {ASIACRYPT} 2022 -- 28th International Conference on the Theory and Application of Cryptology and Information Security, Taipei, Taiwan, December 5--9, 2022, Proceedings, Part {IV}},
	series    = {Lecture Notes in Computer Science},
	volume    = {13794},
	pages     = {414--443},
	publisher = {Springer},
	year      = {2022},
	doi       = {10.1007/978-3-031-22972-5_15},
	url       = {https://doi.org/10.1007/978-3-031-22972-5_15}
}

@inproceedings{HovelmannsMajenz2023FailingGracefullyNote,
	author    = {Kathrin H{"o}velmanns and Christian Majenz},
	editor    = {Markku-Juhani Saarinen and Daniel Smith-Tone},
	title     = {A Note on Failing Gracefully: Completing the Picture for Explicitly Rejecting Fujisaki-Okamoto Transforms Using Worst-Case Correctness},
	booktitle = {Post-Quantum Cryptography -- 15th International Conference, {PQCrypto} 2024, Oxford, UK, June 12--14, 2024, Proceedings, Part {II}},
	series    = {Lecture Notes in Computer Science},
	volume    = {14772},
	pages     = {245--265},
	publisher = {Springer},
	year      = {2024},
	doi       = {10.1007/978-3-031-62746-0_11},
	url       = {https://doi.org/10.1007/978-3-031-62746-0_11}
}

@inproceedings{HovelmannsKudinov2025ExplicitImplicit,
	author    = {Kathrin H{"o}velmanns and Mikhail A. Kudinov},
	editor    = {Ruben Niederhagen and Markku-Juhani O. Saarinen},
	title     = {Treating Dishonest Ciphertexts in Post-quantum {KEMs} -- Explicit vs. Implicit Rejection in the {FO} Transform},
	booktitle = {Post-Quantum Cryptography -- 16th International Workshop, {PQCrypto} 2025, Taipei, Taiwan, April 8--10, 2025, Proceedings, Part {II}},
	series    = {Lecture Notes in Computer Science},
	volume    = {15578},
	pages     = {325--350},
	publisher = {Springer},
	year      = {2025},
	doi       = {10.1007/978-3-031-86602-9_12},
	url       = {https://doi.org/10.1007/978-3-031-86602-9_12}
}

@inproceedings{DonEtAl2021OnlineExtractability,
	author    = {Jelle Don and Serge Fehr and Christian Majenz and Christian Schaffner},
	editor    = {Orr Dunkelman and Stefan Dziembowski},
	title     = {Online-Extractability in the Quantum Random-Oracle Model},
	booktitle = {Advances in Cryptology -- {EUROCRYPT} 2022 -- 41st Annual International Conference on the Theory and Applications of Cryptographic Techniques, Trondheim, Norway, May 30--June 3, 2022, Proceedings, Part {III}},
	series    = {Lecture Notes in Computer Science},
	volume    = {13277},
	pages     = {677--706},
	publisher = {Springer},
	year      = {2022},
	doi       = {10.1007/978-3-031-07082-2_24},
	url       = {https://doi.org/10.1007/978-3-031-07082-2_24}
}

@misc{GeShanXue2023FOQCCALift,
	author       = {Jiangxia Ge and Tianshu Shan and Rui Xue},
	title        = {On the Fujisaki-Okamoto transform: from Classical {CCA} Security to Quantum {CCA} Security},
	howpublished = {Cryptology {ePrint} Archive, Paper 2023/792},
	year         = {2023},
	url          = {https://eprint.iacr.org/2023/792}
}

@inproceedings{MajenzSisinni2024FFPNG,
	author    = {Christian Majenz and Fabrizio Sisinni},
	editor    = {Leonid Reyzin and Douglas Stebila},
	title     = {Provable Security Against Decryption Failure Attacks from {LWE}},
	booktitle = {Advances in Cryptology -- {CRYPTO} 2024 -- 44th Annual International Cryptology Conference, Santa Barbara, CA, USA, August 18--22, 2024, Proceedings, Part {II}},
	series    = {Lecture Notes in Computer Science},
	volume    = {14921},
	pages     = {456--485},
	publisher = {Springer},
	year      = {2024},
	doi       = {10.1007/978-3-031-68379-4_14},
	url       = {https://doi.org/10.1007/978-3-031-68379-4_14}
}

@inproceedings{HovelmannsHulsingMajenzSisinni2025ReEncryptionAlternatives,
	author    = {Kathrin H{"o}velmanns and Andreas H{"u}lsing and Christian Majenz and Fabrizio Sisinni},
	editor    = {Serge Fehr and Pierre-Alain Fouque},
	title     = {{(Un)breakable Curses} -- Re-encryption in the Fujisaki-Okamoto Transform},
	booktitle = {Advances in Cryptology -- {EUROCRYPT} 2025 -- 44th Annual International Conference on the Theory and Applications of Cryptographic Techniques, Madrid, Spain, May 4--8, 2025, Proceedings, Part {II}},
	series    = {Lecture Notes in Computer Science},
	volume    = {15602},
	pages     = {245--274},
	publisher = {Springer},
	year      = {2025},
	doi       = {10.1007/978-3-031-91124-8_9},
	url       = {https://doi.org/10.1007/978-3-031-91124-8_9}
}

@techreport{NISTSP800227,
	author      = {Gorjan Alagic and Elaine Barker and Lily Chen and Dustin Moody and Angela Robinson and Hamilton Silberg and Noah Waller},
	title       = {Recommendations for Key-Encapsulation Mechanisms},
	institution = {National Institute of Standards and Technology},
	type        = {{NIST} Special Publication},
	number      = {{NIST SP} 800-227},
	year        = {2025},
	month       = sep,
	day         = {18},
	doi         = {10.6028/NIST.SP.800-227},
	url         = {https://csrc.nist.gov/pubs/sp/800/227/final}
}

@techreport{NISTFIPS203,
	author      = {{National Institute of Standards and Technology}},
	title       = {Module-Lattice-Based Key-Encapsulation Mechanism Standard},
	institution = {National Institute of Standards and Technology},
	type        = {Federal Information Processing Standards Publication},
	number      = {{NIST FIPS} 203},
	year        = {2024},
	month       = aug,
	day         = {13},
	doi         = {10.6028/NIST.FIPS.203},
	url         = {https://csrc.nist.gov/pubs/fips/203/final}
}

@misc{HQC2025Spec,
	author       = {Philippe Gaborit and Carlos Aguilar-Melchor and Nicolas Aragon and Slim Bettaieb and Lo{"i}c Bidoux and Olivier Blazy and Jean-Christophe Deneuville and Edoardo Persichetti and Gilles Z{'e}mor and Jurjen Bos and Arnaud Dion and J{'e}r{^o}me Lacan and Jean-Marc Robert and Pascal V{'e}ron and Paulo L. Barreto and Santosh Ghosh and Shay Gueron and Tim G{"u}neysu and Rafael Misoczki and Jan Richter-Brokmann and Nicolas Sendrier and Jean-Pierre Tillich and Valentin Vasseur},
	title        = {Hamming Quasi-Cyclic ({HQC})},
	howpublished = {{HQC} specifications},
	year         = {2025},
	month        = aug,
	day          = {22},
	url          = {https://pqc-hqc.org/doc/hqc_specifications_2025_08_22.pdf}
}

\appendix

\section{Additional HQC Sweep}
\label{app:hqc-sweep}

\begin{table}[htbp]
	\centering
	\scriptsize
	\begin{adjustbox}{max width=\linewidth}
		\begin{tabular}{S[table-format=2.0] S[table-format=1.0]
				S[table-format=4.0] S[table-format=4.0]
				S[table-format=1.4] l S[table-format=1.4]}
			\toprule
			{Bits} & {Bytes} & {Trials} & {Passes} & {Est.\ prob.} & {95\% CI / ub} & {Ref.\ scale} \\
			\midrule
			1 & 1 & 4096 & 2046 & 0.4995 & {[0.4842, 0.5148]} & 0.5000 \\
			2 & 1 & 4096 & 1045 & 0.2551 & {[0.2420, 0.2687]} & 0.2500 \\
			3 & 1 & 4096 & 542 & 0.1323 & {[0.1223, 0.1430]} & 0.1250 \\
			4 & 1 & 4096 & 273 & 0.0667 & {[0.0594, 0.0747]} & 0.0625 \\
			5 & 1 & 4096 & 133 & 0.0325 & {[0.0275, 0.0384]} & 0.0313 \\
			6 & 1 & 4096 & 65 & 0.0159 & {[0.0125, 0.0202]} & 0.0156 \\
			7 & 1 & 4096 & 31 & 0.0076 & {[0.0053, 0.0107]} & 0.0078 \\
			8 & 1 & 4096 & 21 & 0.0051 & {[0.0034, 0.0078]} & 0.0039 \\
			9 & 2 & 4096 & 12 & 0.0029 & {[0.0017, 0.0051]} & 0.0020 \\
			10 & 2 & 4096 & 5 & 0.0012 & {[0.0005, 0.0029]} & 0.0010 \\
			11 & 2 & 4096 & 3 & 0.0007 & {[0.0002, 0.0022]} & 0.0005 \\
			\bottomrule
		\end{tabular}
	\end{adjustbox}
	\caption{Full August 2025 \hqc{} range-budget sweep for \(L=1,\ldots,11\), with
		the same interval convention as the main probabilistic tables.}
	\label{tab:hqc-sweep-full}
\end{table}

\end{document}